\documentclass[journal]{IEEEtran}
\usepackage{cite}
\usepackage{amsmath,amssymb,amsfonts,amsthm}
\usepackage{mathtools}
\usepackage{algorithmic}
\usepackage{graphicx}
\usepackage{textcomp}
\usepackage{bm}
\usepackage{svg}
\usepackage{xcolor}
\usepackage[table]{xcolor}
\usepackage[dvipsnames]{xcolor}
\usepackage{algorithm}
\usepackage{array}
\usepackage[caption=false,font=footnotesize]{subfig}
\usepackage{acronym}
\usepackage{stfloats}
\usepackage{url}
\usepackage{verbatim}
\usepackage{booktabs,ragged2e}
\usepackage[flushleft]{threeparttable}
\usepackage{hyperref}
\usepackage{bookmark}
\usepackage{siunitx}

\newtheorem{theorem}{Theorem}%[section]
\newtheorem{corollary}{Corollary}%[theorem]
\newtheorem{remark}{Remark}
\newtheorem{proposition}{Proposition}
\newtheorem{example}{Example}

\usepackage[capitalise]{cleveref}
\usepackage[inline]{enumitem}
\usepackage{bbm}

\usepackage{multirow}
\usepackage{tikz}
\usetikzlibrary{tikzmark,calc,decorations.pathreplacing,patterns}
\usepackage{tikzsymbols}
\usepackage{pgfplots} 
\pgfplotsset{compat=1.16,every x tick label/.append style={font=\tiny},every y tick label/.append style={font=\tiny}}

\acrodef{ula}[ULA]{uniform linear array}
\acrodef{upa}[UPA]{uniform planar array}
\acrodef{crb}[CRB]{Cramér-Rao bound}
\acrodef{crbm}[CRBM]{\ac{crb} matrix}
\acrodef{fim}[FIM]{Fisher information matrix}
\acrodef{sos}[SoS]{sums of squares}
\acrodef{mle}[MLE]{maximum-likelihood estimator}
\acrodef{mra}[MRA]{minimum-redundancy array}
\acrodef{mimo}[MIMO]{multiple-input multiple-output}
\acrodef{snr}[SNR]{signal-to-noise ratio}
\acrodef{sinr}[SINR]{signal-to-interference-plus-noise ratio}
\acrodef{jsac}[ISAC]{integrated sensing and communication}
\acrodef{wwb}[WWB]{Weiss-Weinstein bound}
\acrodef{zzb}[ZZB]{Ziv–Zakai bound}
\acrodef{bzb}[BZB]{Bobrovski-Zakai bound}
\acrodef{kkt}[KKT]{Karush-Kuhn-Tucker}
\acrodef{ma}[MA]{movable antenna}
\acrodef{dof}[DoFs]{degrees-of-freedom}
\acrodef{doa}[DoA]{direction-of-arrival}
\acrodef{mse}[MSE]{mean squared error}
\acrodef{kpis}[KPIs]{key performance indicators}
\acrodef{tx}[Tx]{transmit}
\acrodef{rx}[Rx]{receive}
\acrodef{rf}[RF]{radio-frequency}
\acrodef{sdp}[SDP]{semidefinite program}

\newcommand{\txt}[1]{\text{\normalfont #1}}

\DeclareMathOperator{\tx}{t}
\DeclareMathOperator{\rx}{r}
\DeclareMathOperator{\T}{\top}
\DeclareMathOperator{\HT}{\mathrm{H}}
\DeclareMathOperator{\F}{\mathrm{F}}

\DeclareMathOperator*{\argmax}{arg\,max}

\DeclareMathOperator{\trace}{\txt{trace}}

\DeclareMathOperator{\Sorth}{\mathbf{S}_{\txt{orth}}}

\newcommand{\icassp}{black}
\newcommand{\edited}{black}

\begin{document}

\title{
%On the Optimal Design of Arrays and Waveforms for Active Sensing
%Rethinking Array-Waveform Design in Active Sensing: Spatial Covariance, Redundancy, and Sensor Allocation in CRB-Optimal Designs
% Rethinking CRB-Based Array Geometry and Waveform Design: Role of Spatial Covariance, Redundancy, and Optimal Sensor Allocation
CRB-Optimal Arrays
%, Sensor Allocations, 
and Waveforms in Active Sensing:
Role of Redundancy and Spatial Covariance of Array Geometry
}
% Alternative titles: 
% - Joint Array–Waveform Design in Active Sensing: \\ Fundamental Limits and Optimality Conditions

\author{Ids van der Werf,~\IEEEmembership{Graduate Student Member,~IEEE,} Robin Rajamäki, ~\IEEEmembership{Member,~IEEE,} and Geert Leus,~\IEEEmembership{Fellow,~IEEE} 
\vspace{-.5cm}
        % <-this % stops a space
%\thanks{Manuscript received 22 October 2025; revised 26 April 2026; accepted 28 May 2026.}
\thanks{\textcopyright\ 2026 IEEE. Personal use of this material is permitted. Permission from IEEE must be obtained for all other uses, in any current or future media, including reprinting/republishing this material for advertising or promotional purposes, creating new collective works, for resale or redistribution to servers or lists, or reuse of any copyrighted component of this work in other works.}
\thanks{This work was partly funded by the Netherlands Organisation for Applied Scientific Research (TNO) and the Netherlands Defence Academy (NLDA), reference no. TNO-10026587 and by Business Finland 6G-ISAC, Research Council of Finland FUN-ISAC 359094, and EU INSTINCT 101139161.}
\thanks{Ids van der Werf and Geert Leus are affiliated with the Microelectronics department, Delft University of Technology, Delft, The Netherlands (email: \{i.vanderwerf, g.j.t.leus\}@tudelft.nl). Robin Rajamäki is with the Signal Processing Research Centre, Tampere University, Finland (email: robin.rajamaki@tuni.fi). He was previously with the Department of Information and Communications Engineering, Aalto University, Finland.}}
% <-this % stops a space

% The paper headers
%\markboth{Journal of \LaTeX\ Class Files,~Vol.~X, No.~X, September~2025}%
%{Shell \MakeLowercase{\textit{et al.}}: A Sample Article Using IEEEtran.cls for IEEE Journals}

% \IEEEpubid{0000--0000/00\$00.00~\copyright~2021 IEEE}
% Remember, if you use this you must call \IEEEpubidadjcol in the second
% column for its text to clear the IEEEpubid mark.

\maketitle

\begin{abstract}
% The performance of active sensing systems depends critically on the joint design of array geometries and transmit waveforms, yet most prior work has optimized these elements in isolation. This paper establishes fundamental limits and optimality conditions for jointly designed array–waveform pairs in MIMO radar. We show that, with orthogonal waveforms, the single-target Cramér–Rao bound (CRB) depends on the spatial variance of both the transmit and receive array, and equivalently on the multiplicity weighted sum co-array variance, leading to redundant optimal designs. In contrast, for beamforming waveforms, the CRB was shown to depend only on the receive array, allowing nonredundant optima. We derive optimal sensor distributions under both regimes and reveal that symmetric allocations are not always optimal. Extending to planar arrays, we generalize the role of spatial variance to spatial covariance and establish a condition under which beamforming is optimal for a broad class of objective functions. Finally, we present a novel connection between sparse array design and Diophantine equations, showing that different geometries can have identical CRBs, but may exhibit distinct maximum-likelihood performance in the low-SNR regime.
This paper characterizes the performance limits of optimal array designs using orthogonal and coherent waveforms for both linear and planar arrays. 
For orthogonal waveforms, we show that the single-target Cramér–Rao Bound (CRB) depends on the sum of the so-called spatial variances of the transmit (Tx) and receive (Rx) arrays, or equivalently, the spatial variance of the sum co-array weighted by the multiplicities of the virtual sensors. This reveals that CRB-optimal geometries are inherently redundant, highlighting a fundamental trade-off between mean squared error (MSE) and identifiability in parameter estimation. Moreover, we derive optimal Tx-Rx sensor allocations given a total sensor budget and show that unequal allocation (favoring the Rx) is optimal even for nonredundant arrays, 
%challenging 
\textcolor{\edited}{questioning} 
conventional designs. We extend our results to planar arrays, providing a new general condition that the spatial covariances of the Tx and Rx arrays should satisfy for the optimal waveforms to direct %nonnegligible 
power in the target direction. % and thereby %be physically relevant. 
Additionally, we establish a connection between Diophantine equations and array geometries with equal CRB, along with a constructive method for designing such arrays. Our work provides new guidelines for and insights into optimal array and waveform design with relevance in emerging active sensing multiple-input multiple-output systems.
\end{abstract}

\begin{IEEEkeywords}
Active sensing, array geometry, Cramér-Rao bound, sparse arrays, planar arrays.
\end{IEEEkeywords}

% \textcolor{\icassp}{[Ids: Text in this color is text that \emph{literally} comes from the ICASSP paper.]}

\section{Introduction}

\IEEEPARstart{A}{ctive} sensing systems, including automotive radar~\cite{sun2020mimoradar}, unmanned aerial systems~\cite{hugler2018radar}, and sonar systems~\cite{sharaga2015optimal,liu2021highresolution}, 
%rely on the joint design of sensor arrays and transmit waveforms. Many of these applications 
make increasing use of \ac{mimo} 
arrays. 
%radar, 
Unlike phased arrays, these can transmit several independent waveforms 
to improve
%This design directly impacts 
angular resolution, parameter identifiability, and detection performance \cite{li2007mimoradar,li2007onparameter}. %while often being subject to space, hardware, or energy constraints. 
%Given these challenges, a fundamental question arises: 
Given the importance of the \emph{array geometry} and \emph{transmit waveforms} to the performance of \ac{mimo} systems,
the \emph{optimal} design of these key spatio-temporal resources becomes critical.
The \ac{crb} remains perhaps the most widely used tool for multisensor system performance analysis \cite{mirkin1991cramerraobounds,yau1992worst,gershman1997anote,nielsen1994azimuth,baysal2003onthegeometry,oktel2005abayesian,gazzah2006Cramerraobounds,he2010target,tohidi2019sparse,amin2024sparsearrays,werf2024transmit,shahsavari2025cramerrao}, due to being both estimator-independent and analytically tractable. 
Early works on \ac{crb}-optimal array design mainly considered \textit{passive} sensing \cite{mirkin1991cramerraobounds,yau1992worst,gershman1997anote,nielsen1994azimuth,baysal2003onthegeometry,oktel2005abayesian,gazzah2006Cramerraobounds}.
%In \ac{crb}-optimal array design, 
%prior work has mainly considered either (i) optimizing array geometries~\cite{he2010target,tohidi2019sparse,tabrikian2021cognitive,amin2024sparsearrays,werf2024transmit} or (ii) designing transmit waveforms~\cite{li2008range,vanderwerf2023transmit}, often under specific operating conditions such as clutter~\cite{stoica2012optimization} or \ac{jsac} requirements~\cite{bica2019radar,liu2020jointtransmit,liu2022cramerrao,li2023optimal}. 
%We now briefly review these two directions.
% array design passive sensing
%especially %, the impact of array geometry on 
%\textit{passive} sensing has been studied %extensively~\cite{mirkin1991cramerraobounds,gershman1997anote,nielsen1994azimuth,baysal2003onthegeometry}.
%The work~
In particular, \cite{gershman1997anote} demonstrated that the optimal receive array geometry tends to form 
\emph{clusters} of sensors\textcolor{\edited}{,} an insight later corroborated by \cite{baysal2003onthegeometry} which showed the dependence of the single-source \ac{crb} on the so-called \emph{moment of inertia} of the array geometry (sum of squared sensor positions). 
%$n\!+\!1$ clusters, where $n$ denotes the number of sources. This was corroborated mathematically in the single-target case by connecting the \ac{crb} to the so-called \emph{moment of inertia} of the array geometry (sum of squared sensor positions)~\cite{baysal2003onthegeometry}. 
%
More recent approaches relax %cast 
array design 
to 
%as 
a 
%(relaxed) 
convex sensor selection problem, promoting sparsity %while achieving 
with 
a desired
%guaranteeing estimation performance~\cite{chepuri2015sparsitypromoting}, %. Extensions include 
sidelobe suppression~\cite{roy2013sparsity}, %worst-case %two-target design~\cite{kokke2025arraydesign}, and applications to \ac{jsac}~\cite{werf2024receiver}. 
average / worst-case estimation error \cite{chepuri2015sparsitypromoting,kokke2025arraydesign}, or \ac{jsac} 
constraints
%performance 
\cite{werf2024receiver}. 
%Results for two-target scenarios corroborate earlier findings on clustered geometries.
% waveform design active sensing
In \textit{active} sensing, extensive work exists on both 
%not only the array geometry but also the transmit waveforms play a critical role in overall performance. 
array \cite{he2010target,tohidi2019sparse,amin2024sparsearrays,werf2024transmit} and waveform design using \ac{crb} \cite{li2008range,vanderwerf2023transmit}, often under additional constraints imposed by clutter~\cite{stoica2012optimization} or \ac{jsac} requirements~\cite{bica2019radar,liu2020jointtransmit,liu2022cramerrao,li2023optimal}. 
%We now briefly review these two directions.
% array design passive sensing
%Exploiting waveform diversity offers significant advantages~\cite{li2007mimoradar}, such as improved angular resolution, parameter identifiability~\cite{li2007onparameter}, and target detection. 
% While fully orthogonal waveforms are most common, partially coherent waveforms have also been investigated~\cite{stoica2007onprobing,fuhrmann2008transmitbeamforming}, and shown to provide performance gains in various tasks. 
Of particular relevance is \cite{li2008range}, which %investigated 
%characterized 
showed that 
\ac{crb}-optimal waveform matrices %matrix 
%\emph{given} an %linear array geometry. 
%It was shown that, when minimizing the single-target \ac{crb}, the 
%showing 
%them to be 
%that optimal waveforms %is 
are 
confined to a low-dimensional subspace determined by the unknown parameters of interest. %(specifically, the array manifold and its first derivatives). 
In the single-target case, these correspond 
% a 
% convex combination of 
% signal components in the range space of the array manifold and its derivative, 
%\textcolor{\edited}{the coherent component and its orthogonal (nullspace) counterpart (often referred to as the sum and difference beams)}, 
% \textcolor{red}{[undefined, non-standard terminology]} 
%with proportions determined by 
to beamforming in the target direction %in the single-target case 
under certain conditions on the moments of inertia of the \ac{tx} and \ac{rx} arrays \cite{forsythe2005waveform,li2008range,vanderWerf2025Jointly}. %Jointly optimal array–waveform pairs, however, were not explored in~\cite{forsythe2005waveform, li2008range}. 
% array design active sensing
% Additionally, transmit and receive array geometries are inherently coupled, making optimal design for active sensing considerably more intricate than in the passive case. For example, \cite{rajamaki2023importance} showed that two array geometries, despite having the same number of sensors and the same sum co-array, may yield different identifiability, illustrating the tight coupling between transmit and receive geometry. To simplify this coupling, most existing works on array design assume orthogonal waveforms and rely on (sub-optimal) iterative procedures~\cite{roberts2011sparse,werf2024transmit} or submodular optimization~\cite{tohidi2019sparse}, leaving the problem of jointly optimal transmit–receive designs open. 
%Additionally, transmit and receive array geometries are inherently coupled, making optimal design in active sensing considerably more challenging than in the passive case.
%Most existing works simplify this coupling by assuming orthogonal waveforms and employing iterative or submodular optimization methods~\cite{roberts2011sparse,werf2024transmit,tohidi2019sparse}, leaving the problem of jointly optimal transmit–receive array design open. 
%Recently, \ac{crb}-optimal array-waveform pairs were identified for linear arrays~\cite{vanderWerf2025Jointly}, %but 
%\textcolor{red}{\textbf{Some argument about multi-target including reference to \cite{werf2025on}?}}
Recently, \cite{vanderWerf2025Jointly} also %characterized optimal linear array geometries in this case, 
derived optimal linear array geometries that both minimize the single-target \ac{crb} when transmitting optimal (coherent) waveforms and guarantee \textit{identifiability} of the largest possible number of targets when transmitting \emph{orthogonal} waveforms.
% However, 
% open questions regarding 
% %the use of 
% \ac{crb}-optimal array geometries remain in case of \textcolor{\edited}{both coherent and} 
% %how the single-target \ac{crb} depends on the \ac{tx} and \ac{rx} array geometries when transmitting 
% orthogonal waveforms---the latter being crucial for initial target search. %as opposed to coherent (\ac{crb}-optimal) waveforms requiring 
% when prior knowledge of the %(initially unknown) 
% target directions are unavailable. 
Orthogonal waveforms 
are crucial for initial target search, and 
allow fully utilizing the spatial \ac{dof} provided by the so-called \emph{sum co-array} \cite{hoctor1990theunifying}\textcolor{\edited}{,} a virtual array model arising from the sums of \ac{tx}-\ac{rx} sensor position pairs.

While the role of the sum co-array geometry in parameter identifiability is well established \cite{roberts2011sparse,rajamaki2023importance}, its impact on \ac{mse} is less clear. 
This motivates the questions: \emph{How does the %physical and 
sum co-array geometry influence the \ac{crb} when deploying orthogonal waveforms? Specifically, which geometries minimize \ac{crb}}? 
% The de facto standard in active sensing employs orthogonal waveforms with a canonical \ac{mimo} array, where the \ac{tx} and \ac{rx} arrays form a \ac{ula} and dilated \ac{ula}, respectively. Typically, the same number of sensors is allocated to the \ac{tx} and \ac{rx} to maximize the $N_{\tx}N_{\rx}$ spatial \ac{dof}, which benefits target identifiability and spatial resolution~\textcolor{red}{[REF]}. However, the impact of this allocation on the \ac{crb} iremains unclear, leading to the question: \emph{How should a given number of sensors be optimally allocated between the \ac{tx} and \ac{rx} arrays for various array geometries and waveform choices?
%Moreover, most prior studies consider linear arrays. How \ac{crb}-optimal array-waveform pairs extend to \emph{planar} arrays, both for coherent and orthogonal waveforms, remains unknown. This raises the question: \emph{How do optimal waveform / array designs extend to planar arrays, and what new challenges arise?} 
%Answers in case of both linear and planar array geometries are currently unknown.
%Other pertinent open questions regarding both orthogonal and \ac{crb}-optimal (coherent) waveforms also remain. 
Additional open questions regarding 
%the use of 
\ac{crb}-optimal array geometries remain for both coherent and
%how the single-target \ac{crb} depends on the \ac{tx} and \ac{rx} array geometries when transmitting 
orthogonal waveforms. %---the latter being
Firstly, in emerging applications, only a limited number of \ac{rf} chains may be available to divide between the \ac{tx} and \ac{rx} arrays due to their high cost and power consumption. 
The question thus arises: \emph{``How should a given number of \ac{rf} chains or sensors be optimally allocated between the \ac{tx} and \ac{rx} arrays?% for various array geometries and waveform choices?
''} 
%only constrain the \emph{combined} number of \ac{tx} $+$ \ac{rx} sensors.
%In addition to the \ac{tx} and \ac{rx} array geometries, also the allocation of sensors between the two is a key factor impacting performance. This is , 
Recently, \cite{liu2025doaestimation} investigated this question in a monostatic \ac{jsac} setting, aiming to minimize \ac{doa} estimation error under a communications \ac{sinr} constraint. % by optimizing the \ac{tx}/\ac{rx} split given a \emph{total} number of antennas.
%, raising key questions about optimal sensor allocation and its dependence on array geometry. 
Despite extensive numerical results \cite{liu2025doaestimation}, theoretical understanding of optimal allocations for sensing, communications, and \ac{jsac} still remains limited. 
Secondly, in the case of planar arrays\textcolor{\edited}{, which are of great practical relevance due to their ability to discriminate azimuth and elevation,} the notion of \ac{crb}-optimality itself depends on the choice of (scalar) objective function. \textcolor{\edited}{Indeed,} %as 
the Fisher information is matrix-valued even in case of a single target (parametrized by two unknown angles). While \cite{forsythe2005waveform} tackled the case of matrix trace, rigorous optimality conditions for this and other common objectives, such as the log-determinant, are still lacking. 
%Furthermore, in the case of planar arrays---of great practical relevance due to their ability to discriminate both azimuth and elevation angles---how does the choice of scalar objective function 
Finally, \cite{vanderWerf2025Jointly} suggested that \emph{different} array geometries 
%corresponding to sequences with \emph{equal} \ac{sos} 
satisfying certain Diophantine equations 
can achieve the same CRB. %for coherent waveforms. 
However, methods for constructing such ``equi-\ac{crb}" arrays were not given. Developing such methods is of practical importance to array design, since geometries with identical \ac{crb}s may vastly differ in robustness to noise, interference, or mutual coupling.

\textcolor{\edited}{
%Given their increasing relevance in emerging applications where spatio-temporal resource-efficiency is key, 
The open challenges outlined above call for a critical re-examination of the performance limits of \ac{mimo} active sensing systems, including optimal array and waveform design.
% ,
% especially given their increasing relevance in emerging applications where spatio-temporal resource-efficiency is key.
}
%%%%%%%% Added 24.04.2026
\textcolor{\edited}{
This paper focuses on the single-target \ac{crb} to provide sharp answers to these questions in the high \ac{snr} (asymptotic) regime, where the \ac{crb} closely bounds \ac{mse}. We note that alternative bounds---such as the \ac{wwb}~\cite{sun2024optimal}, \ac{zzb}~\cite{khan2013explicit,zhang2023ZivZakaiBound,zhang2024zivzakai2D}, \ac{bzb}~\cite{tabrikian2016cognitive,tabrikian2021cognitive} or Barankin bound \cite{tabrikian2006barankin,wang2023barankin}---more accurately capture low-\ac{snr} (non-asymptotic) performance, but involve intricate expressions that typically hinder their closed-form analysis and require resorting to numerical evaluation instead. Analyzing the \ac{crb} in the multi-target case becomes highly non-trivial for similar reasons \cite{liu2024crb,werf2025on,liu2026joint}. 
Previous studies have nevertheless numerically investigated these bounds in various settings, such as performance analysis and array design in passive sensing via \ac{zzb} \cite{zhang2023ZivZakaiBound,zhang2024zivzakai2D} and Barankin bound \cite{wang2023barankin}, structured waveform optimization via \ac{wwb} \cite{sun2024optimal}, and adaptive receive-only array design via \ac{bzb} \cite{tabrikian2021cognitive}. 
In contrast, the single-target \ac{crb} considered herein provides a simple and tractable baseline guiding optimal array and waveform design, as well as a preliminary step toward a full analytical understanding of the multi-target and low-\ac{snr} regimes.
%In contrast, the single-target \ac{crb} considered herein provides a tractable baseline to guide optimal array and waveform design on the road toward a complete analytical understanding of the multi-target and low-\ac{snr} regimes.
%In contrast, the single-target \ac{crb} examined in this paper serves as a tractable baseline, guiding optimal array and waveform design, on the road towards an analytical understanding multi-target and low-\ac{snr} regimes.
%In summary, while CRB-based design does not fully capture finite-SNR behavior, it remains a tractable and insightful criterion, and our numerical results explicitly illustrate its limitations in the non-asymptotic regime. Gaining analytical insight into alternative (non-asymptotic) array design criteria, such as the ZZB or MLE, is an interesting (and challenging) direction for future work.
}

\subsection{Contributions and outline}
% jointly optimal array-waveform pairs
% Overall, while array geometry and waveform design have been studied extensively in isolation, their joint design in active sensing remains largely unexplored. This gap is due to the intricate non-convex coupling between the spatial and waveform domains, and motivates the present work. 
% Building on the foundation in \cite{forsythe2005waveform,li2008range}, our earlier work~\cite{vanderWerf2025Jointly} identified optimal array–waveform pairs for linear arrays in the single-target scenario.
% In the present paper, we significantly extend these results by providing a more comprehensive analysis that, amongst others, establishes performance limits for \textit{orthogonal} waveforms, a generalization to planar arrays and a generalized condition on beamforming optimality. 
% The main contributions of this work are summarized as follows:

This paper 
% characterizes performance limits of \ac{mimo} active sensing systems %optimal array designs 
% employing \textit{orthogonal} and coherent (\ac{crb}-optimal) waveforms for both linear and planar arrays. 
% Our results 
provides new insights into, and guidelines for, the design of optimal array geometries and waveforms. 
We focus on the single-target case due to its continued relevance in radar and \ac{jsac} \cite{chiriyath2016inner,liu2022integrated}, 
% \textcolor{red}{Ids: possible reference:
% \begin{itemize}
%     \item F. Liu et al. (2022) ``ISAC: toward ..."
%     \item A. R. Chiriyath et al. (2016) ``Inner bounds on ..." (actually considers only one target)
% \end{itemize} ]}, 
as well as the %. The %analytical 
fundamental theoretical
%tractability of the \ac{crb} in this case is key for gaining valuable 
insight 
it is able to yield 
into %fundamental performance gains achievable by 
array and waveform design. 
% \textcolor{red}{\textbf{Or a more elaborate motivation for the single target case here?}}
%, as we will see.
Our main contributions %of our work 
are summarized as follows:
\begin{itemize}
    \item 
    %We establish that the single-target \ac{crb}, in case of orthogonal waveforms, is a function of the spatial variance of both the transmit and receive array. It can also be expressed as the spatial variance of the sum co-array elements weighted by their multiplicities. We show that the corresponding optimal array for orthogonal waveforms is redundant, in contrast to the beamforming case, where the \ac{crb} only depends on the receive array and nonredundant arrays can be optimal~\cite{vanderWerf2025Jointly}.
    \emph{Optimal arrays for orthogonal waveforms:} 
   % We show that the single-target \ac{crb} in case of orthogonal waveforms depends on the sum of the \emph{spatial variances} (moments of inertia) of the Tx and Rx arrays, or alternatively, 
    We show that the single-target \ac{crb} in case of orthogonal waveforms can be expressed as 
    the \emph{weighted spatial variance} (moment of intertia)
    of the sum co-array, where the weights correspond to the multiplicities of the sum co-array elements. This reveals that the \ac{crb}-optimal array geometry has a redundant sum co-array, unlike nonredundant array geometries that are typically employed for maximizing the number of identifiable targets. Our result thus illustrates a fundamental trade-off between \ac{mse} and identifiability in array design.
    \item 
    %    We derive the optimal sensor distributions for both orthogonal and optimal (beamforming) waveforms, and for the canonical \ac{mimo} and optimal array. Surprisingly, and contrary to intuition, the optimal sensor distribution is not always the symmetric $N/2$ allocation.
    \emph{Allocation of sensors between Tx and Rx:} %Optimal Tx-Rx sensor allocations:
    We derive the optimal allocation of a total sensor budget between the Tx and Rx arrays for both the \ac{crb}-optimal array geometry and a canonical (nested nonredundant) \ac{mimo} configuration. %considered in \ac{mimo} radar. 
    We show that distributing the sensors \emph{unequally} between Tx and Rx when employing orthogonal waveforms is optimal, with a preference given towards more Rx than Tx sensors. %---even in case of nonredundant arrays. 
    Surprisingly, %This is surprising, as 
    even for nonredundant arrays, the \ac{crb}-maximizing sensor allocation does not yield the array geometry with the largest sum co-array\textcolor{\edited}{, unlike} conventional designs optimized for identifiability. 
    \item 
    % We extend the results of this work and~\cite{vanderWerf2025Jointly} to planar arrays, where the role of the spatial variance is replaced by the spatial covariance (or 2D moment of inertia). We also establish a generalized condition under which beamforming is optimal for arbitrary non-decreasing objective functions, such as the trace, log-determinant or maximum eigenvalue of the \ac{crbm}.
    \emph{Generalizations to planar arrays:} %Optimal waveforms for
    We extend our results to two dimensional arrays, where the role of the spatial variance is replaced by the spatial \emph{covariance}. We derive a novel \textcolor{\edited}{necessary and} sufficient condition that the Tx and Rx spatial covariances should satisfy for the optimal waveforms to remain physically relevant by steering power in the target direction. Our result provides a rigorous generalization of previous works \cite{forsythe2005waveform}, holding for any convex objective function, such as the trace, log-determinant or maximum eigenvalue of the \acl{crbm}.
    \item 
%    We show that different array geometries with identical \ac{crb} can be generated via sequences with equal sums of squares. This establishes a novel connection between sparse array design and Diophantine equations. Despite identical \ac{crb}, we demonstrate numerically that such arrays may yield different \ac{mle} performance at low \ac{snr} or under mutual coupling.
    \emph{Array geometries with equal \ac{crb}:} 
    We show that \emph{different} array geometries with \emph{identical} \ac{crb} can be generated via sequences with equal \emph{\ac{sos}}. This establishes an intriguing connection between sparse array design and the classical topic of Diophantine equations. We also provide a simple method for generating such equi-\ac{crb} (linear or planar) arrays for an arbitrary number of sensors. 
    \textcolor{\edited}{This method can be used to find alternative array geometries that maintain a desired \ac{crb} while offering, for example, improved estimation performance at low \ac{snr} or reduced susceptibility to mutual coupling.}
    % This can be used to find alternative array geometries---to a given array with a desired \ac{crb}---with, e.g., improved estimation performance at low \ac{snr} or reduced susceptibility to mutual coupling.
   % , for example, improve upon a given array geometry with a desired \emph{crb} in terms of
%    This can be useful for constructing array geometries with equal \ac{crb} but different \ac{mle} performance at low \ac{snr}, or susceptibility to mutual coupling.
    %and demonstrate numerically that they may nevertheless yield different \ac{mle} performance (e.g., at low \ac{snr}), which gives rise to new questions in \ac{mse}-optimal array design.
\end{itemize}
%\textcolor{red}{[Ids: Maybe the following paragraph can be omitted, since (most of) this is already covered in the introduction?]}
%Our work builds upon \cite{forsythe2005waveform,li2008range}, and our earlier work \cite{vanderWerf2025Jointly} that focused on optimal array–waveform pairs for linear arrays, but left open questions regarding orthogonal waveforms that are highly important for initial target search---in contrast to \ac{crb}-minimizing beamforming waveforms which require prior knowledge of the unknown direction of interest. 
% We focus on the single-target case due to its continued relevance in radar \textcolor{red}{[REF]} and ISAC \textcolor{red}{[REF]}. The analytical tractability of the single-target \ac{crb} is key for gaining valuable insight into fundamental performance gains achievable by array and waveform design, as we will see.

% Importantly, understanding the single-target optimal designs provides a valuable reference point for the multi-target scenario. Practical array design methods for multiple targets often rely on heuristics due to the increased complexity of the problem, and thus may yield array configurations that are not globally optimal. The single-target analysis clarifies the fundamental trade-offs, aids in interpreting the outcome of (suboptimal) multi-target designs, and establishes performance limits that such methods can realistically achieve.

%\subsection{Outline}\label{subsec:outline}
The remainder of this paper is organized as follows. \cref{sec:background} briefly reviews %introduces 
the %necessary %mathematical 
background and %summarizes 
prior results on \ac{crb}-optimal array and waveform designs. %pairs for linear arrays. 
\cref{sec:orthogonalWaveforms,sec:optimalAllocation} present new results %addresses the case of 
regarding orthogonal waveforms and optimal sensor allocation, respectively, while 
\cref{sec:planarArrays} considers extensions to planar arrays. 
%We extend these results to planar arrays in \cref{sec:planarArrays}. %In 
\cref{sec:arraysWithEqualCRB} examines distinct arrays with identical \ac{crb}s. Finally, \cref{sec:simulations} validates our findings numerically and \cref{sec:conclusions} concludes the paper.
%presents numerical examples illustrating the estimation performance.

% \subsection{not useful anymore?}
% on submodularity for sensor selection:
% \begin{itemize}
%     \item It is worth mentioning that \cite{tohidi2019sparse} demonstrated the supermodularity of D-optimality, although for a somewhat distinct situation.
%     \item \cite{shamaiah2010greedy} also proved submodularity of a $\log\det$ function for sensor selection.
%     \item \cite{ranieri2014nearoptimal} linear inverse problems, does not consider determinant of fisher?
%     \item \cite{hashemi2021randomized} 
%     \item \cite{liu2023greedy}
%     \item \cite{tzoumas2016sensor} submodularity of logdet of MSE of linear estimator; applied to control/Kalman filtering
% \end{itemize}

% Our initial research regarding jointly optimal transmit waveforms and array geometries was presented at the 2025 International Conference on Acoustics, Speech and Signal Processing (ICASSP)~\cite{vanderWerf2025Jointly}. That study concentrated exclusively on optimal waveforms (the sum-beam) for linear arrays. In this paper, we extend these results and offer a more comprehensive and detailed analysis, incorporating \textit{orthogonal} waveforms and planar arrays.

\section{Background}\label{sec:background}
This section introduces the signal model %, \ac{crb}, 
and briefly reviews prior %some relevant existing 
results on \ac{crb}-optimal waveforms and arrays, %which
%Detailed derivations 
% Details 
% can be found in 
% \cite{stoica1989music,forsythe2005waveform,li2008range,vanderWerf2025Jointly}. 
forming the basis for our new contributions starting in \cref{sec:orthogonalWaveforms}.

\subsection{Signal model and Cramér-Rao bound}
We consider a narrowband monostatic MIMO system with $N_{\tx}$ \ac{tx} and $N_{\rx}$ \ac{rx} sensors in a linear array configuration illuminating a far-field target in 2D space at the angle $\omega =  %\frac{2\pi}{\lambda}
\pi\sin(\theta)$ with reflection coefficient $\gamma\in\mathbb{C}$.
The spatio-temporal received (backscatter) signal vector can be written as \cite{bekkerman2006target,li2007mimoradar}
\begin{equation}
\begin{split}
    \mathbf{y} 
    & =(\mathbf{S}\otimes \mathbf{I}_{N_{\rx}})(\mathbf{a}_{\tx}(\omega)\otimes \mathbf{a}_{\rx}(\omega) )\gamma + \mathbf{n}, %\quad \in\mathbb{C}^{N_{\rx}},
    \label{eq:signal_model}
\end{split}
\end{equation}
\textcolor{\icassp}{where $\otimes$ denotes the Kronecker product, and $\mathbf{S}\in\mathbb{C}^{T\times N_{\tx}}$ is a spatio-temporal \ac{tx} waveform matrix} of (sample) length $T\geq 1$ and unit power, i.e., $\|\mathbf{S}\|_{\F}^2=\trace(\mathbf{S}^{\HT}\mathbf{S})=1$. 

The entries of the \ac{tx} and \ac{rx} array response vectors are given by
$\left[\mathbf{a}_{\tx}(\omega)\right]_n = e^{j d_{\tx}[n] \omega}$ and $\left[\mathbf{a}_{\rx}(\omega)\right]_m = e^{j d_{\rx}[m] \omega}$\textcolor{\edited}{, where $d_{\tx}[n]$ and $d_{\rx}[m]$ denote the individual sensor positions in half-wavelength units. The complete sets containing these respective positions are}
\begin{align}
    \mathcal{D}_{\tx} 
    & = \{d_{\tx}[n]\}_{n=1}^{N_{\tx}}, \quad
    %\\
    \mathcal{D}_{\rx} 
    %&
    = \{d_{\rx}[m]\}_{m=1}^{N_{\rx}}.\label{eq:sensorpositions}
\end{align}
Additionally, $\mathbf{n}\!\in\!\mathbb{C}^{N_{\rx}T}$ denotes zero-mean circularly symmetric complex Gaussian noise %vector 
with covariance $\mathbb{E}(\mathbf{n}\mathbf{n}^{\HT})\!=\!\sigma^2 \mathbf{I}$.

% The goal is to identify the array geometry ($\mathcal{D}_{\tx}$ and $\mathcal{D}_{\rx}$) and the transmit waveform ($\mathbf{S}$) that jointly optimize the estimation of the unknown target angle $\omega$.

Assuming $\omega$, $\gamma$, and $\sigma$ are deterministic unknowns ($\gamma$ and $\sigma$ are nuisance parameters), the \ac{crb} of $\omega$, given waveform $\mathbf{S}$, can be written, adapting \cite[Theorem~4.1]{stoica1989music}, as\textcolor{\edited}{\footnote{\textcolor{\edited}{By the parameter transformation property \cite[Sec. 3.6]{kay1993fundamentals}, $\txt{CRB}_{\theta}(\mathbf{S}) = \frac{1}{(\pi \cos(\theta))^2} \txt{CRB}_{\omega}(\mathbf{S})$. Because this scalar transformation is independent of the design variables, any geometry or waveform minimizing $\txt{CRB}_{\omega}$ is also optimal for $\txt{CRB}_{\theta}$.}}}
\begin{align}
    \txt{CRB}_{\omega}(\mathbf{S})
    % & =
    % \frac{\sigma^2}{2|\gamma|^2}
    % \left[
    % \txt{Re}
    % \left\{ 
    % \mathbf{\dot{a}}^{\HT}_\textrm{tr}(\mathbf{S}^{\HT}\otimes \mathbf{I})
    % \mathbf{P}_{(\mathbf{S}\otimes \mathbf{I})\mathbf{a}_\textrm{tr}}^\perp(\mathbf{S}\otimes \mathbf{I})\mathbf{\dot{a}}_\textrm{tr}
    % \right\}
    % \right]^{-1} \notag\\
    & = 
    \frac{\sigma^2}{2|\gamma|^2}
    \|
    \mathbf{P}_{(\mathbf{S}\otimes \mathbf{I})\mathbf{a}_\textrm{tr}}^\perp(\mathbf{S}\otimes \mathbf{I})\mathbf{\dot{a}}_\textrm{tr}
    \|_2^{-2}.
    \label{eq:singleTargetCRB}
\end{align}
Here, the effective \ac{tx}-\ac{rx} steering vector is denoted by 
%\begin{align}
    $
    \mathbf{a}_\textrm{tr}=\mathbf{a}_\textrm{tr}(\omega) \triangleq \mathbf{a}_\textrm{t}(\omega) \otimes \mathbf{a}_\textrm{r}(\omega)
    $ 
    %\label{eq:atr}
%\end{align}
and its derivative with respect to $\omega$ by $\mathbf{\dot{a}}_\textrm{tr}=\mathbf{\dot{a}}_\textrm{tr}(\omega) \triangleq \frac{\partial}{\partial\omega}\mathbf{a}_\textrm{tr}(\omega)$. Moreover, $\mathbf{P}_{\mathbf{X}}^\perp$ is the projection matrix onto the orthogonal complement of the range %space 
of $\mathbf{X}$.

\subsection{Prior results on CRB-optimal waveforms and arrays}
\label{subsec:existingCRBoptWavArrayDesigns}
\subsubsection{Optimal waveform designs}
%We consider a waveform optimal if it is a solution to,
CRB-minimizing unit-power waveforms $\mathbf{S}^\star$ are solutions to the optimization problem
\begin{align}
   \min_{\mathbf{S}} 
 %   \mathbf{S}^\star=\argmin_{\mathbf{S}\in\mathbb{C}^{T\times N_{\tx}}}
    \ \txt{CRB}_{\omega}(\mathbf{S}) \ \txt{s.t.} \ \trace(\mathbf{S}^{\HT}\mathbf{S}) \leq 1.
    \label{eq:minimizeCRBwrtWaveform}
\end{align}
\textcolor{\edited}{While $\txt{CRB}_{\omega}(\mathbf{S})$ is nonconvex in $\mathbf{S}$, it is a convex function of the waveform covariance matrix $\mathbf{R}_{\rm s} = \mathbf{S}^{\HT}\mathbf{S}$.}
Expressing the \ac{crb} \eqref{eq:singleTargetCRB} in terms of the waveform covariance matrix $\mathbf{R}_{\rm s}$, \textcolor{\edited}{the problem \eqref{eq:minimizeCRBwrtWaveform} becomes a convex \ac{sdp}, and} the optimal $\mathbf{R}_{\rm s}$ in \eqref{eq:minimizeCRBwrtWaveform} 
has the following form (for some $\lambda_1,\lambda_2 \geq 0$) \cite{forsythe2005waveform,li2008range}:
%is a linear combination of the so-called ``sum beam'' ($\mathbf{a}_{\tx}$) and ``difference beam'' ($\mathbf{P}^{\perp}_{\mathbf{a}_{\tx}}\mathbf{\dot{a}}_{\tx}$), that is,
%
\begin{align}
    \mathbf{R}_{\rm s} = 
    \lambda_1  
    \frac{
        \mathbf{{a}}_{\tx}\mathbf{{a}}^{\HT}_{\tx}
    }{
        \|\mathbf{{a}}_{\tx}\|_2^2
    }
    + \lambda_2 
    \frac{
        \mathbf{P}^{\perp}_{\mathbf{a}_{\tx}}\mathbf{\dot{a}}_{\tx}\mathbf{\dot{a}}^{\HT}_{\tx}\mathbf{P}^{\perp}_{\mathbf{a}_{\tx}}
    }{
        \|\mathbf{P}^{\perp}_{\mathbf{a}_{\tx}}\mathbf{\dot{a}}_{\tx}\|_2^2
    }. \label{eq:TransmitWaveformCov}
\end{align} 
Intuitively, allocating \ac{tx} power beyond the two-dimensional subspace in \eqref{eq:TransmitWaveformCov} is 
wasteful (necessarily increases the CRB). 
Using \labelcref{eq:TransmitWaveformCov}, \eqref{eq:minimizeCRBwrtWaveform} can be shown to reduce to \cite{forsythe2005waveform,li2008range,vanderWerf2025Jointly}
\begin{align}
\begin{aligned}
    \min_{\lambda_1,\lambda_2} & \ \frac{\sigma^2}{2|\gamma|^2}\frac{1}{N_{\tx}N_{\rx}} \left[\lambda_1 \chi(\mathcal{D}_{\rx})+ \lambda_2 \chi(\mathcal{D}_{\tx})\right]^{-1} \label{eq:CRBequivObjective}\\
    \txt{s.t.} & \ \lambda_1 + \lambda_2 = 1,  \ \lambda_1 > 0, \textcolor{\edited}{\ \lambda_2 \geq 0,} %\\
\end{aligned}
\end{align}
\textcolor{\edited}{where the strict inequality $\lambda_1 > 0$ ensures that the target is illuminated (providing a finite CRB), whereas 
%the difference beam weight 
$\lambda_2$ is allowed to be zero. Moreover,} $\chi(\cdot)$ denotes the so-called ``spatial variance'', which for a linear array $\mathcal{D}$, with ``spatial mean'' $\mu(\mathcal{D})$, is defined as
%For a linear array described by the set $\mathcal{D}$, we define the spatial mean and spatial variance as
\begin{align}
    \chi(\mathcal{D}) \triangleq\frac{1}{|\mathcal{D}|}
            \sum_{d\in\mathcal{D}}(d-\mu(\mathcal{D}))^2,\quad
             \mu(\mathcal{D})
     \triangleq \frac{1}{|\mathcal{D}|}\sum_{d\in\mathcal{D}}d.
    %\label{eq:SpatialMean}\\
            \label{eq:SpatialVariance}
\end{align}
%, weighting the sum and difference beams, $\mathbf{a}_{\tx}$ and $\mathbf{P}^{\perp}_{\mathbf{a}_{\tx}}\dot{\mathbf{a}}_{\tx}$ respectively, in~\eqref{eq:TransmitWaveformCov}. 
Hence, optimal transmit waveforms are functions of the \ac{tx} and \ac{rx} spatial variances $\chi(\mathcal{D}_{\tx}), \chi(\mathcal{D}_{\rx})$ 
per \eqref{eq:CRBequivObjective}. 
\textcolor{\edited}{In physical terms, $\chi(\mathcal{D})$ represents the \textit{moment of inertia} of the array geometry \cite{baysal2003onthegeometry}. A higher spatial variance indicates that the array elements are distributed further from the geometric centroid (e.g., concentrated toward the aperture edges), which is distinct from the concept of array sparsity.}
If 
\begin{align}
    \chi(\mathcal{D}_{\tx}) \!<\! \chi(\mathcal{D}_{\rx}),
    \label{eq:linearArrayDesignRule}
\end{align}
then optimal waveforms correspond to \emph{coherent beamforming} in the target direction $\omega$, i.e., the \emph{``sum beam''} \cite{forsythe2005waveform,li2008range,vanderWerf2025Jointly}:
\begin{align}
    \mathbf{S}^\star = \frac{\mathbf{u}\mathbf{a}_{\tx}^{\HT}(\omega)}{\sqrt{N_{\tx}}},\quad \|\mathbf{u}\|_2=1.
    \label{eq:optTxWaveform_sum}
\end{align}
%Cases with $\chi(\mathcal{D}_{\tx}) \ge \chi(\mathcal{D}_{\rx})$ are less practical and thus omitted here for brevity.
% If $\chi_{\rx}=\chi_{\tx}$, then optimal waveforms beyond \eqref{eq:optTxWaveform_sum} also exist \cite{li2008range}. If $\chi(\mathcal{D}_{\tx}) \ge \chi(\mathcal{D}_{\rx})$, 
% \textcolor{\icassp}{
% then optimal waveforms correspond to transmitting infinitesimal energy in the target direction---see \cite{forsythe2005waveform,li2008range} for details. As this solution has limited practical relevance, we will henceforth focus on \eqref{eq:optTxWaveform_sum} which is optimal given \eqref{eq:sb_cond}.
% }
Optimal waveforms beyond \eqref{eq:optTxWaveform_sum} exist when $\chi(\mathcal{D}_{\tx})\!=\!\chi(\mathcal{D}_{\rx})$ \cite{li2008range}. When $\chi(\mathcal{D}_{\tx})\!>\!\chi(\mathcal{D}_{\rx})$, these correspond to transmitting infinitesimal energy in the target direction \cite{forsythe2005waveform,li2008range}, which has limited practical relevance. Focusing on \eqref{eq:optTxWaveform_sum}, for brevity, we note that $\mathbf{S}^\star$ depends on the \ac{tx} array geometry via $\mathbf{a}_{\tx}(\omega)$. Moreover, 
 %. This reveals the explicit dependence of 
%links the array geometry to the estimation performance, as given \eqref{eq:optTxWaveform_sum}, 
the single-target \ac{crb} is a function of %in 
the 
\ac{rx} 
array spatial variance $\chi(\mathcal{D}_{\rx})$, as \eqref{eq:singleTargetCRB} given \eqref{eq:optTxWaveform_sum} reduces to \cite{vanderWerf2025Jointly}
\begin{align}
    \text{CRB}_{\omega}(\mathbf{S}^\star) =
    \frac{\sigma^2}{2|\gamma|^2}\frac{1}{N_{\tx}N_{\rx}}\chi^{-1}(\mathcal{D}_{\rx}).
    \label{eq:CRB_sb}
\end{align}
Results \eqref{eq:optTxWaveform_sum} and \eqref{eq:CRB_sb} enable characterizing the set of waveforms and array geometries that jointly minimize the CRB.

\subsubsection{Jointly optimal array-waveform designs}
The \ac{rx} array geometry with maximal spatial variance $\chi(\mathcal{D}_{\rx})$ minimizes the CRB in \eqref{eq:CRB_sb}. 
\textcolor{\edited}{We constrain the aperture to the set of non-negative integers, i.e., $L\in\mathbb{N}_+$, 
%to represent discrete multiples of a fundamental inter-element spacing (e.g., $\lambda/2$); 
% ----> since we assume lambda/2 earlier
representing integer multiples of half wavelength spacings;
this ensures a minimum physical separation between elements and prevents sensor overlap \cite{werf2024transmit, amin2024sparsearrays, tohidi2019sparse}.} 
% Given an array aperture $L\in\mathbb{N}_+$, t
The spatial variance is maximized by the so-called ``clustered array" placing sensors at the ends of the aperture \cite[Lemma 1]{vanderWerf2025Jointly}:
\begin{align}
    \mathcal{K}_N^L
    \!\triangleq\!
    \begin{cases}
    \mathcal{U}_{N/2}\cup(L-\mathcal{U}_{N/2})
    , & \mathrm{if}\ N\ \mathrm{even},\\
    \mathcal{U}_{(N-1)/2}\cup(L-\mathcal{U}_{(N+1)/2}),&\mathrm{if}\ N\ \mathrm{odd},
    \end{cases}
    \label{eq:clustered}
\end{align}
where $\mathcal{U}_N\!\triangleq\!\{0,1,\ldots,N\!-\!1\}$ is the $N$-sensor \ac{ula}.
\cref{fig:clusteredRxContiguousSumCoArray} illustrates a clustered \ac{rx} array $\mathcal{D}_{\rx}\!=\!\mathcal{K}_6^{20}$. 
%For illustration, \cref{fig:ClusteredArrayEvenOdd} shows the clustered array geometry for an even and odd number of sensors. Note that in the latter case, there exist two optimal arrays. 
Hence, given $N_{\tx}$, $N_{\rx}$, and $L$, the jointly optimal array-waveform tuple is given by $\mathbf{S}^\star$ in \eqref{eq:optTxWaveform_sum}, $\mathcal{D}_{\rx}^\star=\mathcal{K}_{N_{\rx}}^L$ in \eqref{eq:clustered}, and \emph{any} $\mathcal{D}_{\tx}^\star$ satisfying $\chi(\mathcal{D}_{\rx}^\star)>\chi(\mathcal{D}_{\tx}^\star)$~\cite[Theorem 1]{vanderWerf2025Jointly}, assuming this is feasible given the parameters $N_{\tx}$, $N_{\rx}$, and $L$. 
% \textcolor{red}{[Robin: unclear what ``assuming this is possible'' refers to.]}
The existence of multiple optimal \ac{tx} arrays 
%$\mathbf{S}^\star$ follows \eqref{eq:optTxWaveform_sum} and therefore depends on $\mathcal{D}_{\tx}^\star$, which can be chosen freely provided that $\chi(\mathcal{D}_{\rx}^\star)>\chi(\mathcal{D}_{\tx}^\star)$.
% \begin{figure}
%     \centering
%     \newcommand{\dataLoc}{Data/ClusteredArraysEvenOdd}
%     \newcommand{\xmax}{14}
%     \newcommand{\xmin}{0}
%     \newcommand{\fwidth}{6.5}
%     \newcommand{\msize}{1.5}
% 	\pgfplotsset{every x tick label/.append style={font=\tiny},every y tick label/.append style={font=\scriptsize}}
% 	\centering
% 		\begin{tikzpicture}
% 			\begin{axis}[width= \fwidth cm ,height=2.5 cm,xmin=\xmin-0.2,xmax=\xmax+0.2,xtick={\xmin,\xmin+2,...,\xmax},ytick={-1,0,1},yticklabels={$\mathcal{K}_7^{14}$,$\mathcal{K}_7^{14}$,$\mathcal{K}_6^{14}$},ytick style={draw=none},ymin=-1.5,ymax=1.5,xticklabel shift = 0 pt,xtick pos=bottom,axis line style={draw=none}]
% 				\addplot[blue,only marks,mark=square*,mark size=\msize,y filter/.code={\pgfmathparse{\pgfmathresult*0+1}}] table[x=D,y=D]{\dataLoc/DrEven.dat};
%                 \addplot[blue,only marks,mark=square*,mark size=\msize,y filter/.code={\pgfmathparse{\pgfmathresult*0}}] table[x=D,y=D]{\dataLoc/DrOdd1.dat};
%                 \addplot[blue,only marks,mark=square*,mark size=\msize,y filter/.code={\pgfmathparse{\pgfmathresult*0-1}}] table[x=D,y=D]{\dataLoc/DrOdd2.dat};
% 			\end{axis}
% 		\end{tikzpicture}
%     \caption{Clustered arrays, $\mathcal{K}_N^L$, for $N = 6$ (top) or $N = 7$ (middle and bottom) elements and physical aperture $L = 14$.}
%     \label{fig:ClusteredArrayEvenOdd}
% \end{figure}
can be leveraged to imbue the joint Tx-\ac{rx} array geometry with desirable properties beyond a low CRB. For example, \cref{fig:clusteredRxContiguousSumCoArray} illustrates an array configuration that both minimizes the single-target CRB when transmitting optimal (coherent) waveforms and guarantees the (maximal) identifiability of $N_{\tx}N_{\rx}$ targets when transmitting orthogonal waveforms \cite[Corollary 1]{vanderWerf2025Jointly}. 
{\color{\edited}
Prior works have also observed similar array structures, where Rx elements are placed at the edges and Tx elements in the center, to be desirable in, e.g., ISAC \cite{liu2024crb, liu2025doaestimation}. \ac{crb}-optimal arrays and waveforms (for sensing) provide a theoretical explanation for this, showing that for coherent transmission, the CRB depends only on Rx positions, thereby offering insight into various array geometries obtained via numerical optimization.}
%by ensuring a contiguous, nonredundant sum co-array~ 
%\textcolor{red}{[Premature, co-array not defined.]}
%An example of such a\cite[Corollary 1]{vanderWerf2025Jointly}.
%
\begin{figure}
% \vspace{-.2cm}
    \centering
    \newcommand{\dataLoc}{Data/ClusteredSpecialCase_v2}
    \newcommand{\xmax}{36}
    \newcommand{\xmin}{0}
    \newcommand{\fwidth}{9}
    \newcommand{\msize}{1.5}
	\pgfplotsset{every x tick label/.append style={font=\tiny},every y tick label/.append style={font=\scriptsize}}
	\centering
		\begin{tikzpicture}
			\begin{axis}[width= \fwidth cm ,height=2.5 cm,xmin=\xmin-0.2,xmax=\xmax+0.2,xtick={\xmin,\xmin+2,...,\xmax},ytick={-1,0,1},yticklabels={$\mathcal{D}_\Sigma$,$\mathcal{D}_{\rx}$,$\mathcal{D}_{\tx}$},ytick style={draw=none},ymin=-1.5,ymax=1.5,xticklabel shift = 0 pt,xtick pos=bottom,axis line style={draw=none}]
				\addplot[orange,only marks,line width = 0.7,mark=star,mark size=2.2,y filter/.code={\pgfmathparse{\pgfmathresult*0+1}}] table[x=D,y=D]{\dataLoc/Dt.dat};
				\addplot[blue,only marks,mark=square*,mark size=\msize,y filter/.code={\pgfmathparse{\pgfmathresult*0}}] table[x=D,y=D]{\dataLoc/Dr.dat};
				\addplot[only marks,mark=*,mark size=\msize,y filter/.code={\pgfmathparse{\pgfmathresult*0-1}},x filter/.code={\pgfmathparse{\pgfmathresult}}] table[x=D,y=D]{\dataLoc/D_Sigma.dat};
			\end{axis}
		\end{tikzpicture}%\label{fig:clustTxRx}
    \vspace{-0.3cm}
    \caption{Array geometry with $N_{\tx}\!=\!6$ \ac{tx} and $N_{\rx}\!=\!6$ \ac{rx} sensors. The clustered \ac{rx} array $\mathcal{D}_{\rx} = \mathcal{K}^L_{N_{\rx}}$ maximizes the spatial variance $\chi(\mathcal{D}_{\rx})$ given physical aperture constraint $L\!=\!20$. The sum co-array is contiguous and nonredundant.
    %, the multiplicity of the sum co-array elements is given by $\mathbf{v}=\mathbf{1}^{\T}_{N_{\tx}N_{\rx}}$.
    }
    \vspace{-.3cm}
    \label{fig:clusteredRxContiguousSumCoArray}
\end{figure}

While prior works \cite{forsythe2005waveform,li2008range,vanderWerf2025Jointly} %CRB-minimizing array designs in active sensing 
focused on coherent beamforming\textcolor{\edited}{,} mainly suited for tracking\textcolor{\edited}{,} \ac{mimo} systems often utilize orthogonal waveforms to illuminate the entire field of view without prior knowledge of $\omega$. 
This raises open questions about 
%whether alternative array geometries %or sensor allocation strategies can further reduce the CRB 
%\ac{crb}-optimal array geometries 
%how 
% the single-target \ac{crb} differs from \eqref{eq:CRB_sb}
% when launching orthogonal waveforms and what implications this has for optimal array design. 
\ac{crb}-optimal array designs in the case of orthogonal waveforms\textcolor{\edited}{,} a topic addressed next. 
\section{
CRB for orthogonal waveforms: Role of sum co-array and optimal array redundancy
}\label{sec:orthogonalWaveforms}
% As discussed in the previous section, estimating the unknown angle $\omega$ in an optimal way, requires a transmit waveform that depends on that same $\omega$. This dependency creates a contradiction that makes it impractical to deploy the optimal waveform for angle estimation.
% Instead, in a practical scenario, an initial \emph{orthogonal} waveform could be used to broadcast power in all directions. When a first estimate of the angle is gathered, the focus could narrow down to target a specific area, ultimately approaching a configuration similar to a sum-beam for precise angle estimation.
% Because of its practical relevance, in this section, we focus on the optimal array geometry given orthogonal waveforms. 
Let $\Sorth$ denote an orthogonal waveform matrix, such that
\begin{align}
   \Sorth\in\{\mathbf{S}\in\mathbb{C}^{T\times N_{\tx}}\ | \ \mathbf{S}^{\HT}\mathbf{S} = \tfrac{1}{N_{\tx}}\mathbf{I}_{N_{\tx}}\}.
    \label{eq:orthWav}
\end{align}
%The CRB \eqref{eq:singleTargetCRB} reveals an interesting structure in this case \cite{boyer2011performance}.
In this case, the single-target \ac{crb} \eqref{eq:singleTargetCRB} %be shown to 
depends on the spatial variances of both the \ac{tx} and \ac{rx} arrays, %\cite{boyer2011performance}, 
rather than solely the \ac{rx} array as for optimal (coherent) waveforms \cite{boyer2011performance}:
%\textcolor{red}{[Seems to be known result. Cite \cite{boyer2011performance} and reduce to lemma.]}
%\begin{lemma}[Single-target \ac{crb} for orthogonal waveforms \cite{boyer2011performance}]\label{thm:crb_orth}
    % Given orthogonal waveforms \eqref{eq:orthWav}, %the single-target 
    % \ac{crb} \eqref{eq:singleTargetCRB} can be written as
%
\begin{align}
    \textrm{\normalfont CRB}_{\omega} (\Sorth)
    & = \frac{\sigma^2}{2|\gamma|^2}\frac{1}{N_{\rx}}
    \left[ 
     \chi(\mathcal{D}_{\tx})
    + \chi(\mathcal{D}_{\rx})
    \right]^{-1}.
    \label{eq:CRBorth}
\end{align}
Expressions \eqref{eq:CRBorth} and \eqref{eq:CRB_sb} highlight the distinct roles of the \emph{physical} (\ac{tx}/\ac{rx}) array geometry when using coherent and orthogonal waveforms, respectively.\footnote{The ratio between the \ac{crb}s for coherent \eqref{eq:CRB_sb} and orthogonal \eqref{eq:CRBorth} waveforms is $N_{\tx}^{-1}\frac{\chi(\mathcal{D}_{\tx})+\chi(\mathcal{D}_{\rx})}{\chi(\mathcal{D}_{\rx})}\!\propto\!N_{\tx}^{-1}$ (recall $0\!\leq\!\chi(\mathcal{D}_{\tx})\!<\! \chi(\mathcal{D}_{\rx})$), as beamforming concentrates a factor of $N_{\tx}$ more power in the target direction.} 
However, the \emph{virtual} so-called \emph{sum co-array} is also known to be a key factor influencing identifiability and resolution in active sensing \ac{mimo} systems \cite{li2007mimoradar,rajamaki2023importance}. %---a fact not readily apparent from \eqref{eq:CRBorth}. 
% Yet, it is not immediately clear %We now elucidate 
% how the sum co-array and the repetition of its virtual elements (known as array redundancy) impacts \eqref{eq:CRBorth}, and thereby the fundamentally achievable \ac{mse} in active sensing. We now elucidate this connection.
Yet, the precise impact of the sum co-array and the repetition of its virtual elements (known as array redundancy) on \eqref{eq:CRBorth}, and thus on the fundamentally achievable \ac{mse}, remains unclear. In the following, we clarify this relationship.

\subsection{
Role of sum co-array and virtual element multiplicities
}
% \textcolor{red}{[Relevance of sum co-array needs to be introduced properly (also using signal model).]}

%The estimation performance in active sensing depends on the interaction between the \ac{tx} and \ac{rx} arrays via the so-called 
Formally, the sum co-array is defined as \cite{hoctor1990theunifying}
\begin{align}
    \mathcal{D}_\Sigma\triangleq\mathcal{D}_{\tx}+\mathcal{D}_{\rx}=\{d_{\tx}+d_{\rx}\ |\ d_{\tx}\in\mathcal{D}_{\tx};d_{\rx}\in\mathcal{D}_{\rx}\}.
    \label{def:sumcoarray1D}
\end{align}
Using \eqref{def:sumcoarray1D}, measurement model \eqref{eq:signal_model} can be expressed in terms of the $N_\Sigma\!\triangleq\!|\mathcal{D}_\Sigma|\!\leq\!N_{\tx}N_{\rx}$ unique virtual elements in $\mathcal{D}_\Sigma\!=\!\{d_\Sigma[\ell]\}_{\ell=1}^{N_\Sigma}$ by rewriting the \ac{tx}–\ac{rx} steering vector %$\mathbf{a}_\textrm{tr}(\omega)$ 
as \cite{rajamaki2020hybrid}
\begin{align*}
    %\mathbf{a}_\textrm{tr}(\omega) 
    %& = 
    \mathbf{a}_\textrm{t}(\omega) \otimes \mathbf{a}_\textrm{r}(\omega) 
    % \notag \\
    % & 
    = \bm{\Upsilon}\mathbf{a}_\Sigma(\omega).
\end{align*}
Here, $[\mathbf{a}_\Sigma(\omega)]_\ell = e^{jd_\Sigma[\ell]\omega}$ is the sum co-array response vector and $\bm{\Upsilon}\in\{0,1\}^{N_{\tx}N_{\rx}\times N_{\Sigma}}$ is the redundancy pattern matrix 
%consolidating identical virtual elements and 
accounting for 
%their 
the fact that
%\emph{multiplicities} of 
% possible repetition of the
% virtual elements, 
% as 
\emph{several pairs of physical elements $(d_{\tx},d_{\rx})$ can give rise to the same sum co-array element} $d_\Sigma\in\mathcal{D}_\Sigma$. 
% Hence, the cardinality of the sum co-array %$N_\Sigma \triangleq |\mathcal{D}_\Sigma|$ 
% satisfies $N_\Sigma \leq N_{\tx}N_{\tx}$, where an array is said to be \emph{redundant} if $N_\Sigma <N_{\tx}N_{\rx}$ and \emph{nonredundant} if $N_\Sigma=N_{\tx}N_{\rx}$. 
% mapping the virtual elements arising from the Kronecker product to the sum co-array elements, taking into account the multiplicity.
This enables expressing the CRB in \eqref{eq:CRBorth} equivalently in terms of the sum co-array and the \emph{multiplicities} of its elements, $\upsilon(d_\Sigma)\triangleq \sum_{m,n}\mathbbm{1}(d_{\tx}[m]+d_{\rx}[n]=d_\Sigma)$. 
% We define the \emph{multiplicity-weighted} spatial variance and mean of the sum co-array as:
% \begin{align}
%     % \tilde{\mu}(\mathcal{D}_\Sigma) 
%     % & \triangleq \frac{1}{N_{\tx} N_{\rx}}\sum_{d\in\mathcal{D}_\Sigma} d \upsilon_\Sigma(d), \\
%     % \tilde{\chi}(\mathcal{D}_\Sigma) 
%     % & \triangleq \frac{1}{N_{\tx} N_{\rx}} \sum_{d\in\mathcal{D}_\Sigma} (d-\tilde{\mu}(\mathcal{D}_\Sigma))^2 \upsilon_\Sigma(d).
%     \tilde{\chi}(\mathcal{D}_\Sigma) 
%     & \triangleq \frac{1}{N_{\tx} N_{\rx}} \sum_{d\in\mathcal{D}_\Sigma} (d-\tilde{\mu}(\mathcal{D}_\Sigma))^2 \upsilon_\Sigma(d),\\
%         \tilde{\mu}(\mathcal{D}_\Sigma) 
%     & \triangleq \frac{1}{N_{\tx} N_{\rx}}\sum_{d\in\mathcal{D}_\Sigma} d \upsilon_\Sigma(d),\nonumber
% \end{align}
% where 
% %The spatial mean and variance are now weighted by the \emph{multiplicity} 
% $\upsilon_\Sigma(d_{\Sigma})$ 
% is the multiplicity 
% of virtual element $d_\Sigma\in\mathcal{D}_\Sigma$.%We denote the \textit{multiplicity} of $d_\Sigma\in\mathcal{D}_\Sigma$ by $\upsilon_\Sigma (d_\Sigma)$.
%As we will show, with orthogonal waveforms the \ac{crb} reduces to the spatial variance of the sum co-array, weighted by the element multiplicities.
\begin{proposition}\label{thm:CRBorthWavSumcoArraydependency}
    % \cref{eq:CRBorth} can be rewritten %in terms of the multiplicity-weighted spatial variance of the sum co-array 
    % as 
    Given orthogonal transmit waveforms \eqref{eq:orthWav}, the single-target \ac{crb} \eqref{eq:singleTargetCRB} can be written as
    %Given orthogonal transmit waveforms \eqref{eq:orthWav}, the \ac{crb} is fully described by the multiplicity-weighted spatial variance of the sum co-array,%\footnote{A similar dependence on (only) the sum co-array has also been reported in \cite{werf2024transmit}, where the focus was to estimate the reflection coefficients of multiple targets.},
\begin{align}
    \textrm{\normalfont CRB}_{\omega}(\Sorth) 
    % & = \frac{\sigma^2}{2|\gamma|^2}\frac{1}{N_{\rx}}
    % \left[ 
    %  \chi(\mathcal{D}_{\tx})
    % + \chi(\mathcal{D}_{\rx})
    % \right]^{-1}  
    % \notag\\
    & 
    = \frac{\sigma^2}{2|\gamma|^2} \frac{1}{N_{\rx}}
    \tilde{\chi}^{-1}(\mathcal{D}_{\Sigma}),\label{eq:CRBlinearArrayOrth}
\end{align}
%Moreover, the optimal array geometry has a redundant sum co-array if both $N_{\tx}\geq 3$ and $N_{\rx}\geq 3$.
where $\tilde{\chi}(\mathcal{D}_\Sigma) $ is the multiplicity-weighted spatial variance, %of the sum co-array, defined as:
\begin{align}
    % \tilde{\mu}(\mathcal{D}_\Sigma) 
    % & \triangleq \frac{1}{N_{\tx} N_{\rx}}\sum_{d\in\mathcal{D}_\Sigma} d \upsilon_\Sigma(d), \\
    % \tilde{\chi}(\mathcal{D}_\Sigma) 
    % & \triangleq \frac{1}{N_{\tx} N_{\rx}} \sum_{d\in\mathcal{D}_\Sigma} (d-\tilde{\mu}(\mathcal{D}_\Sigma))^2 \upsilon_\Sigma(d).
    \tilde{\chi}(\mathcal{D}_\Sigma) 
    & \triangleq \frac{1}{N_{\tx} N_{\rx}} \sum_{d_\Sigma\in\mathcal{D}_\Sigma} (d_\Sigma-\tilde{\mu}(\mathcal{D}_\Sigma))^2 \upsilon(d_\Sigma),\label{eq:spatial_variance_sumcoarray}
\end{align}
with $\tilde{\mu}(\mathcal{D}_\Sigma) \triangleq \frac{1}{N_{\tx} N_{\rx}}\sum_{d_\Sigma\in\mathcal{D}_\Sigma} d_\Sigma \upsilon(d_\Sigma)$ the weighted spatial mean, and
%The spatial mean and variance are now weighted by the \emph{multiplicity} 
$\upsilon(d_{\Sigma})$ 
the multiplicity 
of virtual element $d_\Sigma\in\mathcal{D}_\Sigma$.
\end{proposition}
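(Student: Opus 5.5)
Since \eqref{eq:CRBorth} is already available from \cite{boyer2011performance}, the plan is to show the purely combinatorial identity
\begin{align*}
\tilde{\chi}(\mathcal{D}_\Sigma)=\chi(\mathcal{D}_{\tx})+\chi(\mathcal{D}_{\rx}),
\end{align*}
and then substitute it into \eqref{eq:CRBorth}. The key observation is that a multiplicity-weighted sum over the sum co-array is just a sum over all $N_{\tx}N_{\rx}$ Tx--Rx pairs. By the definition of $\upsilon(d_\Sigma)$, for any function $f$ we have
\begin{align*}
\sum_{d_\Sigma\in\mathcal{D}_\Sigma} f(d_\Sigma)\,\upsilon(d_\Sigma)=\sum_{n=1}^{N_{\tx}}\sum_{m=1}^{N_{\rx}} f\big(d_{\tx}[n]+d_{\rx}[m]\big).
\end{align*}
Equivalently, $\bm{\Upsilon}^{\T}\bm{\Upsilon}$ is diagonal with the multiplicities on its diagonal. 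This is the only step that needs care: each pair contributes to exactly one virtual element, so no pair is double counted or lost.

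Next I apply this with $f(d)=d$. This gives
\begin{align*}
\tilde{\mu}(\mathcal{D}_\Sigma)=\frac{1}{N_{\tx}N_{\rx}}\sum_{n,m}\big(d_{\tx}[n]+d_{\rx}[m]\big)=\mu(\mathcal{D}_{\tx})+\mu(\mathcal{D}_{\rx}).
\end{align*}
I then apply it with $f(d)=(d-\tilde{\mu})^2$. Writing $d_{\tx}[n]+d_{\rx}[m]-\tilde{\mu}=\alpha_n+\beta_m$, with $\alpha_n=d_{\tx}[n]-\mu(\mathcal{D}_{\tx})$ and $\beta_m=d_{\rx}[m]-\mu(\mathcal{D}_{\rx})$, I expand the square. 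The $\alpha_n^2$ terms average to $\chi(\mathcal{D}_{\tx})$ and the $\beta_m^2$ terms average to $\chi(\mathcal{D}_{\rx})$. The cross term factorizes as $\frac{2}{N_{\tx}N_{\rx}}(\sum_n\alpha_n)(\sum_m\beta_m)$, which vanishes because both centered sums are zero. In probabilistic language, the variance of a sum of independent uniform variables is the sum of their variances.

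Substituting the identity into \eqref{eq:CRBorth} yields \eqref{eq:CRBlinearArrayOrth}.

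If one prefers not to rely on \eqref{eq:CRBorth} as a black box, I would first derive it directly from \eqref{eq:singleTargetCRB}. With $\Sorth^{\HT}\Sorth=\frac{1}{N_{\tx}}\mathbf{I}$, the matrix $\Sorth\otimes\mathbf{I}$ is a scaled isometry, so the squared norm equals $\frac{1}{N_{\tx}}\big(\|\dot{\mathbf{a}}_\textrm{tr}\|^2-|\mathbf{a}_\textrm{tr}^{\HT}\dot{\mathbf{a}}_\textrm{tr}|^2/\|\mathbf{a}_\textrm{tr}\|^2\big)$. The entries of $\dot{\mathbf{a}}_\textrm{tr}$ are $j(d_{\tx}[n]+d_{\rx}[m])$ times the corresponding entries of $\mathbf{a}_\textrm{tr}$. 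The bracket therefore becomes $\sum_{n,m}(d_{\tx}[n]+d_{\rx}[m])^2-\frac{1}{N_{\tx}N_{\rx}}\big(\sum_{n,m}(d_{\tx}[n]+d_{\rx}[m])\big)^2$, which is $N_{\tx}N_{\rx}\,\tilde{\chi}(\mathcal{D}_\Sigma)$ by the pair-sum identity above. The resulting factor $\frac{1}{N_{\tx}}\cdot N_{\tx}N_{\rx}=N_{\rx}$ gives the prefactor $\frac{1}{N_{\rx}}$ directly, which makes this route essentially self-contained.
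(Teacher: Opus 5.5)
Your proposal is correct and follows essentially the same route as the paper. Appendix~\ref{app:spatialVarianceSumCoArray} proves the more general matrix-valued identity $\bm{\Sigma}(\mathcal{D}_{\tx})+\bm{\Sigma}(\mathcal{D}_{\rx})=\tilde{\bm{\Sigma}}(\mathcal{D}_\Sigma)$ by writing the sum of the two covariances as a double sum over all Tx--Rx pairs and regrouping by multiplicities, which is your pair-sum and ``variance of a sum'' argument, and then substitutes the result into \eqref{eq:CRBorth}; your optional direct derivation of \eqref{eq:CRBorth} from \eqref{eq:singleTargetCRB} via the scaled isometry $\Sorth\otimes\mathbf{I}$ is also correct and goes slightly beyond the paper, which takes \eqref{eq:CRBorth} from \cite{boyer2011performance}.
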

\begin{proof}\let\proof\relax\let\endproof\relax
    See \cref{app:spatialVarianceSumCoArray}.
    % We can show that (see Appendix \ref{app:spatialVarianceSumCoArray} for the derivation),
    % \begin{align}
    %     \chi(\mathcal{D}_{\tx}) + \chi(\mathcal{D}_{\rx}) = \tilde{\chi}(\mathcal{D}_{\Sigma}).
    % \end{align}
    % Additionally, using Lemma~\ref{thm:optArrayOrthWav}, the sum co-array of two clustered arrays is redundant, when both $N_{\tx}\geq 3$ and $N_{\rx}\geq 3$.
\end{proof}

% \begin{remark}
    \cref{thm:CRBorthWavSumcoArraydependency} reveals 
    % the role of the sum co-array in the (single-target) CRB given orthogonal waveforms: 
    % array geometries with identical sum co-arrays and virtual sensor multiplicities achieve the same CRB.
    that array geometries with \emph{identical sum co-arrays and virtual sensor multiplicities} achieve the same (single-target) \ac{crb} given orthogonal waveforms. For instance, this is the case for the arrays in \cref{fig:clusteredRxContiguousSumCoArray,subfig:NestedArray}. % therefore attain identical \ac{crb}s.
    % \textcolor{red}{[Robin: This point is made more clearly by modifying \cref{fig:clusteredRxContiguousSumCoArray} to have same co-array as \cref{subfig:NestedArray}.]} 
    Note that for optimal (coherent) waveforms, the \ac{crb} \eqref{eq:CRB_sb} does not depend on the sum co-array but only the \ac{rx} array.
% \end{remark}

%The cardinality of the sum co-array %$N_\Sigma \triangleq |\mathcal{D}_\Sigma|$ satisfies $N_\Sigma \leq N_{\tx}N_{\tx}$, where 
An array is said to be \emph{redundant} if $N_\Sigma\!<\!N_{\tx}N_{\rx}$ and nonredundant if $N_\Sigma\!=\!N_{\tx}N_{\rx}$. 
Nonredundant arrays are conventionally employed when launching orthogonal waveforms to maximize the size of the sum co-array, and thereby the number of spatial \ac{dof}. %(cf. \cref{sec:motivation}). %, for a given number of physical \ac{tx} and \ac{rx} sensors. 
Interestingly, however, %\cref{thm:CRBorthWavSumcoArraydependency} implies that 
nonredundant arrays need not minimize CRB, as we now show.% which runs counter to conventional wisdom, as nonredundant arrays 

\subsection{Optimal array geometry for orthogonal waveforms}
% Our goal is to determine the array geometry that results in the minimum single target \ac{crb}, taking into account physically relevant constraints on both the number of antenna elements and the physical aperture. Constraining the maximum \ac{tx} and \ac{rx} apertures to $L_{\tx}$ and $L_{\rx}$, and the number of \ac{tx} and \ac{rx} sensors to $N_{\tx}$ and $N_{\rx}$, respectively, optimal array geometries for orthogonal waveforms are then solutions to 
The array geometry minimizing the single-target \ac{crb}, given orthogonal waveforms and constraints on both the physical aperture and number of antennas, is the solution to 
%Constraining the maximum \ac{tx} and \ac{rx} apertures to $L_{\tx}$ and $L_{\rx}$, and the number of \ac{tx} and \ac{rx} sensors to $N_{\tx}$ and $N_{\rx}$, respectively, optimal array geometries for orthogonal waveforms are then solutions to 
\begin{align}
\underset{
    %\substack{
    \mathcal{D}_{\tx},\mathcal{D}_{\rx}\subset \mathbb{N} 
    %\\ \mathbf{S}\in\mathbb{C}^{T\times N_{\tx}}
    % }
}{\txt{min}}\ \textrm{CRB}_{\omega}(\Sorth) \
\txt{s.t.}\ 
    \begin{cases}
        |\mathcal{D}_{\tx}|=N_{\tx}, |\mathcal{D}_{\rx}|=N_{\rx},\\
        %\|\mathbf{S}\|_{\F}^2\leq 1, \\ 
        \max \mathcal{D}_{\tx}\!-\!\min \mathcal{D}_{\tx}\leq L_{\tx},\\
        \max \mathcal{D}_{\rx}\!-\!\min \mathcal{D}_{\rx}\leq L_{\rx}.\\
        % \chi(\mathcal{D}_{\rx})\geq\chi(\mathcal{D}_{\tx}).
        %\mathbf{S}^{\HT}\mathbf{S} = \frac{1}{N_{\tx}}\mathbf{I}_{N_{\tx}}.
    \end{cases}\label{eq:minCRBorth}
\end{align}
% \begin{subequations}
% \begin{align}
%     \min_{\mathcal{D}_{\tx},\mathcal{D}_{\rx}\subset\mathbb{N}} & \quad %\frac{\sigma^2}{2|\gamma|^2}
%     \frac{1}{|\mathcal{D}_{\rx}|}
%     \left[ 
%      \chi(\mathcal{D}_{\tx})
%     + \chi(\mathcal{D}_{\rx})
%     \right]^{-1} \label{eq:minCRBorth}\\
%     \textrm{s.t.} \ 
%     & \quad 
%     \max \mathcal{D}_{\tx}\!-\!\min \mathcal{D}_{\tx}\leq L_{\tx},
%     \label{eq:DtxApertureConstraint}
%     \\
%     & \quad 
%     \max \mathcal{D}_{\rx}\!-\!\min \mathcal{D}_{\rx}\leq L_{\rx}, \label{eq:DrxApertureConstraint}
%     \\
%     & \quad 
%     |\mathcal{D}_{\tx}| \leq N_{\tx}, \: |\mathcal{D}_{\rx}| \leq N_{\rx}. \label{eq:TotalNumberOfSensors}
% \end{align}
% \end{subequations}
\textcolor{\edited}{
The optimization problems throughout this work primarily involve discrete sensor locations $\mathcal{D}_{\tx}$, $\mathcal{D}_{\rx}$ and antenna counts $N_{\tx}$, $N_{\rx}$. These problems are inherently nonconvex due to their discrete domains. However, by reformulating the objective functions in terms of the arrays' spatial variances, we identify optimal configurations analytically.
% In general, optimization problems where optimization variables include the sets $\mathcal{D}_{\tx}$ and $\mathcal{D}_{\rx}$ or $N_{\tx}$ and $N_{\rx}$ are nonconvex due to the fact that these variables are integer valued. However, by expressing the \ac{crb} in terms of the spatial variances of the \ac{tx} and/or \ac{rx} array, it becomes possible to identify optimal configurations. 
% Although the problem in \eqref{eq:minCRBorth} is nonconvex, we establish its solution in the following corollary.
}
\begin{corollary}%[Optimal array geometry for orthogonal waveforms]
\label{thm:optArrayOrthWav}
    %Suppose that orthogonal waveforms are transmitted (i.e. $\mathbf{S}^{\HT}\mathbf{S}=\frac{1}{N_{\tx}}\mathbf{I}$), that the physical aperture is constrained by $L_{\tx}$ and $L_{\rx}$ and that the number of sensors are constrained to $N_{\tx}$ and $N_{\rx}$ for the \ac{tx} and \ac{rx} array, respectively. 
  %  Suppose that \eqref{eq:minCRBorth} is feasible for given $N_{\tx},N_{\rx},L_{\tx},L_{\tx}\in\mathbb{N}_+$, where $N_{\tx}\leq L_{\tx}+1$ and $N_{\rx}\leq L_{\rx}+1$. Then the solution $(\mathcal{D}^\star_{\tx},\mathcal{D}^\star_{\rx})$ to \eqref{eq:minCRBorth} is 
     Given $N_{\tx},N_{\rx},L_{\tx},L_{\rx}\in\mathbb{N}_+$, where $N_{\tx}\leq L_{\tx}+1$ and $N_{\rx}\leq L_{\rx}+1$, the solution $(\mathcal{D}^\star_{\tx},\mathcal{D}^\star_{\rx})$ to \eqref{eq:minCRBorth} is 
    \begin{align}
        \mathcal{D}^\star_{\tx} = \mathcal{K}_{N_{\tx}}^{L_{\tx}}, \quad \mathcal{D}^\star_{\rx} = \mathcal{K}_{N_{\rx}}^{L_{\rx}}. %\quad \mathbf{S} = \frac{1}{\sqrt{N_{\tx}}}\mathbf{U},
        \label{eq:optArrayGeometryOrthogonal}
    \end{align}
    Moreover, the sum co-array of \eqref{eq:optArrayGeometryOrthogonal} is redundant if $N_{\tx},N_{\rx}\geq 3$.
%where $\mathbf{U}\in\mathbb{C}^{T\times N_{\tx}}$ is an arbitrary matrix with orthonormal columns.
\end{corollary}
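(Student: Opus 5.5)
The plan has two parts: first separate the objective, then exhibit a repeated virtual element.

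\emph{Separating the objective.} By \eqref{eq:CRBorth}, with $N_{\rx}$ fixed, minimizing $\textrm{CRB}_{\omega}(\Sorth)$ in \eqref{eq:minCRBorth} is equivalent to maximizing $\chi(\mathcal{D}_{\tx})+\chi(\mathcal{D}_{\rx})$. The constraints on $\mathcal{D}_{\tx}$ (cardinality $N_{\tx}$, aperture $\le L_{\tx}$) and on $\mathcal{D}_{\rx}$ (cardinality $N_{\rx}$, aperture $\le L_{\rx}$) are decoupled, and the objective is a sum of a term in $\mathcal{D}_{\tx}$ alone and a term in $\mathcal{D}_{\rx}$ alone. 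Hence the problem splits into two independent problems, $\max \chi(\mathcal{D}_{\tx})$ and $\max \chi(\mathcal{D}_{\rx})$, each over integer sets with the given cardinality and aperture.

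\emph{Solving each subproblem.} Each subproblem is exactly the one solved in \cite[Lemma 1]{vanderWerf2025Jointly}: the spatial variance of an $N$-element integer array with aperture at most $L$ is maximized by the clustered array $\mathcal{K}_N^L$ in \eqref{eq:clustered}. The assumption $N\le L+1$ guarantees feasibility, i.e., that $\mathcal{K}_N^L$ has $N$ distinct elements. Since $\chi$ is shift invariant, any translate is also optimal; we fix the representative starting at $0$. This yields \eqref{eq:optArrayGeometryOrthogonal}. For odd $N$ the mirrored cluster split $\mathcal{U}_{(N+1)/2}\cup(L-\mathcal{U}_{(N-1)/2})$ is equally optimal, and we note this rather than claim uniqueness.

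\emph{Redundancy for $N_{\tx},N_{\rx}\ge 3$.} Here we exhibit a single repeated sum. With $N\ge3$, both clusters of $\mathcal{K}_N^L$ have at least one element, and at least one cluster contains two consecutive integers. First take the case where $\{0,1\}\subset\mathcal{D}_{\tx}$ and $\{0,1\}\subset\mathcal{D}_{\rx}$, which holds when $N\ge4$ (even) or $N\ge3$ with the stated convention $\mathcal{U}_{(N-1)/2}$ having at least one element. In that case $0+1=1+0$ gives two distinct Tx--Rx pairs with the same virtual element, so $N_\Sigma<N_{\tx}N_{\rx}$.

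The main obstacle is the corner case $N=3$ with the convention in \eqref{eq:clustered}, namely $\mathcal{K}_3^L=\{0,L-1,L\}$, where only the right cluster has consecutive elements. For that case we use instead the pairs $(L_{\tx}-1)+L_{\rx}=L_{\tx}+(L_{\rx}-1)$, both of which are available since $L-\mathcal{U}_{(N+1)/2}\supseteq\{L-1,L\}$ whenever $N\ge3$. Indeed, for odd $N\ge3$ the right cluster always contains $\{L-1,L\}$, and for even $N\ge4$ so does the right cluster $L-\mathcal{U}_{N/2}$. Using this right-end pair uniformly for all $N_{\tx},N_{\rx}\ge3$ gives a single argument for every case. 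The pairs are distinct as Tx--Rx index pairs, so the virtual element $L_{\tx}+L_{\rx}-1$ has multiplicity $\upsilon\ge 2$, proving redundancy.
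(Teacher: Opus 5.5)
Your proof is correct and follows the paper's route. You separate the objective via \eqref{eq:CRBorth} into two independent spatial-variance maximizations, invoke \cite[Lemma 1]{vanderWerf2025Jointly}, and then exhibit a repeated virtual element.

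The one place you differ is the witness for redundancy, and there your version is the more careful one. The paper asserts $\{0,1\}\subseteq\mathcal{D}_{\tx}$ and $\{0,1\}\subseteq\mathcal{D}_{\rx}$ for all $N_{\tx},N_{\rx}\geq 3$. With the convention in \eqref{eq:clustered}, however, $\mathcal{K}_3^L=\{0,L-1,L\}$, which does not contain $1$ whenever $L>2$. Your right-end pair $(L_{\tx}-1)+L_{\rx}=L_{\tx}+(L_{\rx}-1)$ avoids this, since it is available for every $N\geq 3$ and every pair of apertures.

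One small slip: your intermediate remark that $\{0,1\}\subset\mathcal{K}_N^L$ for odd $N$ once $\mathcal{U}_{(N-1)/2}$ is nonempty is not right. That inclusion needs $(N-1)/2\geq 2$, i.e., $N\geq 5$. Nothing depends on it, because you then use the right-end witness uniformly, so you can simply drop that sentence.
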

\begin{proof}
% Note that minimizing the \ac{crb} for orthogonal waveforms, \eqref{eq:CRBlinearArrayOrth}, for a fixed number of transmit and receive sensors is equivalent to separately maximizing the transmit and receive variance. That is, \eqref{eq:minCRBorth} is equal to
%Note that the feasible set of \eqref{eq:minCRBorth} is nonempty, since $(\mathcal{D}_{\tx},\mathcal{D}_{\rx})=(\mathcal{U}_{N_{\tx}},\mathcal{U}_{N_{\rx}})$ is a feasible point when $N_{\tx}\leq L_{\tx}+1$ and $N_{\rx}\leq L_{\rx}+1$. 
By \eqref{eq:CRBorth}, %\cref{thm:crb_orth}, 
\eqref{eq:minCRBorth} is equivalent to separately maximizing the \ac{tx} and \ac{rx} spatial variances:
\begin{align*}
% \underset{
%     \mathcal{D}_{\tx}\subset \mathbb{N} 
% }{\txt{max}}\ \chi(\mathcal{D}_{\tx}) \
% \txt{s.t.}\ 
%         |\mathcal{D}_{\tx}|=N_{\tx}, 
%         \max \mathcal{D}_{\tx}\!-\!\min \mathcal{D}_{\tx}\leq L_{\tx}, \\
% \underset{
%     \mathcal{D}_{\rx}\subset \mathbb{N} 
% }{\txt{max}}\ \chi(\mathcal{D}_{\rx}) \
% \txt{s.t.}\ 
%         |\mathcal{D}_{\rx}|=N_{\rx}, 
%         \max \mathcal{D}_{\rx}\!-\!\min \mathcal{D}_{\rx}\leq L_{\rx}.
\mathcal{D}_{\tx}^\star=\argmax_{
    \mathcal{D}_{\tx}\subset \mathbb{N} 
}\chi(\mathcal{D}_{\tx})\ \txt{s.t.}\ 
        |\mathcal{D}_{\tx}|=N_{\tx}, 
        \max \mathcal{D}_{\tx}\!-\!\min \mathcal{D}_{\tx}\leq L_{\tx}, \\
\mathcal{D}_{\rx}^\star=\argmax_{
    \mathcal{D}_{\rx}\subset \mathbb{N} 
}\chi(\mathcal{D}_{\rx})\ \txt{s.t.}\ 
        |\mathcal{D}_{\rx}|=N_{\rx}, 
        \max \mathcal{D}_{\rx}\!-\!\min \mathcal{D}_{\rx}\leq L_{\rx}.
\end{align*}
By \cite[Lemma 1]{vanderWerf2025Jointly}, the optimal solutions are given by \eqref{eq:clustered}, which directly yields \eqref{eq:optArrayGeometryOrthogonal} under the stated conditions.
%\footnote{Note that the characteristic of a clustered array maximizing spatial variance for an (linear) array does not directly extend to the spatial variance of the sum co-array, as the sum co-array can contain repeated instances of certain elements.} 
% minimized by maximizing the spatial variance of the sum co-array, or equivalently, by maximizing the spatial variance of the \ac{tx} and \ac{rx} array jointly. Since the clustered array maximizes the spatial covariance for a given number of sensors (\cite[Lemma 1]{vanderWerf2025Jointly}), it follows that the \ac{crb} is minimized if both the \ac{tx} and \ac{rx} arrays are clustered arrays.
% \textcolor{red}{[Robin: proof of redundancy]} 
When both $N_{\tx}\geq 3$ and $N_{\rx}\geq 3$, we have $\{0,1\}\!\subseteq\!\mathcal{D}_{\tx}$ and $\{0,1\}\!\subseteq\!\mathcal{D}_{\rx}\!\implies\!\upsilon(1\!=\!0\!+\!1\!=\!1\!+\!0)\!\geq\!2$, i.e., the array is redundant. 
%that $\{d_{\tx},d_{\rx}\}= \{0,1\}$ and $\{d_{\tx},d_{\rx}\}= \{1,0\}$ lead to the same co-array element and hence the array is redundant.
\end{proof}
Note that the (double clustered) array in \cref{thm:optArrayOrthWav} is an optimal array, not only for orthogonal waveforms, but also for coherent waveforms since the \ac{rx} array is a clustered array, as discussed in \cref{subsec:existingCRBoptWavArrayDesigns}.

% \cref{subfig:clusteredTxRx} illustrates the optimal array %geometry in 
% \eqref{eq:optArrayGeometryOrthogonal}. 
% %Interestingly, despite its widespread use, the canonical \ac{mimo} array does not minimize the (single-target) \ac{crb}, even when transmitting orthogonal waveforms. In fact, \cref{thm:optArrayOrthWav} and \cref{thm:CRBorthWavSumcoArraydependency} reveal that the optimal array geometry for orthogonal waveforms is inherently redundant.
% % \begin{remark}
% %     The optimal array geometry has a redundant sum co-array if both $N_{\tx}\geq 3$ and $N_{\rx}\geq 3$.
% % \end{remark}
% % \begin{proof}
% %     Using Lemma~\ref{thm:optArrayOrthWav}, the sum co-array of two clustered arrays is redundant, when both $N_{\tx}\geq 3$ and $N_{\rx}\geq 3$.
% % \end{proof}
% For comparison, \cref{subfig:NestedArray} shows a widely-considered %canonical 
% nonredundant array \cite{bliss2007mimo,li2007mimoradar,chen2008mimoradarspace}:
\cref{subfig:clusteredTxRx} illustrates the optimal array %geometry in 
\eqref{eq:optArrayGeometryOrthogonal}, and \cref{subfig:NestedArray} a widely-considered %canonical 
nonredundant array \cite{bliss2007mimo,li2007mimoradar,chen2008mimoradarspace}:
    \begin{align}
        \mathcal{D}_{\tx}\!=\!\mathcal{U}_{N_{\tx}},\ 
        \mathcal{D}_{\rx}\!=\!N_{\tx}\:\mathcal{U}_{N_{\rx}},\label{eq:canonicalmimo}
    \end{align} 
which has a nested structure consisting of a standard \ac{ula} (\ac{tx}) and dilated \ac{ula} (\ac{rx}). 
% \textcolor{red}{[Robin: Also showing the multiplicities of the co-array elements in \cref{fig:ClustTxRxvsNested} would connecting this subsection better to \cref{thm:CRBorthWavSumcoArraydependency}].} 
Interestingly, the %widely-considered \cite{bliss2007mimo,li2007mimoradar,chen2008mimoradarspace} 
``canonical MIMO array'' in \eqref{eq:canonicalmimo} does \emph{not} minimize the \ac{crb} given 
% Note that \eqref{eq:optArrayGeometryOrthogonal} is also \emph{an} optimal array geometry when launching optimal coherent waveforms \eqref{eq:optTxWaveform_sum} and $N_{\tx}\leq N_{\rx}$. %An example of such an array is shown in \cref{subfig:clusteredTxRx}.
% This is in contrast to the beamforming case, where the \ac{tx} array can be freely designed to maximize identifiability (cf. \cref{fig:clusteredRxContiguousSumCoArray}). Thus, for 
orthogonal waveforms. %there exists a 
This reveals a 
fundamental trade-off between \ac{crb} and 
%identifiability. 
the number of spatial \ac{dof}. 
The next subsection examines this trade-off and its implications in the multi-target scenario.
% \textcolor{red}{[Robin: This subsection could use a strong conclusion (perhaps also an example if space allows). A key reason for (??) is rewriting the CRB in terms of the redundancy of the sum co-array. Indeed, the CRB minimizing array is actually redundant, not nonredundant, as we later show. This may be surprising to the reader and should therefore be highlighted.]}
\begin{figure}
    \newcommand{\dataLoc}{Data/ClusteredTxRx}
    \newcommand{\xmax}{60}
    \newcommand{\xmin}{0}
    \newcommand{\fwidth}{9}
    \newcommand{\msize}{1.5}
	\pgfplotsset{every x tick label/.append style={font=\tiny},every y tick label/.append style={font=\scriptsize}}
    \subfloat[
    ``Double clustered'' \ac{tx} and \ac{rx} array.
    %the optimal array geometry for orthogonal waveforms with a redundant sum co-array. Its multiplicities are given by $\mathbf{v} = {[1,2,3,2,1,2,4,6,4,2,1,2,3,2,1]}$.
    ]{
    \centering
		\begin{tikzpicture}
			\begin{axis}[width= \fwidth cm ,height=2.5 cm,xmin=\xmin-0.2,xmax=\xmax+0.2,xtick={\xmin,\xmin+5,...,\xmax},ytick={-2,0,1},yticklabels={$\mathcal{D}_\Sigma$,$\mathcal{D}_{\rx}$,$\mathcal{D}_{\tx}$},ytick style={draw=none},ymin=-2.5,ymax=1.5,xticklabel shift = 0 pt,xtick pos=bottom,axis line style={draw=none}]
				\addplot[orange,only marks,line width = 0.7,mark=star,mark size=2.2,y filter/.code={\pgfmathparse{\pgfmathresult*0+1}}] table[x=D,y=D]{\dataLoc/Dt.dat};
				\addplot[blue,only marks,mark=square*,mark size=\msize,y filter/.code={\pgfmathparse{\pgfmathresult*0}}] table[x=D,y=D]{\dataLoc/Dr.dat};
                \addplot[gray,only marks,mark=*,mark size=\msize, 
                fill opacity=0.7,
                draw opacity=0.5,
                y filter/.code={\pgfmathparse{\pgfmathresult}},x filter/.code={\pgfmathparse{\pgfmathresult}}] coordinates {
                (1, -1.75) 
                (2, -1.75) (2, -1.5)
                (3,-1.75) 
                (28, -1.75) (28, -1.5)
                (29, -1.75) (29, -1.5) (29, -1.25) (29, -1) 
                (30, -1.75) (30, -1.5) (30, -1.25) (30, -1) (30, -0.75) (30, -0.5)
                (31, -1.75) (31, -1.5) (31, -1.25) (31, -1) 
                (32, -1.75) (32, -1.5)
                (57, -1.75) 
                (58, -1.75) (58, -1.5)
                (59,-1.75) 
                };
				\addplot[only marks,mark=*,mark size=\msize,y filter/.code={\pgfmathparse{\pgfmathresult*0-2}},x filter/.code={\pgfmathparse{\pgfmathresult}}] table[x=D,y=D]{\dataLoc/D_Sigma.dat};
                % (16, -1) (17, -1) (18, -1) (30, -1) (31, 1) (32, 1) (33, 1) (45, 1) (46, 1) (47, 1) (48, 1) };
			\end{axis}
		\end{tikzpicture}
        \label{subfig:clusteredTxRx}
        }%
    \vspace{-0.25cm}
    \renewcommand{\dataLoc}{Data/NestedNtx6Nrx6}
    \\
	\centering
    \subfloat[
    %Canonical \ac{mimo} array with a nonredundant contiguous sum co-array, i.e., $\mathbf{v} = \mathbf{1}^{\T}_{N_{\tx}N_{\rx}}$.
   ``Canonical \ac{mimo}'' array with nested structure.
    ]{
    \begin{tikzpicture}
			\begin{axis}[width= \fwidth cm ,height=2.5 cm,xmin=\xmin-0.2,xmax=\xmax+0.2,xtick={\xmin,\xmin+5,...,\xmax},ytick={-1,0,1},yticklabels={$\mathcal{D}_\Sigma$,$\mathcal{D}_{\rx}$,$\mathcal{D}_{\tx}$},ytick style={draw=none},ymin=-1.5,ymax=1.5,xticklabel shift = 0 pt,xtick pos=bottom,axis line style={draw=none}]
				\addplot[orange,only marks,line width = 0.7,mark=star,mark size=2.2,y filter/.code={\pgfmathparse{\pgfmathresult*0+1}}] table[x=D,y=D]{\dataLoc/Dt.dat};
				\addplot[blue,only marks,mark=square*,mark size=\msize,y filter/.code={\pgfmathparse{\pgfmathresult*0}}] table[x=D,y=D]{\dataLoc/Dr.dat};
				\addplot[only marks,mark=*,mark size=\msize,y filter/.code={\pgfmathparse{\pgfmathresult*0-1}},x filter/.code={\pgfmathparse{\pgfmathresult}}] table[x=D,y=D]{\dataLoc/D_Sigma.dat};
			\end{axis}
		\end{tikzpicture}
        \label{subfig:NestedArray}
    }%
    % \vspace{-0.1cm}
    \caption{
    % Given a physical aperture constraint, $L = 30$, $N_{\tx}=6$ \ac{tx} and $N_{\rx}=6$ \ac{rx} sensors, (a) the optimal array geometry for orthogonal waveforms, i.e., the minimizer of \eqref{eq:minCRBorth}, (b) the nested array.
    Key array geometries illustrated for $N_{\tx}=N_{\rx}=6$ \ac{tx}/\ac{rx} sensors, and physical aperture $L = 30$. The optimal array geometry for orthogonal waveforms \protect\subref{subfig:clusteredTxRx} %, i.e., the minimizer of \eqref{eq:minCRBorth}, and 
    has a redundant sum co-array (multiplicities shown in grey). 
    %Its multiplicities are given by $\mathbf{v} = {[1,2,3,2,1,2,4,6,4,2,1,2,3,2,1]}$.
    The conventional \ac{mimo} array \protect\subref{subfig:NestedArray} has a nonredundant contiguous sum co-array.%, i.e., $\mathbf{v} = \mathbf{1}^{\T}_{N_{\tx}N_{\rx}}$.
    }
    \vspace{-0.5cm}
    \label{fig:ClustTxRxvsNested}
\end{figure}

\subsection{Implications in multi-target scenario}
The inherent redundancy of the optimal array geometry for orthogonal waveforms improves \ac{crb} at the expense of fewer distinct and contiguous sum co-array elements $N_\Sigma$, which reduces the maximum number of identifiable targets \cite{bliss2003mimo,li2007onparameter}.%\textcolor{red}{[REF]}.

\begin{proposition}[Sum co-array of clustered array]\label{thm:clusteredsca}
    Let $N_{\tx}$ and $N_{\rx}$ be even. 
    If 
    $L_{\tx}=L_{\rx}=L\geq \max(N_{\tx},N_{\rx})+\min(N_{\rx},N_{\tx})/2-2$ 
   % $L_{\tx}=L_{\rx}>N_{\tx}+N_{\rx}$, 
    then the sum co-array of the optimal array %clustered %\ac{tx} and \ac{rx} arrays in 
    \eqref{eq:optArrayGeometryOrthogonal} has cardinality
    \begin{align}
        N_{\Sigma} = N_{\tx}+N_{\rx} + \max(N_{\tx},N_{\rx})-3.\label{eq:sca_size}
    \end{align} 
    %distinct elements, implying that at most $N_{\Sigma}/2$ targets are uniquely identifiable.
    Moreover, if 
%    $L_{\tx} +1 \leq N_{\tx} + N_{\rx}/2$ and $L_{\rx} +1 \leq N_{\tx}/2 + N_{\rx}$, then 
    $L\leq \max(N_{\tx},N_{\rx})+\min(N_{\rx},N_{\tx})/2-2$ then 
    $\mathcal{D}_\Sigma=\mathcal{U}_{2L}$, i.e., 
    the sum co-array is %contains 
   % $N_\Sigma$ 
    %virtual elements are 
    contiguous.
    %elements.
\end{proposition}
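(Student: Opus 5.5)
The plan is to write out the two clustered arrays explicitly, compute their sum co-array as a union of a few integer intervals, and then read off both claims from whether consecutive intervals overlap. Let $a\triangleq N_{\tx}/2$ and $b\triangleq N_{\rx}/2$, which are integers because $N_{\tx},N_{\rx}$ are even. By \eqref{eq:clustered} and \eqref{eq:optArrayGeometryOrthogonal}, $\mathcal{D}^\star_{\tx}=\mathcal{U}_a\cup(L-\mathcal{U}_a)$ and $\mathcal{D}^\star_{\rx}=\mathcal{U}_b\cup(L-\mathcal{U}_b)$. The sum co-array is symmetric under exchanging the roles of \ac{tx} and \ac{rx}, so we may assume $a\geq b$ without loss of generality. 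Then $\max(N_{\tx},N_{\rx})=2a$, $\min(N_{\tx},N_{\rx})=2b$, and the threshold in the statement becomes $2a+b-2$.

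First, we use that the sum of two integer intervals is an interval, $\mathcal{U}_p+\mathcal{U}_q=\mathcal{U}_{p+q-1}$. Expanding $\mathcal{D}_\Sigma$ by distributivity over the unions gives four pieces:
\begin{align*}
\mathcal{U}_a+\mathcal{U}_b &= \{0,\dots,a+b-2\},\\
\mathcal{U}_a+(L-\mathcal{U}_b) &= \{L-b+1,\dots,L+a-1\},\\
(L-\mathcal{U}_a)+\mathcal{U}_b &= \{L-a+1,\dots,L+b-1\},\\
(L-\mathcal{U}_a)+(L-\mathcal{U}_b) &= \{2L-a-b+2,\dots,2L\}.
\end{align*}
Since $a\geq b$, the two middle pieces are nested and their union is the centered interval $\{L-a+1,\dots,L+a-1\}$ of size $2a-1$. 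Hence $\mathcal{D}_\Sigma$ is the union of three intervals, with sizes $a+b-1$, $2a-1$ and $a+b-1$. The configuration is symmetric about $L$, so only the gap between the first and the middle interval needs checking.

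For \eqref{eq:sca_size}: the first two intervals are disjoint exactly when $a+b-2<L-a+1$, i.e., $L\geq 2a+b-2$, which is the stated hypothesis. By symmetry the middle and last intervals are then also disjoint. So $N_\Sigma=2(a+b-1)+(2a-1)=4a+2b-3=N_{\tx}+N_{\rx}+\max(N_{\tx},N_{\rx})-3$.

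For contiguity: the union of the three intervals is gap-free exactly when consecutive intervals overlap or abut, i.e., $a+b-2\geq L-a$, which is $L\leq 2a+b-2$. By symmetry the same condition handles the upper junction. Then $\mathcal{D}_\Sigma$ equals the full integer range from $0$ to $2L$. At $L=2a+b-2$ the intervals abut, so both conclusions hold simultaneously, which is consistent with the two inequalities in the statement sharing an endpoint.

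I expect no real obstacle; the only delicate step is the bookkeeping. Specifically, I must verify the nesting of the two cross terms using $a\geq b$, and check that the clustered arrays are well defined, which requires $N\leq L+1$ and holds under the stated hypotheses. I also have to reconcile the notation: the set obtained is $\{0,\dots,2L\}$, i.e., $2L+1$ contiguous virtual elements. I would state it explicitly in that form so that it is clear which ULA the identity $\mathcal{D}_\Sigma=\mathcal{U}_{2L}$ in the statement refers to under the convention $\mathcal{U}_N=\{0,\dots,N-1\}$.
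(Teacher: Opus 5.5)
Your proof is correct and is essentially the paper's argument: the same four translated intervals, the two cross terms merged into one interval of size $\max(N_{\tx},N_{\rx})-1$, and the same gap condition $L\geq 2a+b-2$ (resp.\ $\leq$) at the junctions, with your reflection about $L$ replacing the paper's direct check that $g_b=g_a$. You are also right that the contiguous co-array is $\{0,\dots,2L\}=\mathcal{U}_{2L+1}$, so the statement's $\mathcal{U}_{2L}$ is off by one; the only fix needed is a wording one, since for $a>b$ the two cross terms are not nested but overlap on $\{L-b+1,\dots,L+b-1\}$, and this overlap is what makes their union $\{L-a+1,\dots,L+a-1\}$.
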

 \begin{proof}\let\proof\relax\let\endproof\relax
   See \cref{app:ProofOfclusteredsca}.
\end{proof}

Hence, the optimal array \eqref{eq:optArrayGeometryOrthogonal} has $N_\Sigma = \mathcal{O}(\max(N_{\tx},N_{\rx}))$ sum co-array elements, compared to \(N_{\tx}N_{\rx}\) for a nonredundant array, such as the canonical \ac{mimo} array \eqref{eq:canonicalmimo}. 
The reduction in $N_\Sigma$, and thereby identifiability, is the price paid for a lower \ac{crb}. 
Scenarios prioritizing low estimation error may 
{\color{\edited} therefore} 
%prefer
consider designs informed by the clustered array, while those requiring high identifiability would benefit from nonredundant designs. 
%We note that analytically characterizing CRB-optimal arrays and waveforms in the multi-target case is still largely unexplored, and thereby a pertinent direction for future work.% \textcolor{red}{[Add a two source MSE simulation to \cref{sec:simulations}?]}
%We note that practical estimation performance (e.g., of the \ac{mle}) at moderate or low \ac{snr} arrays is still largely unexplored, and thereby a pertinent direction for future work.
The practical estimation performance of these array geometries is compared numerically in \cref{sec:simulations}. 
 %{\color{\edited}
%We note that deriving array geometries that minimize \ac{crb} (or \ac{mse}, more generally) in the multi-target scenario remains an important and challenging open problem for future work.}

% However, despite its widespread use, the canonical \ac{mimo} array does not minimize the (single-target) CRB, even when transmitting orthogonal waveforms. In fact, the optimal array is inherently redundant. This redundancy reduces identifiability by limiting the number of distinct sum co-array elements. %This trade-off is expected: minimizing the \ac{crb} prioritizes local sensitivity rather than global resolution.

%\subsection{Implications for non-asymptotic regime}
% \textcolor{red}{[\textbf{To do.}]}
% \begin{itemize}
%     \item base line for asymptotic (high SNR)
%     \item 
% \end{itemize}

Thus far our discussion has assumed a given number of \ac{tx} and \ac{rx} elements. However, the number of available \ac{rf} chains might only constrain the \emph{combined} number of \ac{tx} $+$ \ac{rx} sensors \cite{liu2025doaestimation}. This raises two questions addressed in the next section: for a \emph{total sensor budget}, 
\begin{enumerate*}[label=(\roman*)]
    \item what is the optimal allocation between \ac{tx} and \ac{rx} elements, and
    \item how much can the \ac{crb} be improved via optimal allocation?
\end{enumerate*}

\section{Optimal sensor allocation between Tx and Rx%Optimal (unequal) allocation of \ac{tx} and \ac{rx} sensors%: Advantages of unequal allocation% can provide lower CRB
%Optimal allocation of sensors between \ac{tx} and \ac{rx} arrays
}\label{sec:optimalAllocation}

This section investigates 
optimal transmit and receive sensor allocations minimizing the \ac{crb} 
%how to optimally allocate sensors between the transmit and receive arrays 
in case of both orthogonal and coherent (optimal) transmit waveforms. Specifically,
given sensor budget $N$ and class of array geometries $\mathcal{C}$, we solve:
\begin{align}
\underset{
\substack{
    N_{\tx},N_{\rx}\in\mathbb{N}_+\\
    \mathcal{D}_{\tx},\mathcal{D}_{\rx}\subset\mathbb{N}}
}{\txt{min.}}\  
\text{CRB}_{\omega}(\mathbf{S})\ 
\txt{s.t.} 
\begin{cases}
    N_{\tx}+N_{\rx} \leq N\\
     (\mathcal{D}_{\tx},\mathcal{D}_{\rx})\!\in\!\mathcal{C}.
\end{cases}
\label{eq:minCRBDivideNgen}
\end{align}
% \textcolor{\edited}{The problem in \eqref{eq:minCRBDivideNgen} is nonconvex. However, we can find its solution by expressing the \ac{crb} in terms of the spatial variance of the \ac{tx} and \ac{rx} arrays.}
The following two classes of array geometries are considered: 
\begin{enumerate}
    \item Canonical MIMO array:
    \begin{align}
        \mathcal{C}%(N_{\tx},N_{\rx})
        \!=\!%\mathcal{C}_{\mathrm{m}}(N_{\tx},N_{\rx})\triangleq
        \{(\mathcal{D}_{\tx},\mathcal{D}_{\rx}):  \mathcal{D}_{\tx}\!=\!\mathcal{U}_{N_{\tx}},\ 
        \mathcal{D}_{\rx}\!=\!N_{\tx}\:\mathcal{U}_{N_{\rx}}\}.\label{eq:class_canonicalmimo}
    \end{align} 
    \item Double-clustered (optimal) array in \eqref{eq:optArrayGeometryOrthogonal} of aperture $L$:
    \begin{align}
        \mathcal{C}%(N_{\tx},N_{\rx})
        =\{(\mathcal{D}_{\tx},\mathcal{D}_{\rx}) :  \mathcal{D}_{\tx} = \mathcal{K}_{N_{\tx}}^{L}, \mathcal{D}_{\rx} = \mathcal{K}_{N_{\rx}}^{L}\}. \label{eq:class_doubleclustered}
    \end{align}
\end{enumerate}
% The canonical MIMO array \eqref{eq:class_canonicalmimo} serves as an important benchmark due to its wide-spread use in MIMO radar \cite{bliss2007mimo,li2007mimoradar,chen2008mimoradarspace}. The configuration consists of a standard \ac{ula} (\ac{tx}) and dilated \ac{ula} (\ac{rx}), as illustrated in 
% \cref{subfig:NestedArray} for $N_{\tx} = 5$ and $N_{\rx} = 7$. 

For the canonical MIMO array, $N_{\tx}$ and $N_{\rx}$ are conventionally chosen to be approximately equal ($N_{\tx}=\lfloor N/2\rfloor$, $N_{\rx}=N-N_{\tx}$) to maximize the size of the sum co-array (which is nonredundant and contiguous). %given a total sensor budget. 
This maximizes the number of spatial degrees of freedom available when transmitting independent\textcolor{\edited}{,} typically orthogonal\textcolor{\edited}{,} waveforms. Surprisingly, however, this widely employed sensor allocation strategy yields \emph{suboptimal} \ac{crb} both in case of \labelcref{eq:class_canonicalmimo,eq:class_doubleclustered}, %the canonical MIMO and double-clustered array, 
as we now show.% Similarly, we also show that equal sensor allocation is suboptimal in case of the double-clustered (optimal) array.

\subsection{Orthogonal waveforms: Equal allocation is suboptimal}\label{sec:alloc_orth}
Given orthogonal waveforms, $\mathbf{S}^{\HT}\mathbf{S}\!=\!\frac{1}{N_{\tx}}\mathbf{I}$, \eqref{eq:minCRBDivideNgen} becomes
\begin{equation}
\underset{
   \substack{
    N_{\tx},N_{\rx}\in\mathbb{N}\\
    \mathcal{D}_{\tx},\mathcal{D}_{\rx}\subset\mathbb{N}}
}{\txt{min}} 
%\frac{\sigma^2}{2|\gamma|^2}
\frac{1}{N_{\rx}}
    \left[ 
     \chi(\mathcal{D}_{\tx})\!+\!\chi(\mathcal{D}_{\rx})
    \right]^{-1}
\txt{s.t.}
\begin{cases}
    N_{\tx}+N_{\rx} \leq N\\
     (\mathcal{D}_{\tx},\!\mathcal{D}_{\rx})\!\in\!\mathcal{C}.%(N_{\tx},\!N_{\rx})
\end{cases}
\label{eq:minCRBDivideNorth}
\end{equation}
% \textcolor{\edited}{While the problem in \eqref{eq:minCRBDivideNorth} is (still) nonconvex, the optimal sensor allocation can be identified, as shown in the next theorem.}
%\subsubsection{Canonical MIMO array}
\begin{theorem}[Canonical MIMO array] \label{thm:divideSensorsOrthMimo}
Let $\mathbf{S}^{\HT}\mathbf{S}\!=\!\frac{1}{N_{\tx}}\mathbf{I}$ and $\mathcal{C}$ be given by \eqref{eq:class_canonicalmimo}\textcolor{\edited}{, with $N$ divisible by $5$ and even.} As $N\rightarrow\infty$, \eqref{eq:minCRBDivideNorth} has the following solution: %for the Canonical MIMO array \eqref{eq:class_canonicalmimo}:
    \begin{align}
        N_{\tx}^\star=\frac{2}{5}N, \ N_{\rx}^\star=\frac{3}{5}N.\label{eq:optalloc_mimo}
    \end{align}
\end{theorem}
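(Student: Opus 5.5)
By \eqref{eq:CRBorth}, the plan is to turn \eqref{eq:minCRBDivideNorth} into an explicit integer program in $N_{\tx}$ alone, and then analyze it with a continuous relaxation followed by a discrete perturbation argument.

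\emph{Step 1: closed form of the objective.} For a ULA, $\chi(\mathcal{U}_M)=(M^2-1)/12$, and dilation scales the spatial variance quadratically. Hence, for \eqref{eq:class_canonicalmimo},
\begin{align*}
\chi(\mathcal{D}_{\tx})=\frac{N_{\tx}^2-1}{12},\qquad
\chi(\mathcal{D}_{\rx})=\frac{N_{\tx}^2(N_{\rx}^2-1)}{12}.
\end{align*}
Their sum simplifies to $(N_{\tx}^2N_{\rx}^2-1)/12$, which is consistent with \cref{thm:CRBorthWavSumcoArraydependency}: the sum co-array is the contiguous nonredundant ULA $\mathcal{U}_{N_{\tx}N_{\rx}}$. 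Minimizing the \ac{crb} is therefore equivalent to maximizing
\begin{align*}
g(N_{\tx},N_{\rx})=N_{\rx}\left(N_{\tx}^2N_{\rx}^2-1\right)=N_{\tx}^2N_{\rx}^3-N_{\rx}.
\end{align*}

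\emph{Step 2: reduce to one variable.} For $N_{\tx}\ge1$ and $N_{\rx}\geq 1$, $g$ is increasing in each argument. Its partial derivative in $N_{\rx}$ is $3N_{\tx}^2N_{\rx}^2-1>0$, and the increase in $N_{\tx}$ is obvious. So the budget constraint is active, $N_{\tx}+N_{\rx}=N$. Writing $N_{\tx}=k$, we must maximize
\begin{align*}
h(k)=k^2(N-k)^3-(N-k),\qquad k\in\{1,\dots,N-1\}.
\end{align*}

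\emph{Step 3: continuous maximizer.} The dominant term $p(k)=k^2(N-k)^3$ satisfies $(\log p)'=2/k-3/(N-k)$. This vanishes only at $k=2N/5$, where $p$ attains its unique maximum on $(0,N)$, and $p$ is unimodal. Because $N$ is divisible by $5$, $k_0=2N/5$ is an integer and $N_{\rx}=3N/5$.

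\emph{Step 4: discrete perturbation (main obstacle).} We must show that the correction $-(N-k)$ does not move the integer maximizer away from $k_0$. Since $h$ is the unimodal $p$ plus an affine term, it suffices to show $h(k_0)>h(k_0\pm1)$ for large $N$.

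Taylor expansion around the stationary point gives
\begin{align*}
p(k_0)-p(k_0\pm1)=-\tfrac12 p''(k_0)+O(N^2).
\end{align*}
Here $-p''(k_0)=p(k_0)\left(2/k_0^2+3/(N-k_0)^2\right)=c\,N^3$ for some constant $c>0$, so this gap is $\Theta(N^3)$.

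The affine correction changes by exactly $\pm1$ between neighbours. Hence $h(k_0)-h(k_0\pm1)=\Theta(N^3)\mp1>0$ for all sufficiently large $N$, i.e., as $N\to\infty$.

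To make this rigorous over all $k$, not just the neighbours, I would use the following argument. The function $h'=p'+1$ differs from $p'$ by a constant. Moreover, $p'$ changes sign only once on $(0,N)$, and $|p'(k)|\gtrsim N^3$ at distance $\geq1$ from $k_0$ for $k$ in a compact subinterval. Near the endpoints, $h$ is small compared with $h(k_0)=\Theta(N^5)$, so those $k$ are excluded directly.

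The evenness assumption is not needed in this argument; I would only check that it is harmless and keep it for consistency with the paper's statement. The conclusion is $N_{\tx}^\star=2N/5$ and $N_{\rx}^\star=3N/5$.
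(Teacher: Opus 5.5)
Your proposal is correct and follows the paper's route. The paper likewise uses $\chi(\mathcal{U}_n)=(n^2-1)/12$ and $\chi(a\mathcal{D})=a^2\chi(\mathcal{D})$ to reduce the objective to $12/\big(N_{\rx}(N_{\tx}^2N_{\rx}^2-1)\big)$, and then minimizes the leading term $12\,\theta^{-2}(1-\theta)^{-3}N^{-5}$ at $\theta=2/5$.

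Your Steps 2 and 4 go beyond the paper, which simply sets $N_{\rx}=N-N_{\tx}$ and compares only the leading-order limit: you show that the budget constraint is active and that the exact integer optimum is $2N/5$ for large $N$. One correction within Step 4: $h$ itself is not unimodal, because $h'=p'+1>0$ near $k=N$, so your first ``it suffices'' claim does not hold as stated. It is your subsequent endpoint argument that actually closes Step 4.
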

\begin{proof}
Given \eqref{eq:class_canonicalmimo}, problem \eqref{eq:minCRBDivideNorth} 
%Assuming the array geometry is fixed to the canonical MIMO array, i.e., \eqref{eq:canonicalMIMOarrayTxRx}, then \eqref{eq:minCRBorthDivideNOrth} 
can be rewritten as
\begin{equation}
    \min_{N_{\tx},N_{\rx}\in\mathbb{N}_+}
    \frac{1}{N_{\rx}}
    \left[
    \chi\left(\mathcal{U}_{N_{\tx}}\right)\!+\!\chi\left(N_{\tx}\mathcal{U}_{N_{\rx}}\right)
    \right]^{-1}\ 
    \textrm{s.t.}\
    N_{\tx}\!+\!N_{\rx}\!\leq\!N.
\label{eq:divideNoverMIMOarray}
\end{equation}
First, note that by \eqref{eq:SpatialVariance}, for any $a\geq 0$ and finite set $\mathcal{D}$:
\begin{align}
    \chi\left(a\mathcal{D}\right) = a^2\chi\left(\mathcal{D}\right).\label{eq:chi_const}
\end{align}
Furthermore, the $n$-sensor \ac{ula} can be verified to satisfy:%in case of a ç with $N$ sensors:
\begin{align}
    \chi\left(\mathcal{U}_n\right) = \frac{1}{12}(n^2-1).
    \label{eq:spatialVarianceULA}
\end{align}
Using \labelcref{eq:chi_const,eq:spatialVarianceULA}, and setting $N_{\tx} = \theta N$ and $N_{\rx}=N-N_{\tx} = (1-\theta)N$, for $\theta\in[0,1]$, the objective function in \eqref{eq:divideNoverMIMOarray} asymptotically approaches
\begin{align}
    \lim_{N\rightarrow \infty}\frac{1}{N_{\rx}}
    \frac{1}{
     \chi\left(\mathcal{U}_{N_{\tx}}\right) + N_{\tx}^2\chi\left(\mathcal{U}_{N_{\rx}}\right)
    }
    & = \lim_{N\rightarrow \infty} \frac{1}{N_{\rx}\frac{1}{12}(N_{\tx}^2 N^2_{\rx} -1)} \notag \\
    & = \lim_{N\rightarrow \infty} \frac{12}{\theta^2(1-\theta)^3}N^{-5} \notag \\
    & \geq \lim_{N\rightarrow \infty} 10\cdot\frac{625}{18}N^{-5}
    \label{eq:rateCRBMIMOarrayOrthWav}
\end{align}
Here equality is reached when $\theta = \frac{2}{5}$.
\end{proof}
\textcolor{\edited}{
The assumption in \cref{thm:divideSensorsOrthMimo} that $N$ is even and divisible by $5$ is made to simplify the exposition and ensure that $N_{\tx}^\star=\frac{2}{5}N$, and $N_{\rx}^\star=\frac{3}{5}N$ are integers. Extensions to other values are possible %but include the use of floor/ceiling operators which clutter the asymptotic analysis without changing the physical result.
but do not substantially change the result.
}

\cref{thm:divideSensorsOrthMimo} shows that equal sensor allocation, %that maximizes the aperture of the sum co-array of the canonical MIMO array, 
i.e., $N_{\tx} =N_{\rx} = \frac{1}{2}N$ (when $N$ is even), does \emph{not} minimize the \ac{crb} when orthogonal waveforms are used. Indeed, by \eqref{eq:optalloc_mimo}, using more \ac{rx} than \ac{tx} sensors is optimal. This can be attributed to the factor $N_{\rx}$ in the denominator of the CRB expression \eqref{eq:CRBorth}. %\eqref{eq:CRBlinearArrayOrth}. 
%\subsubsection{Double-clustered (optimal) array}
\textcolor{\edited}{This result aligns with a similar observation made by \cite{wang2019further} concerning sensor allocation in nested %and co-prime 
arrays in the context of 
passive sensing.%, illuminating the impact of the waveform on sensor allocation in active sensing.
}

\begin{remark}\label{rem:allocation_crb_improvement}
Given orthogonal waveforms, the canonical \ac{mimo} array can improve the \ac{crb} by $10\%$ by allocating \ac{tx}/\ac{rx} sensors optimally, i.e., unequally instead of equally, since
% For orthogonal waveforms, the canonical \ac{mimo} array can reduce the \ac{crb} by about $10\%$ through optimal (i.e., unequal) allocation of \ac{tx}/\ac{rx} sensors:
\begin{align*}
\lim_{N\to\infty}
\frac{\mathrm{CRB}_{\omega}(\Sorth)\big|_{N_{\tx} = \frac{1}{2}N}^{N_{\rx} = \frac{1}{2}N}}{\mathrm{CRB}_{\omega}(\Sorth)\big|_{N_{\tx} = \frac{2}{5}N}^{N_{\rx} = \frac{3}{5}N}}
=
    \frac{(\frac{1}{2})^{-2} (\frac{1}{2})^{-3}}{(\frac{2}{5})^{-2} (\frac{3}{5})^{-3}} = \frac{355}{321} \approx 1.106.
\end{align*}
%
% At the same time, optimal allocation for the canonical \ac{mimo} array does \emph{not} maximize the number of sum co-array elements 
However, this optimal allocation does \emph{not} maximize the number of sum co-array elements 
$N_\Sigma(N_{\tx},N_{\rx})\!=\!N_{\rx}N_{\tx}$, as $N_\Sigma(\frac{2}{5} N,\frac{3}{5}N)/N_\Sigma(\frac{1}{2} N,\frac{1}{2}N)
\!=\!24/25\!\approx\!0.96$.
% , i.e., optimal allocation yields $\approx\!4\%$ fewer spatial \ac{dof} (corresponding to a  smaller virtual array aperture). 
% Hence, a gain of $10\%$ in \ac{crb} is traded off to a loss of $4\%$ in \ac{dof}.
Hence, a $10\%$ improvement in \ac{crb} comes at the cost of roughly $4\%$ fewer spatial \ac{dof} (i.e., a smaller virtual aperture).
%
% Moreover, the \ac{crb} of the canonical \ac{mimo} array with equi-allocation can be improved $\approx 5.5\times$ via optimal array design alone (with equal sensor allocation) and $\approx 11\times$ via jointly optimal array design and (unequal) sensor allocation.\footnote{Here, $\tfrac{12/(1/2)^5}{625/9}\!\approx\!5.5$ and $\tfrac{12/(1/2)^5}{625/18}\!\approx\!11$ follow from setting $\theta\!=\!\tfrac{1}{2}$ in \eqref{eq:rateCRBMIMOarrayOrthWav}, together with $\theta\!=\!\tfrac{1}{2}$ and $\theta\!=\!0$ in \eqref{eq:rateCRBClusteredOrthWav}, respectively.}
\end{remark}

A similar result holds for the double-clustered array \eqref{eq:class_doubleclustered}, when the aperture $L$ is set %of the double-clustered array 
to match that of the canonical MIMO array \eqref{eq:class_canonicalmimo} with optimal allocation \eqref{eq:optalloc_mimo}.%, for a fair comparison.
\begin{theorem}[Double-clustered array] \label{thm:divideSensorsOrthOpt}
Let $\mathbf{S}^{\HT}\mathbf{S}\!=\!\frac{1}{N_{\tx}}\mathbf{I}$, $\mathcal{C}$ be given by \labelcref{eq:class_doubleclustered}, and $L = \frac{2}{5}N(\frac{3}{5}N-1)$, with $N$ divisible by $5$ and even. As $N\rightarrow\infty$, \eqref{eq:minCRBDivideNorth} has the following solution:
%for the Double-clustered (optimal) array in \labelcref{eq:class_doubleclustered} with $L = \frac{2}{5}N(\frac{3}{5}N-1)$:
    \begin{align}
        N_{\tx}^\star = 2, \ N_{\rx}^\star = N-2.
    \end{align}
\end{theorem}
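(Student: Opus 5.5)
The plan is to reduce \eqref{eq:minCRBDivideNorth} over the class \eqref{eq:class_doubleclustered} to a one-dimensional integer problem in $m\triangleq N_{\tx}$, and then compare all admissible $m$ with $m=2$. Since the objective decreases in $N_{\rx}$ and each $\chi(\mathcal{K}_n^L)$ is fixed once $n$ is fixed, we may take $N_{\rx}=N-m$. Minimizing the \ac{crb} is then equivalent to maximizing
\begin{align*}
f(m)\triangleq (N-m)\left[\chi(\mathcal{K}_{m}^{L})+\chi(\mathcal{K}_{N-m}^{L})\right],\quad m\in\{1,\ldots,N-1\}.
\end{align*}
The claim is that $f(2)>f(m)$ for all $m\neq 2$ once $N$ is large, with $L=\frac{2}{5}N(\frac{3}{5}N-1)\sim\frac{6}{25}N^2$. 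The essential feature of this scaling is $L\gg N$, and in fact $L^2$ is of order $N^4$.

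\textbf{Step 1: closed form for $\chi(\mathcal{K}_n^L)$.} For even $n=2k$, the array is symmetric about $L/2$, and summing $(L/2-i)^2$ for $i=0,\ldots,k-1$ gives
\begin{align*}
\chi(\mathcal{K}_{2k}^L)=\tfrac{L^2}{4}-\tfrac{L(k-1)}{2}+\tfrac{(k-1)(2k-1)}{6}.
\end{align*}
Writing this in terms of $n$, the deficit is $\tfrac{L^2}{4}-\chi(\mathcal{K}_n^L)=\tfrac{L(n-2)}{4}+O(n^2)$. For odd $n=2k+1$, I would compute the mean explicitly; it is shifted by roughly $\tfrac{L}{2n}$. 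This yields the same linear deficit plus an extra nonnegative term of order $\tfrac{L^2}{4n^2}$. Odd $n$ only lowers $\chi$, so the usable form is the upper bound
\begin{align*}
\chi(\mathcal{K}_n^L)\le \tfrac{L^2}{4}-\tfrac{L(n-2)}{4}+Cn^2
\end{align*}
for all $n\ge 1$ (using $\chi=0$ at $n=1$, which is much smaller anyway). It must be paired with the exact value $\chi(\mathcal{K}_2^L)=L^2/4$ and the even formula for $N-2$. The case $N-2$ is even because $N$ is even, which is presumably why that assumption is made.

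\textbf{Step 2: comparison.} Lower-bound $f(2)$ exactly:
\begin{align*}
f(2)=(N-2)\left[\tfrac{L^2}{2}-\tfrac{L(N-4)}{4}+O(N^2)\right].
\end{align*}
For $2\le m\le N-2$, the two deficits add to roughly $\tfrac{L(m-2)}{4}+\tfrac{L(N-m-2)}{4}=\tfrac{L(N-4)}{4}$, which is independent of $m$. Hence
\begin{align*}
f(m)\le (N-m)\left[\tfrac{L^2}{2}-\tfrac{L(N-4)}{4}+O(N^2)\right].
\end{align*}
It follows that $f(2)-f(m)\ge (m-2)\bigl[\tfrac{L^2}{2}-\tfrac{LN}{4}\bigr]-O(N^3)$. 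This is positive for $m\ge 3$ and large $N$, since $L^2\sim N^4$ dominates both the $LN\sim N^3$ term and the $O(N^3)$ error. The endpoint $m=N-1$ (with $N_{\rx}=1$) is handled by the same bound, using $\chi(\mathcal{K}_1^L)=0$. For $m=1$, $\chi(\mathcal{K}_1^L)=0$ gives $f(1)\le (N-1)L^2/4$, while $f(2)\approx (N-2)L^2/2$; so $f(2)>f(1)$ for $N\ge 4$ up to lower-order terms.

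\textbf{Main obstacle.} The delicate point is Step 2 and the bookkeeping of the error terms. A crude bound $\chi\le L^2/4$ for both arrays is not enough. With that bound, showing $f(2)>f(3)$ would require the receive deficit $(N-2)\cdot\tfrac{L(N-2)}{4}$ to be less than $\tfrac{L^2}{2}$. That fails here because $L<N^2/2$. The argument therefore has to exploit that the \ac{tx} and \ac{rx} deficits sum to a constant in $m$, and it needs a uniform upper bound on $\chi(\mathcal{K}_n^L)$, including odd $n$, that is sharp at order $Ln$. I would first derive the exact odd-$n$ formula and check that its additional term has the favorable sign, so that odd $n$ only decreases $\chi$. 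With that in hand, the bound above holds uniformly and the comparison in Step 2 goes through.
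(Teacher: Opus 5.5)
Your proof is correct, and it takes a sharper route than the paper's. The paper uses only the even-$m$ closed form \eqref{eq:spatialVarianceClusteredArray}, sets $N_{\tx}=\theta N$, and computes the leading-order value $\frac{1}{1-\theta}\frac{625}{18}N^{-5}$. Since this decreases as $\theta\to 0$, it then selects $N_{\tx}=2$ by invoking $N_{\tx},N_{\rx}>1$. That argument is short and keeps all allocation results in one $\theta$-framework, which gives the rates in \cref{tab:rate} directly. However, it only shows that $N_{\tx}=o(N)$ is asymptotically optimal. Every even $N_{\tx}=o(N)$ yields the same leading term, so the argument cannot by itself distinguish $N_{\tx}=2$ from, e.g., $N_{\tx}=4$. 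It also rules out $N_{\tx}=1$ by an assumption that is not in the statement, and it never treats odd $N_{\tx}$.

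Your integer comparison supplies exactly what is missing. As you note, $N_{\rx}$ times the $O(LN)$ corrections to $\chi$ is of order $N^4$. That is the same order as the gain of about $L^2/2$ per sensor moved from Tx to Rx. The conclusion therefore rests on the fact that the Tx and Rx linear deficits add up to $L(N-4)/4$ independently of $m$. The payoff is the precise statement that $(2,N-2)$ is the unique minimizer for all large $N$, including $m=1$ and odd $m$. The cost is the extra bookkeeping.

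The odd-$n$ step you defer does go through. For $n=2k+1$ one gets $\chi(\mathcal{K}_{2k+1}^L)=\frac{2k}{2k+1}\bigl[\chi(\mathcal{K}_{2k}^L)+\frac{(L/2-k)^2}{2k+1}\bigr]$. This expands to $\frac{L^2}{4}-\frac{L(n-2)}{4}-\frac{L(L-n+2)}{4n^2}+O(n^2)$, and the third term is nonpositive since $L\geq n$.

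One small repair is needed. The reduction to $N_{\rx}=N-m$ does not follow from ``each $\chi(\mathcal{K}_n^L)$ is fixed once $n$ is fixed'', because $\chi(\mathcal{K}_{N_{\rx}}^L)$ itself decreases in $N_{\rx}$. Instead, use $\mathcal{K}_n^L\subset\mathcal{K}_{n+1}^L$ together with the fact that adding a point never decreases the total scatter $n\chi$. Then $N_{\rx}\bigl[\chi(\mathcal{K}_{N_{\tx}}^L)+\chi(\mathcal{K}_{N_{\rx}}^L)\bigr]$ is increasing in $N_{\rx}$. (The paper sets $N_{\rx}=N-N_{\tx}$ without comment.)
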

\begin{proof}
%Using \cite[Lemma 1]{vanderWerf2025Jointly}, \eqref{eq:minCRBorthDivideNOrth} simplifies to
Given \eqref{eq:class_doubleclustered}, problem \eqref{eq:minCRBDivideNorth} simplifies to
\begin{equation}
%\begin{split}
    \min_{N_{\tx},N_{\rx}\in\mathbb{N}_+} %& \quad %\frac{\sigma^2}{2|\gamma|^2}
    \frac{1}{N_{\rx}}
    \left[ 
     \chi\left(\mathcal{K}_{N_{\tx}}^{L}\right)\!+\!\chi\left(\mathcal{K}_{N_{\rx}}^{L}\right)
    \right]^{-1}
    \textrm{s.t.}\ 
    % & \quad 
    % \max \mathcal{D}_{\tx}\!-\!\min \mathcal{D}_{\tx}\leq L_{\tx},
    % %\label{eq:DtxApertureConstraint}
    % \\
    % & \quad 
    % \max \mathcal{D}_{\rx}\!-\!\min \mathcal{D}_{\rx}\leq L_{\rx}, %\label{eq:DrxApertureConstraint}
    % \\
    %& \quad 
    N_{\tx}\!+\!N_{\rx}\!\leq\!N. %\label{eq:TotalNumberOfSensors}
%\end{split}
%\label{eq:divideNoverclusteredarraysOrthWav}
\nonumber
\end{equation}
%Again, we fix the available aperture to the aperture used by the canonical MIMO array when the sensors are optimally allocated, i.e., $L = \frac{2}{5}N(\frac{3}{5}N-1)$.
%The spatial variance of a clustered array $\mathcal{K}_N^L$ depends on the number of sensors $N$ and its aperture $L$. 
For an even $m > 2$, a simple calculation shows that
\begin{align}
    \chi(\mathcal{K}_m^L)\!=\!
    \frac{1}{4}L^2\!+\!\frac{1}{12}m^2\!-\!\frac{1}{4}mL\!+\!\frac{1}{2}L\!-\!\frac{1}{4}m\!+\!\frac{1}{6}.
    \label{eq:spatialVarianceClusteredArray}
\end{align}
Using \eqref{eq:spatialVarianceClusteredArray} and $L = \frac{2}{5}N(\frac{3}{5}N-1)$, we can write that $\chi(\mathcal{K}^L_{N_{\rx}}) = \frac{9}{625}N^{4}+\mathcal{O}(N^3)$. Setting $N_{\tx} = \theta N$ and $N_{\rx}=N-N_{\tx} = (1-\theta)N$, for $\theta\in[0,1]$, then yields
\begin{align}
    \lim_{N\rightarrow\infty} \frac{1}{N_{\rx}} \left[\chi(\mathcal{K}^L_{N_{\tx}}) + \chi(\mathcal{K}^L_{N_{\rx}})\right]^{-1}
    & = \lim_{N\rightarrow\infty}   \frac{1}{(1-\theta)}\frac{625}{18}N^{-5}\notag \\
    & \geq \lim_{N\rightarrow\infty} \frac{625}{18}N^{-5}.
    \label{eq:rateCRBClusteredOrthWav}
\end{align}
Here equality is reached when $\theta\rightarrow 0$. Hence, since %we require 
$N_{\tx},N_{\rx}>1$ by assumption, $N_{\tx} = 2$ and $N_{\rx} = N-2$ will minimize the \ac{crb} asymptotically for increasing $N$.
\end{proof}
% To ensure a fair comparison between the double-clustered array and the canonical MIMO array, we set the aperture $L$ %of the double-clustered array 
% in \cref{thm:divideSensorsOrthOpt} to match that of \eqref{eq:class_canonicalmimo} with $N_{\tx},N_{\rx}$ given by \eqref{eq:optalloc_mimo}. 
\textcolor{\edited}{
Similar to \cref{thm:divideSensorsOrthMimo}, the assumption in \cref{thm:divideSensorsOrthOpt} that $N$ is even and divisible by $5$ is made to simplify the exposition and ensure the aperture $L = \frac{2}{5}N(\frac{3}{5}N-1)$ is an integer.
}

\begin{remark}%\label{rem:allocation_crb_improvement}
The \ac{crb} of the canonical \ac{mimo} array with equi-allocation 
%(using orthogonal waveforms) 
can be improved 
$%(12/(1/2)^5)/(625/9)
\approx 5.5\times$ via optimal array design alone (%double-clustered array
with equal sensor allocation) and 
$%(12/(1/2)^5)/(625/18)
\approx 11\times$ via jointly optimal array design and (unequal) sensor allocation.\footnote{Here, $\tfrac{12/(1/2)^5}{625/9}%(12/(1/2)^5)/(625/9)
\!\approx\!5.5$ and $\tfrac{12/(1/2)^5}{625/18}%(12/(1/2)^5)/(625/18)
\!\approx\!11$ follow from setting $\theta\!=\!\tfrac{1}{2}$ in \eqref{eq:rateCRBMIMOarrayOrthWav}, together with $\theta\!=\!\tfrac{1}{2}$ and $\theta\!=\!0$ in \eqref{eq:rateCRBClusteredOrthWav}, respectively.}% and(double-clustered array with unequal allocation).
This gain in \ac{crb} comes at the cost of a factor $\min(N_{\tx},N_{\rx})$ fewer spatial \ac{dof}, since $N_\Sigma=\mathcal{O}(\max(N_{\tx},N_{\rx}))$ instead of $N_\Sigma=N_{\tx}N_{\rx}$.
\end{remark}

\subsection{Coherent (optimal) waveforms: Equal allocation is optimal}
Assuming coherent waveforms \eqref{eq:optTxWaveform_sum} are optimal, i.e., beamforming in the target direction when \eqref{eq:linearArrayDesignRule} holds, \eqref{eq:minCRBDivideNgen} becomes:
\begin{equation}
\underset{
    \substack{
    N_{\tx},N_{\rx}\in\mathbb{N}_+\\
    \mathcal{D}_{\tx},\mathcal{D}_{\rx}\subset \mathbb{N}}
}{\txt{min.}}\ 
%\frac{\sigma^2}{2|\gamma|^2}
\frac{1}{N_{\tx}N_{\rx}}\chi^{-1}(\mathcal{D}_{\rx})\
\txt{s.t.}\ 
\begin{cases}
    N_{\tx}+N_{\rx} \leq N\\
     (\mathcal{D}_{\tx},\mathcal{D}_{\rx})\!\in\!\mathcal{C}.%(N_{\tx},N_{\rx})
\end{cases}
\label{eq:minCRBorthDivideN}
\end{equation}
Equal sensor allocation turns out to be optimal in this case for both the canonical MIMO array \eqref{eq:class_canonicalmimo} and double-clustered array \eqref{eq:class_doubleclustered}; thus striking a balance between \ac{tx} beamforming gain and the number of independent measurements. This is in stark contrast to the orthogonal waveform case (\cref{sec:alloc_orth}), where \ac{rx} sensors are preferred over \ac{tx} sensors.%Hence, equal ensor allocation between \ac{tx} and \ac{rx} strikes a balance between \ac{tx} beamforming gain and independent measurements, respectively.
\begin{proposition}[Coherent waveforms]\label{thm:divideSensorsOpt}
    Suppose $\mathbf{S}$ is given by \eqref{eq:optTxWaveform_sum} and $N$ is even. As $N\rightarrow\infty$, \eqref{eq:minCRBorthDivideN} has the following solution in case of the canonical MIMO array \eqref{eq:class_canonicalmimo}: 
    \begin{align}
        N_{\tx}^\star = N_{\rx}^\star = \frac{1}{2}N.
        \label{eq:equalSensorDistr}
    \end{align}
   % Restricting the available aperture to the aperture used by the optimal canonical MIMO array, i.e., $L = \frac{1}{2}N(\frac{1}{2}N-1)$, the sensor distribution in \eqref{eq:equalSensorDistr} is also the solution to \eqref{eq:minCRBorthDivideN} in case of the double-clustered array \eqref{eq:class_doubleclustered}.
    This is also the solution %to \eqref{eq:minCRBorthDivideN} 
    in case of the double-clustered array \eqref{eq:class_doubleclustered} with aperture $L = \frac{1}{2}N(\frac{1}{2}N-1)$, i.e., the same aperture as the canonical MIMO array using sensor allocation \eqref{eq:equalSensorDistr}.
    % Suppose $\mathbf{S}$ is given by \eqref{eq:optTxWaveform_sum}, and $L = \frac{1}{2}N(\frac{1}{2}N-1)$ with $N=2n$ even. As $n\rightarrow\infty$, \eqref{eq:minCRBorthDivideN} has the following solution both in case of the 
    % \begin{enumerate*}
    %     \item Canonical MIMO array \eqref{eq:class_canonicalmimo}, and
    %     \item Double-clustered array \eqref{eq:class_doubleclustered}:
    % \end{enumerate*}
    %     \begin{align}
    %     N_{\tx}^\star = N_{\rx}^\star = \frac{1}{2}N.
    % \end{align}
\end{proposition}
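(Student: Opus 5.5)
The plan is to mirror the proofs of \cref{thm:divideSensorsOrthMimo,thm:divideSensorsOrthOpt}. For each class $\mathcal{C}$, I would write the objective of \eqref{eq:minCRBorthDivideN} in closed form using the spatial-variance identities already established. I would then parametrize $N_{\tx}=\theta N$, $N_{\rx}=(1-\theta)N$, extract the leading-order power of $N$, and minimize the resulting function of $\theta$ over $[0,1]$. Finally, I would check that the premise of \eqref{eq:optTxWaveform_sum}, namely the design rule \eqref{eq:linearArrayDesignRule} (or its boundary case), holds at the optimizer, so that the CRB expression \eqref{eq:CRB_sb} is legitimately used.

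\emph{Canonical MIMO array.} By \eqref{eq:chi_const} and \eqref{eq:spatialVarianceULA}, $\chi(\mathcal{D}_{\rx})=\chi(N_{\tx}\mathcal{U}_{N_{\rx}})=\tfrac{1}{12}N_{\tx}^2(N_{\rx}^2-1)$. The objective is therefore
\begin{align*}
\frac{12}{N_{\tx}^3N_{\rx}(N_{\rx}^2-1)}\;\sim\;\frac{12}{\theta^3(1-\theta)^3}N^{-6}.
\end{align*}
The function $\theta^3(1-\theta)^3$ is symmetric and maximized uniquely at $\theta=\tfrac12$, which yields \eqref{eq:equalSensorDistr}. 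I would then verify \eqref{eq:linearArrayDesignRule}: $\chi(\mathcal{U}_{N_{\tx}})=\tfrac{1}{12}(N_{\tx}^2-1)<\tfrac{1}{12}N_{\tx}^2(N_{\rx}^2-1)$ holds whenever $N_{\rx}\ge 2$. Hence beamforming is indeed optimal along the whole parametrized family.

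\emph{Double-clustered array.} With $L=\tfrac12N(\tfrac12N-1)=\tfrac14N^2+\mathcal{O}(N)$, formula \eqref{eq:spatialVarianceClusteredArray} gives $\chi(\mathcal{K}_{N_{\rx}}^L)=\tfrac14L^2+\mathcal{O}(N^3)=\tfrac{1}{64}N^4+\mathcal{O}(N^3)$. The cross term $-\tfrac14 mL$ is only $\mathcal{O}(N^3)$, so the leading coefficient does not depend on $\theta$ for any fixed $\theta\in(0,1)$. The objective then behaves as
\begin{align*}
\frac{64}{\theta(1-\theta)}N^{-6},
\end{align*}
which is again minimized at $\theta=\tfrac12$. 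Because \eqref{eq:spatialVarianceClusteredArray} was stated for even $m$, I would handle odd $N_{\rx}$ (for example, $N/2$ odd) by a direct computation from \eqref{eq:clustered}. That computation only changes lower-order terms, since moving one sensor between the two clusters perturbs $\chi$ by $\mathcal{O}(L^2/m)$, which is $o(N^4)$.

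The step I expect to be the main obstacle is the validity of the coherent-waveform model at the double-clustered optimum. When $N_{\tx}=N_{\rx}$ and both arrays are $\mathcal{K}^L_{N/2}$, the two spatial variances are equal. Inequality \eqref{eq:linearArrayDesignRule} therefore holds only with equality. I would argue that this is harmless: by \eqref{eq:CRBequivObjective}, the objective is then independent of $(\lambda_1,\lambda_2)$, so the sum beam \eqref{eq:optTxWaveform_sum} is still optimal (one among several, as noted after \eqref{eq:optTxWaveform_sum}) and \eqref{eq:CRB_sb} still gives the optimal CRB.

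For $N_{\tx}<N_{\rx}$, the derivative of \eqref{eq:spatialVarianceClusteredArray} in $m$, namely $\tfrac{m}{6}-\tfrac{L}{4}-\tfrac14<0$, shows that $\chi(\mathcal{K}_m^L)$ decreases in $m$. Consequently $\chi(\mathcal{D}_{\tx})>\chi(\mathcal{D}_{\rx})$, and beamforming is not the optimal waveform there. I would state explicitly that the proposition, like its premise, restricts attention to the sum-beam waveform. Under that reading, the asymptotic comparison above identifies $\theta=\tfrac12$ as the minimizer.
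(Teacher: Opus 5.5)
Your proposal is correct and follows the paper's proof essentially step for step. You substitute \eqref{eq:chi_const}, \eqref{eq:spatialVarianceULA} and \eqref{eq:spatialVarianceClusteredArray}, set $N_{\tx}=\theta N$, extract the $N^{-6}$ coefficients $12/(\theta(1-\theta))^3$ and $64/(\theta(1-\theta))$, and minimize both at $\theta=\tfrac12$. Your extra checks go beyond what the paper verifies, namely validity of \eqref{eq:linearArrayDesignRule}, the equality case $\chi(\mathcal{D}_{\tx})=\chi(\mathcal{D}_{\rx})$ at the double-clustered optimum, and odd $N_{\rx}$. The paper simply takes the sum-beam CRB \eqref{eq:CRB_sb} as given by the proposition's premise.
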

\begin{proof}
\begin{enumerate}
    \item Canonical MIMO array:
    By \labelcref{eq:chi_const,eq:class_canonicalmimo}, the problem in \eqref{eq:minCRBorthDivideN} simplifies to
\begin{equation}
\begin{split}
    \min_{N_{\tx},N_{\rx}\in\mathbb{N}_+} & \quad
    \frac{1}{N^3_{\tx}N_{\rx}}
    \chi^{-1}\left(\mathcal{U}_{N_{\rx}}\right)\
    \textrm{s.t.}\ 
    N_{\tx} + N_{\rx} \leq N.
\end{split}
\label{eq:divideNoverMIMOarrayOptimalWav}
\end{equation}
Using \eqref{eq:spatialVarianceULA} and setting $N_{\tx}=\theta N$ and $N_{\rx} = (1-\theta)N$, where $\theta \in [0,1]$, %the objective function of \eqref{eq:divideNoverMIMOarrayOptimalWav}
we can write
    \begin{align}
        \lim_{N\rightarrow\infty} \frac{1}{N^3_{\tx}N_{\rx}}
    \chi^{-1}\left(\mathcal{U}_{N_{\rx}}\right)
        & = \lim_{N\rightarrow\infty} \frac{12}{N^3_{\tx}N_{\rx} (N^2_{\rx}-1)} \notag\\
        & = 
        \lim_{N\rightarrow\infty} \frac{12}{(\theta (1-\theta))^3} N^{-6} \notag \\
        & \geq  
        \lim_{N\rightarrow\infty} 3\cdot 256 N^{-6}.
        \label{eq:rateCRBMIMOarrayOptimalWav}
    \end{align}
Here equality is reached when $\theta = \frac{1}{2}$. 
\item Double-clustered array:
In this case, 
%Since the clustered array maximizes the spatial variance for any fixed number of sensors and aperture \cite[Lemma 1]{vanderWerf2025Jointly}, the problem in 
\eqref{eq:minCRBorthDivideN} simplifies to 
%optimally dividing the available sensors between any \ac{tx} array and a clustered \ac{rx} array:
%
\begin{equation}
\begin{split}
    \min_{N_{\tx},N_{\rx}\in\mathbb{N}_+}  %\frac{\sigma^2}{2|\gamma|^2}
    \frac{1}{N_{\tx}N_{\rx}} 
        \chi^{-1}\left(\mathcal{K}_{N_{\rx}}^{L}\right)
        \ 
        % \\
    \textrm{s.t.} 
    % \ 
    % & 
    \ 
    N_{\tx} + N_{\rx} \leq N. %\label{eq:TotalNumberOfSensors}
\end{split}
\label{eq:divideNoverclusteredarrays}
\end{equation}
% Note that the problems above are not completely symmetric as the objective functions only contain $1/N_{\rx}$ (and the $1/N_{\tx}$ is absent).
%
%We fix the available aperture to the aperture used by the canonical \ac{mimo} array when the sensors are optimally allocated, i.e. $L = \frac{1}{2}N(\frac{1}{2}N-1)$.
%We can then establish the optimal allocation between \ac{tx} and \ac{rx} sensors for the optimal array.
By \eqref{eq:spatialVarianceClusteredArray} and $L = \frac{1}{2}N(\frac{1}{2}N-1)$, % = \frac{1}{4}N^2 - \frac{1}{2}N$
    we have $\chi(\mathcal{K}^L_{N_{\rx}}) = \frac{1}{64}N^{4}+\mathcal{O}(N^3)$. Setting $N_{\tx} = \theta N$ and $N_{\tx}=N-N_{\rx} = (1-\theta)N$, for $\theta\in[0,1]$, then yields 
\begin{align}
    \lim_{N\rightarrow\infty} \frac{1}{N_{\tx}N_{\rx}\chi(\mathcal{K}^L_{N_{\rx}})} 
    & = \lim_{N\rightarrow\infty} \frac{64}{\theta (1-\theta)}N^{-6} \notag \\
    & \geq \lim_{N\rightarrow\infty} 256 N^{-6}.\label{eq:rateCRBClusteredOptimalWav}
\end{align}
Equality is achieved in the final inequality when $\theta = \frac{1}{2}$.
\end{enumerate}
\end{proof}
\textcolor{\edited}{
Again, the assumption that $N$ is even in \cref{thm:divideSensorsOpt} is made to simplify the exposition and ensure that $N_{\tx}^\star = N_{\rx}^\star = N/2$ is an integer.
}
%Hence, the array that maximizes the aperture of the sum co-array (i.e., $N_{\tx}N_{\rx}$) minimizes the \ac{crb} when an optimal transmit waveform is used.

\cref{tab:sensor_division} summarizes the optimal sensor allocations derived in \cref{thm:divideSensorsOrthMimo,thm:divideSensorsOrthOpt,thm:divideSensorsOpt}. 
% \cref{tab:rate} shows the asymptotic decay rate of the \ac{crb} for the optimal sensor allocations in \cref{tab:sensor_division} (cf. \labelcref{eq:rateCRBMIMOarrayOrthWav,eq:rateCRBClusteredOrthWav,eq:rateCRBMIMOarrayOptimalWav,eq:rateCRBClusteredOptimalWav}). %Note that the \ac{crb} attained by the clustered array is lower by a factor of $3$ and $10$ compared to the canonical \ac{mimo} array, for optimal and orthogonal waveforms, respectively.
%These results are asymptotical (total sensor budget $N\to\infty$). In \cref{sec:simulations} we will illustrate how these results translate to finite $N$.
%
%\textcolor{red}{[\cref{tab:rate} needs brief discussion before transitioning into the next subsection.]}
\begin{table}[!t]
\vspace{-.3cm}
\centering
\caption{
%Optimal sensor allocation for different array configurations and transmit waveforms. The total number of sensors is $N$. The arrays have the same aperture for coherent and orthogonal waveforms respectively.
Optimal sensor allocations for different array configurations and transmit waveforms (equal array aperture for a given waveform). The total number of sensors is $N$.
}
\label{tab:sensor_division}
\begin{tabular}{|c|c|c|c|}
\hline
\multicolumn{2}{|c|}{} & \multicolumn{2}{c|}{\textbf{Transmit waveform}} \\ 
\cline{3-4} 
\multicolumn{2}{|c|}{} & \textbf{Coherent (optimal)} & \textbf{Orthogonal} \\ 
\hline
\multirow{2}{*}{\rotatebox[origin=c]{90}{\textbf{Array}}} &{\textbf{Clustered (optimal)}} & $N_{\tx} = N_{\rx} = \frac{1}{2}N$ & $N_{\tx}\!=\!2$, $N_{\rx}\!=\!N\!-\!2$ \\ 
 [1ex]
\cline{2-4}
& \textbf{Canonical MIMO} & $N_{\tx}= N_{\rx} = \frac{1}{2}N$ & $N_{\tx}\!=\!\frac{2}{5}N$, $N_{\rx}\!=\!\frac{3}{5}N$ \\  [1ex]
% \cline{2-4}
% & \textbf{Co-prime array} & $N_{\tx} = N_{\rx} = \frac{1}{2}N$ & $N_{\tx} = \frac{2}{5}N$, $N_{\rx} = \frac{3}{5}N$ \\  [1ex]
\hline
\end{tabular}
\end{table}

\subsection{Impact of optimal sensor allocation on CRB for finite N%$N$% given total sensor budget
}
%The optimal allocation of sensors between the \ac{tx} and \ac{rx} arrays, which was asymptotically characterized in \cref{sec:optimalAllocation} (see \cref{tab:sensor_division}), is now numerically evaluated for finite $N$. 
To conclude this section, we numerically evaluate the optimal allocation of sensors between the \ac{tx} and \ac{rx} arrays for finite $N$. 
\cref{fig:sensorDivision} shows the \ac{crb} as a function of the number of transmit sensors $N_{\tx}$ for $N=40$ (with $N_{\rx}=N-N_{\tx}$) and $\tfrac{\sigma^2}{2|\gamma|}=1$. The aperture for the clustered array is set to $L = \frac{2}{5}N(\frac{3}{5}N-1)$ and $L = \frac{1}{2}N(\frac{1}{2}N-1)$ for optimal and orthogonal waveforms, respectively.
When orthogonal waveforms are used, the \ac{crb} is minimized at $N_{\tx}=2$ for the double-clustered array, and at $N_{\tx}=\tfrac{2}{5}N=16$ for the canonical \ac{mimo} array (dashed lines). For the optimal (beamforming) waveform, the minimum occurs at $N_{\tx}=N_{\rx}=\tfrac{1}{2}N=20$, for both clustered and canonical arrays (solid lines).
These results validate that the derived asymptotical results (\cref{tab:sensor_division}) also characterize the optimal sensor allocation well for finite $N$.
\begin{figure}
\vspace{-0.4cm}
    \centering
    \includegraphics[width=0.9\linewidth]{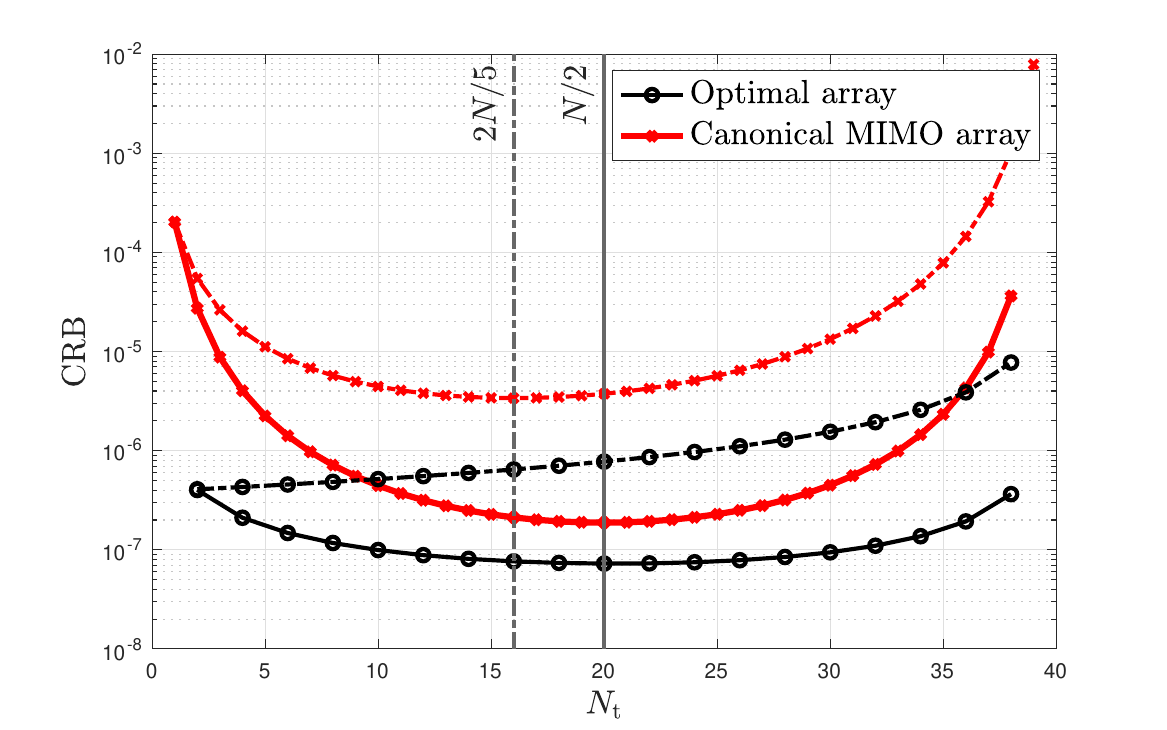}
    \vspace{-0.5cm}
    \caption{Single target \ac{crb} %(up to scaling factor) 
    versus number of transmit sensors $N_{\tx}$ %for various array geometries, where 
    for a total number of $N_{\tx}+N_{\rx}=N = 40$ sensors. 
    Solid lines correspond to optimal waveforms and dashed lines to orthogonal waveforms. 
    % \textcolor{red}{[Robin: Take-home message here]} 
   The (finite sensor) minimizers agree well with the derived asymptotical results (cf. \cref{tab:sensor_division}). 
%    The results confirm that the asymptotically optimal sensor allocations in \cref{tab:sensor_division} also minimize the \ac{crb} for finite number of sensors.
    }
    % \vspace{-0.5cm}
    \label{fig:sensorDivision}
\end{figure}

\cref{tab:rate} shows the asymptotic decay rate of the \ac{crb} for the optimal sensor allocations in \cref{tab:sensor_division} (cf. \labelcref{eq:rateCRBMIMOarrayOrthWav,eq:rateCRBClusteredOrthWav,eq:rateCRBMIMOarrayOptimalWav,eq:rateCRBClusteredOptimalWav}). 
Given the asymptotically optimal sensor allocations, the \ac{crb} attained by the clustered array \eqref{eq:clustered} is lower by a factor of $3$ and $10$ compared to the canonical \ac{mimo} array \eqref{eq:canonicalmimo}, for optimal and orthogonal waveforms, respectively (as $N \to \infty$). 
%(cf. \cref{tab:rate}). 
This also holds for moderate values of $N$, as shown in \cref{fig:CRBratio}, which plots the ratio of the \ac{crb}s of the canonical \ac{mimo} array and clustered array %for different waveforms 
%$\textrm{CRB}_{\omega}(\Sorth)/\textrm{CRB}_{\omega}(\mathbf{S}^\star)$, 
%i.e., $\frac{\textrm{CRB}_{\textrm{MIMO}}}{\textrm{CRB}_{\textrm{optimal}}}$, 
as a function of $N$. % Even for moderate values of $N$, the clustered array achieves a substantial performance gain.
The decay rate in \cref{tab:rate} increases from $N^{-5}$ to $N^{-6}$ when using optimal instead of orthogonal waveforms. This gain arises because beamforming concentrates a factor $N_{\tx}=\mathcal{O}(N)$ more power in the target direction.
\begin{table}[!t]
% \vspace{-.3cm}
\centering
\caption{Asymptotic \ac{crb} as $N\rightarrow\infty$ (up to common scaling factors) %$\frac{\sigma^2}{2|\gamma|}$
when sensors are optimally allocated between \ac{tx} and Rx.}
\label{tab:rate}
\begin{tabular}{|c|c|c|c|}
\hline
\multicolumn{2}{|c|}{} & \multicolumn{2}{c|}{\textbf{Transmit waveform}} \\ 
\cline{3-4} 
\multicolumn{2}{|c|}{} & \textbf{Optimal} & \textbf{Orthogonal} \\ 
\hline
\multirow{2}{*}{\rotatebox[origin=c]{90}{\textbf{Array}}} &{\textbf{Optimal}} & $256 N^{-6}$ & $\frac{625}{18}N^{-5}$ \\ 
 [1ex]
\cline{2-4}
& \textbf{Canonical MIMO} &$3\cdot 256 N^{-6}$ & $10\cdot\frac{625}{18} N^{-5}$ \\  [1ex]
\hline
\end{tabular}
% \begin{tablenotes}
%       \small
%       \item * $a = 256$, ** $b = \frac{628}{18}$
% \end{tablenotes}
\end{table}
\begin{figure}
\vspace{-.3cm}
    \centering
    \includegraphics[width=0.9\linewidth]{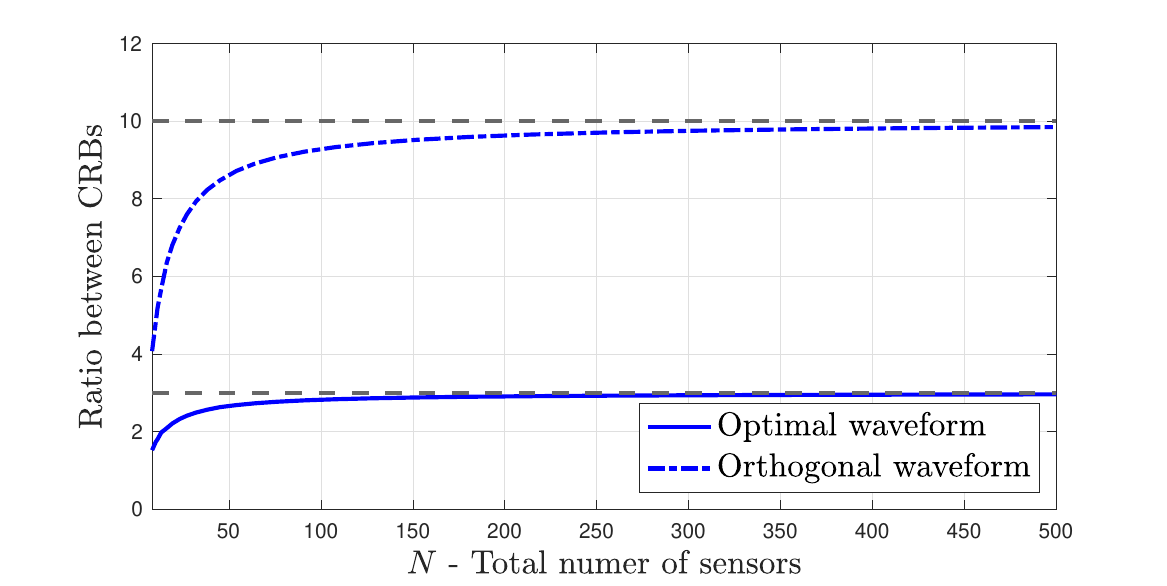}
    \vspace{-0.3cm}
    \caption{Ratio of \ac{crb}s (canonical \ac{mimo} over clustered array) %(up to scaling factor) 
    versus total number of available sensors when these are optimally allocated between the \ac{tx} and \ac{rx} arrays. %The solid lines are used for optimal waveforms, dashed lines are used for orthogonal waveforms. 
    % \textcolor{red}{[Robin: Take-home message here]}
    The ratio converges to the ratios of the asymptotic values listed in \cref{tab:rate} as $N\to\infty$, while the clustered array already provides a significant \ac{crb} advantage for moderate $N$.
    }
    \vspace{-0.3cm}
    \label{fig:CRBratio}
\end{figure}

\section{Planar arrays and spatial covariance% in minimizing the CRB
}\label{sec:planarArrays}
% In the previous sections, we developed a general understanding of the relationship between the transmit waveform and array geometry in the \textit{one-dimensional} case. However, in practical applications, such as automotive sensing~\textcolor{red}{[REF]}, \textit{two-dimensional} (planar) arrays are commonly employed. In this section, we extend the model and the corresponding conclusions to accommodate planar array geometries.
While the previous sections considered one-dimensional array geometries, in practice, two-dimensional arrays are of great interest in a wide range of radar and wireless communications applications, due to their ability to discriminate both azimuth and elevation angles. This section extends the results of \cref{sec:orthogonalWaveforms} to planar arrays, highlighting distinct challenges of the two-dimensional case arising from the fact that the CRB now depends on the spatial \emph{covariance matrix} of the array geometry, rather than simply a scalar-valued variance.%no longer depends on the (scalar) spatial variance, but

\subsection{Signal model and CRBM for planar arrays}
A planar array can estimate both azimuth $\theta$ and elevation $\phi$, 
%. The target angle can then be 
as parametrized by the two-dimensional spatial frequency vector $\bm{\omega} =  %\frac{2\pi}{\lambda} 
\pi[\sin(\theta)\cos(\phi), \ \cos(\theta)\cos(\phi)]^{\T}$. %, where $\theta$ and $\phi$ denote the azimuth and elevation angles, respectively.
Analogously to \eqref{eq:sensorpositions}, the \ac{tx} and \ac{rx} sensor positions (in half wavelengths) are
\begin{align*}
    % \mathcal{D}_{\tx} 
    % & = \{\mathbf{d}_{\tx}[n]\}_{n=1}^{N_{\tx}} = \left\{
    % \begin{bmatrix}
    %     d^{(1)}_{\tx}[n],
    %     d^{(2)}_{\tx}[n]
    % \end{bmatrix}^{\T}
    % \right\}_{n=1}^{N_{\tx}}, \\
    % \mathcal{D}_{\rx} 
    % & = \{\mathbf{d}_{\rx}[m]\}_{n=1}^{N_{\rx}} = \left\{
    % \begin{bmatrix}
    %     d^{(1)}_{\rx}[m],
    %     d^{(2)}_{\rx}[m]
    % \end{bmatrix}^{\T}
    % \right\}_{m=1}^{N_{\rx}}.
        \mathcal{D}_{\tx} 
    = \{\mathbf{d}_{\tx}[n]\}_{n=1}^{N_{\tx}},\quad
    \mathcal{D}_{\rx} 
    = \{\mathbf{d}_{\rx}[m]\}_{n=1}^{N_{\rx}},
\end{align*}
where $\mathbf{d}_{\tx}[n]\!=\![d^{(1)}_{\tx}[n],  d^{(2)}_{\tx}[n]]^{\T}$, $\mathbf{d}_{\rx}[m]\!=\![d^{(1)}_{\rx}[m],  d^{(2)}_{\rx}[m]]^{\T}$, and the respective array response vectors are $\left[\mathbf{a}_{\tx}(\bm{\omega})\right]_n = e^{j \mathbf{d}^{\T}_{\tx}[n] \bm{\omega}} = e^{j (d^{(1)}_{\tx}[1] \omega_1 + d^{(2)}_{\tx}[1] \omega_2)}$ and $\left[\mathbf{a}_{\rx}(\bm{\omega})\right]_n = e^{j \mathbf{d}^{\T}_{\rx}[n] \bm{\omega}} = e^{j (d^{(1)}_{\rx}[1] \omega_1 + d^{(2)}_{\rx}[1] \omega_2)}$.
The received signal model then becomes 
$
% \begin{equation}
% \begin{split}
    \mathbf{y} 
    =(\mathbf{S}\otimes \mathbf{I})(\mathbf{a}_{\tx}(\bm{\omega})\otimes \mathbf{a}_{\rx}(\bm{\omega}) )\gamma + \mathbf{n},%\quad \in\mathbb{C}^{N_{\rx}}.
%    \nonumber%\label{eq:signal_model_planar}
% \end{split}
% \end{equation}
$ 
which contrary to \eqref{eq:signal_model} depends on both azimuth and elevation.%Moreover, the model in \eqref{eq:signal_model_planar} coincides with \eqref{eq:signal_model} when $\phi = 0$. \textcolor{red}{Is that so?}

Since the target angle $\bm{\omega}$ is a two-dimensional vector, the %\ac{crb} (given waveform $\mathbf{S}$) is a 
$2\times 2$ \acf{crbm}, given waveform $\mathbf{S}$, becomes:
\begin{align}
    \txt{CRBM}_{\bm{\omega}}(\mathbf{S})\!=\!
    \frac{\sigma^2}{2|\gamma|^2}\!
    \big[
    \txt{Re}
    \big\{ 
    \mathbf{\dot{A}}^{\HT}_\textrm{tr}(\mathbf{S}^{\HT}\otimes \mathbf{I})
    \mathbf{P}_{(\mathbf{S}\otimes \mathbf{I})\mathbf{a}_\textrm{tr}}^\perp(\mathbf{S}\otimes \mathbf{I})\mathbf{\dot{A}}_\textrm{tr}
    \big\}
    \big]^{-1}.
    \nonumber%\label{eq:CRB_planar}
\end{align}
Here, the derivative of the array manifold is given by $\mathbf{\dot{A}}_\textrm{tr} = \frac{\partial}{\partial\bm{\omega}^{\T}} (\mathbf{a}_{\tx}\otimes\mathbf{a}_{\rx})$, which has dimensions $N_{\tx}N_{\rx} \times 2$. % Consequently, the \ac{crbm} given waveform $\mathbf{S}$ is a $2\times 2$ matrix
%has dimensions $2\times 2$.
%Note that ``minimizing" the \ac{crbm} is not uniquely defined. Typically, a scoring function is used that evaluates the matrix, i.e. $f: \mathbb{S}^2_+ \rightarrow \mathbb{R}_+$. Common choices include the A-optimality, D-optimality and E-optimality, that minimize the trace, maximum eigenvalue and the determinant of the \ac{crbm}, respectively (see e.g. \cite{boyd2004convexoptimization}).
% \textcolor{red}{[or do we require strictly decreasing?]}
%\subsection{Spatial covariance of planar array geometry}
% The concept of spatial variance in linear arrays is broadened to spatial \textit{co}variance in the context of planar arrays.
 % and $\mu^{(i)}(\mathcal{D})$ denotes the spatial mean in the $i$'th dimension.
%
% \begin{figure*}[htb]
% \hrule
% \begin{align}
%     \bm{\Sigma}(\mathcal{D}) 
%     & \triangleq 
%     \frac{1}{|\mathcal{D}|}
%         \sum_{[d^{(1)},d^{(2)}]^{\T}\in\mathcal{D}} 
%         \begin{bmatrix}
%             d^{(1)}-\mu^{(1)}(\mathcal{D}) \\
%             d^{(2)}-\mu^{(2)}(\mathcal{D})
%         \end{bmatrix}
%         \begin{bmatrix}
%             d^{(1)}-\mu^{(1)}(\mathcal{D}) &
%             d^{(2)}-\mu^{(2)}(\mathcal{D})
%         \end{bmatrix} 
%         = \frac{1}{|\mathcal{D}|}
%         \sum_{\mathbf{d}\in\mathcal{D}} 
%         (\mathbf{d}-\bm{\mu}(\mathcal{D}))(\mathbf{d}-\bm{\mu}(\mathcal{D}))^{\T}
%     \label{eq:SpatialCovariance}
% \end{align}
% \hrule
% \end{figure*}

\subsection{Optimality of sum beam: 
%Novel 
\textcolor{\edited}{Necessary and} 
sufficient condition
}
To optimize the \ac{crb} in the two-dimensional case, we define a scalar-valued objective function $f: \mathbb{S}^2_{++} \rightarrow \mathbb{R}_+$, where we assume $f(\mathbf{X}^{-1})$ to be convex on the positive semi-definite cone. Typical choices of $f$ include the trace, maximum eigenvalue and the determinant of the \ac{crbm} (known as A, D, and E-optimality, respectively \cite{boyd2004convexoptimization}). 
Given $f$, optimal waveforms are solutions to the optimization problem\textcolor{\edited}{\footnote{\textcolor{\edited}{The \ac{crbm} for the azimuth $\theta$ and elevation $\phi$ can be obtained via the standard Jacobian transformation \cite[Sec. 3.6]{kay1993fundamentals}. Depending on the choice of the objective function $f(\cdot)$, the optimal design parameters may become angle-dependent, and thus diverge from those that minimize $f(\text{CRBM}_{\bm{\omega}})$.}}}
\begin{align}
   \min_{\mathbf{S}} 
 %   \mathbf{S}^\star=\argmin_{\mathbf{S}\in\mathbb{C}^{T\times N_{\tx}}}
    \ f(\txt{CRBM}_{\bm{\omega}}(\mathbf{S})) \ \txt{s.t.} \ \trace(\mathbf{S}^{\HT}\mathbf{S}) \leq 1.
    \label{eq:minimizeCRBwrtWaveform_plan}
\end{align}

Similarly to \eqref{eq:TransmitWaveformCov},
%the \ac{crb} for linear arrays, 
the waveform covariance $\mathbf{R}_\mathrm{s}=\mathbf{S}\mathbf{S}^{\HT}$ of any solution to \eqref{eq:minimizeCRBwrtWaveform_plan} 
%the \ac{crbm} %for planar arrays 
%is solely a function of the transmit waveform covariance matrix $\mathbf{S}^{\HT}\mathbf{S}$, which 
takes the form~\cite{forsythe2005waveform,li2008range} (for $\lambda_1\geq 0$, $\bm{\Lambda}_2 \succeq \mathbf{0}$):
\begin{align}
%    \mathbf{S}^{\HT}\mathbf{S} 
    \mathbf{R}_\mathrm{s}\!=\! 
    \lambda_1 \frac{\mathbf{a}_{\tx}\mathbf{a}^{\HT}_{\tx}}{\|\mathbf{a}_{\tx}\|_2^2} 
    \!+\! 
    \frac{
        \mathbf{P}^{\perp}_{\mathbf{a}_{\tx}}\mathbf{\dot{A}}_{\tx}
        % \begin{bmatrix}
        %     \lambda_{2} & 0\\
        %     0 & \lambda_{3}
        % \end{bmatrix}
        {\bm{\Sigma}(\mathcal{D}_\textrm{t})}^{-\frac{1}{2}} 
        \bm{\Lambda}_{2}
        {\bm{\Sigma}(\mathcal{D}_\textrm{t})}^{-\frac{1}{2}} 
        \mathbf{\dot{A}}^{\HT}_{\tx}\mathbf{P}^{\perp}_{\mathbf{a}_{\tx}}
    }{
        %N_{\tx}
        \|\mathbf{a}_{\tx}\|_2^2
        % \|\mathbf{P}^{\perp}_{\mathbf{a}_{\tx}}\mathbf{\dot{A}}_{\tx}\|_2^2
        % \|\mathbf{P}^{\perp}_{\mathbf{a}_{\tx}}\mathbf{\dot{A}}_{\tx}\|_2^2
    }. \label{eq:optWavPlanar}
\end{align}
%where $\lambda_1\geq 0$ and $\bm{\Lambda}_2 \succeq \mathbf{0}$. 
Here, for a planar array $\mathcal{D}=\{\mathbf{d}_n\}_{n=1}^{N}$, $\bm{\Sigma}(\mathcal{D})$ naturally extends the notion of spatial variance to spatial \textit{co}variance %\footnote{The diagonal of $\bm{\Sigma}(\mathcal{D})$ contains the spatial variances of the individual dimensions.} 
as 
\begin{align}
    \bm{\Sigma}(\mathcal{D}) 
    \triangleq 
      \frac{1}{|\mathcal{D}|}
        \sum_{\mathbf{d}\in\mathcal{D}} 
        (\mathbf{d}-\bm{\mu}(\mathcal{D}))(\mathbf{d}-\bm{\mu}(\mathcal{D}))^{\T},
    \label{eq:SpatialCovariance}
\end{align}
%in \eqref{eq:SpatialCovariance}, 
where $\bm{\mu}(\mathcal{D})=\frac{1}{|\mathcal{D}|}\sum_{\mathbf{d}\in\mathcal{D}}\mathbf{d}\in\mathbb{R}^2$ is the spatial mean. 
Similarly to \eqref{eq:CRBequivObjective}, 
minimizing $f(\txt{CRBM}(\bm{\omega}))$ with respect to $\mathbf{S}$ 
can be shown to reduce  
%therefore boils down 
to solving the following problem\footnote{After some straightforward but tedious algebra, omitted here for brevity.}

% \textcolor{red}{[Ids: I have a derivation for this, but that is going to take up quite some space in the appendix...what should we do?]}
% \textcolor{red}{[Robin: Can you add it so that we can check it? We can then say that it is omitted due to being long and tedious.]}
\begin{align}
\begin{aligned}
    \hspace{-.3cm}\min_{\lambda_{1},\bm{\Lambda_{2}}} 
    & f
    \Big(
    \frac{\sigma^2}{2|\gamma|^2 N_{\tx}N_{\rx}}
        \big[
            \lambda_{1} \bm{\Sigma}(\mathcal{D}_{\rx})
            \!+\!
            % {\bm{\Sigma}_\textrm{t}}^{\frac{1}{2}}
            \bm{\Sigma}^{\frac{1}{2}}(\mathcal{D}_{\tx}) 
            % \textrm{diag}(\lambda_{2},\lambda_{3})
            \bm{\Lambda}_{2}
            % \begin{bmatrix}
            %     \lambda_{2} & 0\\
            %     0 & \lambda_{3}
            % \end{bmatrix}
            % {\bm{\Sigma}_\textrm{t}}^{\frac{1}{2}}
            \bm{\Sigma}^{\frac{1}{2}}(\mathcal{D}_{\tx}) 
        \big]^{-1}
    \Big)\\% \label{eq:objectiveFunctionCRBM}\\
    \textrm{s.t.} & \ \lambda_{1} + 
    % \lambda_{2} + \lambda_{3} 
    \textrm{trace}(\bm{\Lambda}_{2})
    = 1, 
    % \\
    % & 
    \ \lambda_{1} > 0, \ \bm{\Lambda}_{2}\succeq \mathbf{0}.% \nonumber
    % \\
    % & \ \mathbf{S}^{\HT}\mathbf{S} = 
    % \lambda_1 \frac{\mathbf{a}_{\tx}\mathbf{a}^{\HT}_{\tx}}{\|\mathbf{a}_{\tx}\|_2^2} 
    % + 
    % \frac{
    %     \mathbf{P}^{\perp}_{\mathbf{a}_{\tx}}\mathbf{\dot{A}}_{\tx}
    %     % \begin{bmatrix}
    %     %     \lambda_{2} & 0\\
    %     %     0 & \lambda_{3}
    %     % \end{bmatrix}
    %     {\bm{\Sigma}_\textrm{t}}^{-\frac{1}{2}}
    %     \bm{\Lambda}_{2}
    %     {\bm{\Sigma}_\textrm{t}}^{-\frac{1}{2}} 
    %     \mathbf{\dot{A}}^{\HT}_{\tx}\mathbf{P}^{\perp}_{\mathbf{a}_{\tx}}
    % }{
    %     %N_{\tx}
    %     \|\mathbf{a}_{\tx}\|_2^2
    %     % \|\mathbf{P}^{\perp}_{\mathbf{a}_{\tx}}\mathbf{\dot{A}}_{\tx}\|_2^2
    % }. \label{eq:optWavPlanar}
    \end{aligned}\label{eq:objectiveFunctionCRBM}
\end{align}
%
%The \ac{crbm} relies on the spatial \emph{co}variance of both the transmitter and receiver, $\bm{\Sigma}(\mathcal{D}_{\tx})$ and $\bm{\Sigma}(\mathcal{D}_{\rx})$, respectively. 
The %interaction between the 
%these two factors once more 
\ac{tx} and \ac{rx} spatial covariances ($\bm{\Sigma}(\mathcal{D}_{\tx})$, $\bm{\Sigma}(\mathcal{D}_{\rx})$)
determine the optimal transmit waveform, which is a linear combination of the \textcolor{\edited}{so-called ``sum'' and ``difference'' beams in \eqref{eq:optWavPlanar}.} 
%sum and difference beams. 
Unlike the one-dimensional case, multiple difference beams are possible in two-dimensions by linearly combining the columns of $\mathbf{P}^{\perp}_{\mathbf{a}_{\tx}}\mathbf{\dot{A}}_{\tx}$. 
Analogously to \eqref{eq:optTxWaveform_sum}, sum beam waveforms correspond to fully coherent transmission in the target direction $\bm{\omega}$:
\begin{align}
    \mathbf{S}^\star = \frac{\mathbf{u}\mathbf{a}_{\tx}^{\HT}(\bm{\omega})}{\sqrt{N_{\tx}}}, \quad \|\mathbf{u}\|_2 = 1.
    \label{eq:optTxWaveform_sumPlanar}
\end{align}
Given \eqref{eq:optTxWaveform_sumPlanar},
the single-target \ac{crbm} simplifies to (cf. \eqref{eq:CRB_sb})
\begin{align}
    \txt{CRBM}_{\bm{\omega}}(\mathbf{S}^\star) = \frac{\sigma^2}{2|\gamma|^2}\frac{1}{N_{\tx}N_{\rx}}\bm{\Sigma}^{-1}(\mathcal{D}_{\rx}).
    \label{eq:CRBMoptimalTxWaveFplanar}
\end{align}
Necessary and sufficient conditions for \eqref{eq:optTxWaveform_sumPlanar} being a solution to \eqref{eq:minimizeCRBwrtWaveform_plan} are currently unknown, \textcolor{\edited}{which is} in stark contrast to the one-dimensional (scalar) case \cite{li2008range}. The following \lcnamecref{thm:sumBeamOptimality} presents a 
novel necessary and sufficient condition for general $f$ and demonstrates the nontriviality of two-dimensional extensions to \eqref{eq:linearArrayDesignRule}. 
We subsequently denote the array dimension by $n$
to unify the treatment of linear ($n=1$) and planar ($n=2$) geometries.

\begin{theorem}(Sum beam optimality)
\label{thm:sumBeamOptimality}
    Let $f(\mathbf{X}^{-1}):\mathbb{S}^n_{++}\!\rightarrow\!\mathbb{R}_+$ be a convex function in $\mathbf{X}$. Then 
    \eqref{eq:optTxWaveform_sumPlanar} is a solution to \eqref{eq:minimizeCRBwrtWaveform_plan} if \textcolor{\edited}{and only if}
    \begin{align}
        \bm{\Sigma}^{\frac{1}{2}}(\mathcal{D}_{\tx})%^{\T}
        \mathbf{G}\bm{\Sigma}^{\frac{1}{2}}(\mathcal{D}_{\tx}) \preceq \textrm{\normalfont trace}\left(\mathbf{G}\bm{\Sigma}(\mathcal{D}_{\rx})\right)\mathbf{I},
        \label{eq:condition}
    \end{align}  
    where 
    \begin{align}
        \mathbf{G} 
        & = - \left. \nabla_{\mathbf{X}} f(\mathbf{X}^{-1})\right\rvert_{\mathbf{X} = \bm{\Sigma}(\mathcal{D}_{\rx})}  \notag \\
        & = \bm{\Sigma}^{-1}(\mathcal{D}_{\rx})\left(\left. \nabla_{\mathbf{Y}}f(\mathbf{Y})\right\rvert_{\mathbf{Y}=\bm{\Sigma}^{-1}(\mathcal{D}_{\rx})}\right)\bm{\Sigma}^{-1}(\mathcal{D}_{\rx}).
    \end{align}  
\end{theorem}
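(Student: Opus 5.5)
The plan is to work directly with the reduced problem \eqref{eq:objectiveFunctionCRBM}, which we take as given. It is a convex program in $(\lambda_1,\bm{\Lambda}_2)$ and is equivalent to \eqref{eq:minimizeCRBwrtWaveform_plan}. We then characterize when the point $(\lambda_1,\bm{\Lambda}_2)=(1,\mathbf{0})$, which corresponds to the sum beam \eqref{eq:optTxWaveform_sumPlanar}, is optimal.

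Define
\begin{align*}
\mathbf{X}(\lambda_1,\bm{\Lambda}_2)=\lambda_1\bm{\Sigma}(\mathcal{D}_{\rx})+\bm{\Sigma}^{\frac12}(\mathcal{D}_{\tx})\bm{\Lambda}_2\bm{\Sigma}^{\frac12}(\mathcal{D}_{\tx}),
\end{align*}
which is affine in the decision variables. Let $h(\mathbf{X})=f(c\,\mathbf{X}^{-1})$ with $c=\sigma^2/(2|\gamma|^2N_{\tx}N_{\rx})$. The positive constant $c$ can be absorbed into $f$, since it only rescales the gradient by a positive factor and does not affect \eqref{eq:condition}. The function $h$ is convex in $\mathbf{X}$ by assumption, so the objective $h(\mathbf{X}(\lambda_1,\bm{\Lambda}_2))$ is convex on the feasible set. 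The feasible set $\{\lambda_1+\trace(\bm{\Lambda}_2)=1,\ \lambda_1>0,\ \bm{\Lambda}_2\succeq\mathbf{0}\}$ is convex, and the candidate point lies in it. For a differentiable convex function on a convex set, optimality at a point is equivalent to nonnegativity of the directional derivative along every feasible direction; this is the first-order variational inequality. We therefore need $\langle\nabla h(\bm{\Sigma}_{\rx}),\,\mathbf{X}(\lambda_1,\bm{\Lambda}_2)-\bm{\Sigma}_{\rx}\rangle\ge0$ for all feasible points, where $\bm{\Sigma}_{\rx}\triangleq\bm{\Sigma}(\mathcal{D}_{\rx})$ and $\nabla h(\bm{\Sigma}_{\rx})=-\mathbf{G}$. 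Sufficiency follows from convexity through the gradient inequality, and necessity follows because feasible directions are available along segments toward any feasible point.

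Next we compute this directional condition. Substituting $\lambda_1=1-\trace(\bm{\Lambda}_2)$ gives the displacement $\mathbf{X}-\bm{\Sigma}_{\rx}=-\trace(\bm{\Lambda}_2)\bm{\Sigma}_{\rx}+\bm{\Sigma}_{\tx}^{1/2}\bm{\Lambda}_2\bm{\Sigma}_{\tx}^{1/2}$. The condition then reads
\begin{align*}
\trace(\bm{\Lambda}_2)\,\trace(\mathbf{G}\bm{\Sigma}_{\rx})-\trace\big(\bm{\Sigma}_{\tx}^{\frac12}\mathbf{G}\bm{\Sigma}_{\tx}^{\frac12}\bm{\Lambda}_2\big)\ge0\quad\forall\,\bm{\Lambda}_2\succeq\mathbf{0},\ \trace(\bm{\Lambda}_2)<1.
\end{align*}
This is $\trace\big((\trace(\mathbf{G}\bm{\Sigma}_{\rx})\mathbf{I}-\bm{\Sigma}_{\tx}^{1/2}\mathbf{G}\bm{\Sigma}_{\tx}^{1/2})\bm{\Lambda}_2\big)\ge0$. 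By homogeneity the restriction $\trace(\bm{\Lambda}_2)<1$ is irrelevant. By self-duality of the PSD cone, testing against rank-one matrices $\bm{\Lambda}_2=\mathbf{v}\mathbf{v}^{\T}$ shows the condition holds for all $\bm{\Lambda}_2\succeq\mathbf{0}$ iff the bracketed matrix is PSD, which is exactly \eqref{eq:condition}. We would also note that $\mathbf{G}$ is symmetric, so the bracketed matrix is symmetric and the PSD test applies.

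Finally, we verify the stated identity for $\mathbf{G}$ by the chain rule for the matrix inverse, $d(\mathbf{X}^{-1})=-\mathbf{X}^{-1}d\mathbf{X}\,\mathbf{X}^{-1}$. This gives $\nabla_{\mathbf{X}}f(\mathbf{X}^{-1})=-\mathbf{X}^{-1}\nabla_{\mathbf{Y}}f\,\mathbf{X}^{-1}$, evaluated at $\mathbf{Y}=\mathbf{X}^{-1}$.

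The main obstacle is regularity. The first-order characterization needs differentiability of $f$ at $\bm{\Sigma}_{\rx}^{-1}$, which fails for instance for the maximum eigenvalue at a repeated eigenvalue. I would state that differentiability is assumed, or otherwise replace $\mathbf{G}$ by the subdifferential and argue via the directional derivative $h'(\bm{\Sigma}_{\rx};\mathbf{D})=\max_{\mathbf{g}\in\partial h}\langle \mathbf{g},\mathbf{D}\rangle$. A further point to check is the equivalence between \eqref{eq:objectiveFunctionCRBM} and \eqref{eq:minimizeCRBwrtWaveform_plan}. This equivalence ensures that $\lambda_1=1,\bm{\Lambda}_2=\mathbf{0}$ in \eqref{eq:optWavPlanar} yields precisely the waveform covariance of \eqref{eq:optTxWaveform_sumPlanar}, so optimality transfers back to $\mathbf{S}^\star$.
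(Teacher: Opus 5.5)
Your proof is correct, but it works on the primal side, whereas the paper works through Lagrangian duality.

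\textbf{The paper's route.} The paper relaxes $\lambda_1>0$ to $\lambda_1\geq 0$ and invokes Slater's condition. It then forms the Lagrangian with multipliers $(\mu,\nu,\mathbf{Z})$. For sufficiency it exhibits the certificate $\nu^\star=0$, $\mu^\star=\alpha\trace(\mathbf{G}\bm{\Sigma}_{\rx})$, $\mathbf{Z}^\star=\mu^\star\mathbf{I}-\alpha\bm{\Sigma}_{\tx}^{1/2}\mathbf{G}\bm{\Sigma}_{\tx}^{1/2}$. For necessity, complementary slackness at $(1,\mathbf{0})$ forces $\nu^\star=0$, so dual feasibility $\mathbf{Z}^\star\succeq\mathbf{0}$ is exactly \eqref{eq:condition}.

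\textbf{Your route.} You eliminate $\lambda_1$ via the equality constraint and apply the first-order minimum principle. Self-duality of the PSD cone then turns $\trace(\mathbf{M}\bm{\Lambda}_2)\geq 0$ for all $\bm{\Lambda}_2\succeq\mathbf{0}$ into $\mathbf{M}\succeq\mathbf{0}$. Your $\mathbf{M}$ is precisely the paper's $\mathbf{Z}^\star/\alpha$. The core computation, the partial derivatives in $\lambda_1$ and $\bm{\Lambda}_2$ obtained from $d\mathbf{X}^{-1}=-\mathbf{X}^{-1}(d\mathbf{X})\mathbf{X}^{-1}$, is the same in both.

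\textbf{What each buys.} Your route needs no constraint qualification and no relaxation of the strict inequality $\lambda_1>0$. The minimum principle holds on non-closed convex sets, and the candidate has $\lambda_1=1$. Necessity also follows directly from one-sided directional derivatives. The paper's route instead yields an explicit dual certificate. Its KKT system also describes the optimum when \eqref{eq:condition} fails, namely a genuine mix of sum and difference beams.

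Your differentiability caveat is well taken. The paper tacitly assumes $f$ is differentiable at the relevant point, which fails, for example, for the maximum eigenvalue at a repeated eigenvalue.

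\textbf{One correction on the constant.} Absorbing $c=1/\alpha$ (with $\alpha=2|\gamma|^2N_{\tx}N_{\rx}/\sigma^2$) into $f$ does not merely rescale $\mathbf{G}$ in general. It moves the evaluation point of $\nabla_{\mathbf{Y}}f$ from $\bm{\Sigma}^{-1}(\mathcal{D}_{\rx})$ to $c\,\bm{\Sigma}^{-1}(\mathcal{D}_{\rx})$.
\begin{itemize}
\item For trace, log-det, maximum eigenvalue, or any positively homogeneous $f$, the gradient direction is scale-invariant, so your claim holds.
\item For $f(\mathbf{Y})=\trace(\mathbf{Y})+\trace(\mathbf{Y}^2)$ it fails: $\nabla f(c\mathbf{Y})=\mathbf{I}+2c\mathbf{Y}$ is not parallel to $\mathbf{I}+2\mathbf{Y}$.
\end{itemize}
Strictly, $\mathbf{G}$ must be evaluated at the actual point $\mathbf{X}=\alpha\bm{\Sigma}(\mathcal{D}_{\rx})$. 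That is what the paper's proof actually uses, since its $\mathbf{X}$ carries the factor $\alpha$, even though the theorem statement evaluates at $\bm{\Sigma}(\mathcal{D}_{\rx})$. This imprecision is shared with the paper, so it is a caveat rather than a gap in your argument.
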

\begin{proof}\let\proof\relax\let\endproof\relax
    See \cref{app:sumBeamOptimality}.
\end{proof}

\textcolor{\edited}{Note that, if $f(\mathbf{X}^{-1})$ is \textit{strictly} convex in $\mathbf{X}$, the condition in \cref{thm:sumBeamOptimality} identifies \eqref{eq:optTxWaveform_sumPlanar} as the \textit{unique} global solution to \eqref{eq:minimizeCRBwrtWaveform_plan}. This property often holds for standard sensing objectives, such as the trace and log-determinant (but not for the maximum eigenvalue) of the \ac{crbm}.}
\textcolor{\edited}{Physically, the condition \eqref{eq:condition} represents a balance between the transmit and receive spatial characteristics. The matrix $\mathbf{G}$ acts as a sensitivity metric, weighting the transmit spatial covariance $\bm{\Sigma}(\mathcal{D}_{\tx})$ by the gradient of the objective function evaluated at the receive spatial covariance $\bm{\Sigma}(\mathcal{D}_{\rx})$. Intuitively, this condition ensures that the receive array's spatial resolution is sufficiently high relative to the transmit array; if this condition is violated, a difference-beam (which creates narrower peaks) may outperform the sum-beam. 
%While the general condition in \eqref{eq:condition} is nontrivial, it simplifies significantly for standard estimation objectives, as shown in the following corollary.
The general condition in \eqref{eq:condition} simplifies significantly for standard estimation objectives, as shown in the following corollary.
}
\begin{corollary}\label{thm:sum_beam_planar_specialized}
If $f(\cdot) = \log\det(\cdot)$, then \eqref{eq:condition} reduces to
    \begin{align}
        \bm{\Sigma}(\mathcal{D}_{\rx}) \succeq \tfrac{1}{n} \bm{\Sigma}(\mathcal{D}_{\tx}).
        \label{eq:planarArrayDesignRuleLogDet}
    \end{align}
If $f(\cdot) = \mathrm{trace}(\cdot)$, then \eqref{eq:condition} reduces to 
    \begin{align}
        \bm{\Sigma}^{2}(\mathcal{D}_{\rx}) \succeq \frac{1}{\mathrm{trace}\!\left(\bm{\Sigma}^{-1}(\mathcal{D}_{\rx})\right)} \, \bm{\Sigma}(\mathcal{D}_{\tx}).
        \label{eq:planarArrayDesignRuleTrace}
    \end{align}
\end{corollary}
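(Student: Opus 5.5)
The plan is to substitute each choice of $f$ into \cref{thm:sumBeamOptimality}: compute $\mathbf{G}$ and $\textrm{trace}(\mathbf{G}\bm{\Sigma}(\mathcal{D}_{\rx}))$, then rewrite the matrix inequality \eqref{eq:condition} with standard Loewner-order manipulations. First I check that both objectives satisfy the hypothesis of \cref{thm:sumBeamOptimality}. For $f=\log\det$ we get $f(\mathbf{X}^{-1})=-\log\det\mathbf{X}$, and for $f=\textrm{trace}$ we get $f(\mathbf{X}^{-1})=\textrm{trace}(\mathbf{X}^{-1})$. Both are convex on $\mathbb{S}^n_{++}$, so the theorem applies. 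Throughout I write $\bm{\Sigma}_{\rx}=\bm{\Sigma}(\mathcal{D}_{\rx})\succ\mathbf{0}$ and $\bm{\Sigma}_{\tx}=\bm{\Sigma}(\mathcal{D}_{\tx})\succeq\mathbf{0}$.

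Next I compute $\mathbf{G}$ from the second expression in \cref{thm:sumBeamOptimality}.
\begin{itemize}
\item For $f(\mathbf{Y})=\log\det\mathbf{Y}$, we have $\nabla_{\mathbf{Y}}f=\mathbf{Y}^{-1}$. Evaluated at $\mathbf{Y}=\bm{\Sigma}_{\rx}^{-1}$ this gives $\bm{\Sigma}_{\rx}$, so $\mathbf{G}=\bm{\Sigma}_{\rx}^{-1}\bm{\Sigma}_{\rx}\bm{\Sigma}_{\rx}^{-1}=\bm{\Sigma}_{\rx}^{-1}$. Hence $\textrm{trace}(\mathbf{G}\bm{\Sigma}_{\rx})=\textrm{trace}(\mathbf{I}_n)=n$.
\item For $f=\textrm{trace}$, we have $\nabla_{\mathbf{Y}}f=\mathbf{I}$. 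Then $\mathbf{G}=\bm{\Sigma}_{\rx}^{-2}$ and $\textrm{trace}(\mathbf{G}\bm{\Sigma}_{\rx})=\textrm{trace}(\bm{\Sigma}_{\rx}^{-1})\eqqcolon c$.
\end{itemize}
With these, \eqref{eq:condition} becomes $\bm{\Sigma}_{\tx}^{1/2}\bm{\Sigma}_{\rx}^{-1}\bm{\Sigma}_{\tx}^{1/2}\preceq n\mathbf{I}$ in the log-determinant case and $\bm{\Sigma}_{\tx}^{1/2}\bm{\Sigma}_{\rx}^{-2}\bm{\Sigma}_{\tx}^{1/2}\preceq c\,\mathbf{I}$ in the trace case.

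The only nontrivial step is moving $\bm{\Sigma}_{\tx}$ to the other side, since $\bm{\Sigma}_{\tx}$ may be singular (e.g., a collinear Tx array). To avoid inverting $\bm{\Sigma}_{\tx}$ I will use a general lemma. Let $\mathbf{A}\succeq\mathbf{0}$ and $\mathbf{B}\succ\mathbf{0}$. Then $\mathbf{A}^{1/2}\mathbf{B}\mathbf{A}^{1/2}$ and $\mathbf{B}^{1/2}\mathbf{A}\mathbf{B}^{1/2}$ are of the forms $\mathbf{M}^{\HT}\mathbf{M}$ and $\mathbf{M}\mathbf{M}^{\HT}$ with $\mathbf{M}=\mathbf{B}^{1/2}\mathbf{A}^{1/2}$. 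They therefore share their nonzero eigenvalues, so one is $\preceq \kappa\mathbf{I}$ if and only if the other is. Furthermore, $\mathbf{B}^{1/2}\mathbf{A}\mathbf{B}^{1/2}\preceq\kappa\mathbf{I}$ holds if and only if $\mathbf{A}\preceq\kappa\mathbf{B}^{-1}$, by congruence with the invertible $\mathbf{B}^{-1/2}$.

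I then apply the lemma to each case.
\begin{itemize}
\item Log-determinant: take $\mathbf{A}=\bm{\Sigma}_{\tx}$, $\mathbf{B}=\bm{\Sigma}_{\rx}^{-1}$ and $\kappa=n$. This gives $\bm{\Sigma}_{\tx}\preceq n\bm{\Sigma}_{\rx}$, which is \eqref{eq:planarArrayDesignRuleLogDet}.
\item Trace: take $\mathbf{B}=\bm{\Sigma}_{\rx}^{-2}=(\bm{\Sigma}_{\rx}^{2})^{-1}\succ\mathbf{0}$ and $\kappa=c$. This gives $\bm{\Sigma}_{\tx}\preceq c\,\bm{\Sigma}_{\rx}^{2}$. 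Dividing by $c>0$ yields \eqref{eq:planarArrayDesignRuleTrace}.
\end{itemize}
Since every step is an equivalence, the ``if and only if'' of \cref{thm:sumBeamOptimality} carries over to both reduced conditions.
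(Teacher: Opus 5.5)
Your proposal is correct and takes the same route as the paper: substitute each $f$ into the expression for $\mathbf{G}$ in \cref{thm:sumBeamOptimality}. This gives $\mathbf{G}=\bm{\Sigma}^{-1}(\mathcal{D}_{\rx})$ with trace term $n$ for $\log\det$, and $\mathbf{G}=\bm{\Sigma}^{-2}(\mathcal{D}_{\rx})$ with trace term $\mathrm{trace}(\bm{\Sigma}^{-1}(\mathcal{D}_{\rx}))$ for the trace, after which \eqref{eq:condition} is rearranged. Your $\mathbf{M}^{\HT}\mathbf{M}$ versus $\mathbf{M}\mathbf{M}^{\HT}$ eigenvalue argument plus congruence spells out the ``rearranging'' step the paper leaves implicit, and it has the added benefit of not needing $\bm{\Sigma}(\mathcal{D}_{\tx})$ to be invertible.
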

\begin{proof}
The proof follows directly from substituting the specific forms of $f$ into the expression for $\mathbf{G}$ in \eqref{eq:condition}.  
For $f(\mathbf{Y})=\log\det(\mathbf{Y})$, the gradient is $\nabla_{\mathbf{Y}} f(\mathbf{Y})=\mathbf{Y}^{-1}$, which gives $\mathbf{G}=\bm{\Sigma}^{-1}(\mathcal{D}_{\rx})$. Substituting this into \eqref{eq:condition} and using $\mathrm{trace}(\mathbf{I}_n)=n$ yields \eqref{eq:planarArrayDesignRuleLogDet}.
Similarly, for $f(\mathbf{Y})=\mathrm{trace}(\mathbf{Y})$, the gradient is $\nabla_{\mathbf{Y}} f(\mathbf{Y})=\mathbf{I}$, which gives $\mathbf{G}=\bm{\Sigma}^{-2}(\mathcal{D}_{\rx})$. Substituting this into \eqref{eq:condition} and rearranging yields \eqref{eq:planarArrayDesignRuleTrace}. 
\end{proof}

To illustrate \eqref{eq:planarArrayDesignRuleLogDet}, suppose that the 
%\ac{tx} and \ac{rx} array share the same geometry but differ in scale, e.g. $\mathcal{D}_{\tx} = \mathcal{D}$ and $\mathcal{D}_{\rx} = a\mathcal{D}$,
\ac{rx} array equals the \ac{tx} array dilated by a scalar factor $a>0$, i.e., $\mathcal{D}_{\rx} = a\mathcal{D}_{\tx}$, 
such that $a^2\bm{\Sigma}(\mathcal{D}_{\tx}) = \bm{\Sigma}(\mathcal{D}_{\rx})$. Condition \eqref{eq:planarArrayDesignRuleLogDet} then yields that $a \geq 1/\sqrt{n}$ guarantees sum-beam optimality. In other words, if the $\log\det$ function is used as objective, then, the \ac{rx} array may be up to a factor $\sqrt{n}$ smaller than the \ac{tx} array (in $n$ dimensions), while still preserving sum-beam optimality.

Regarding \eqref{eq:planarArrayDesignRuleTrace}, we first note that it 
%We note that \eqref{eq:planarArrayDesignRuleTrace} 
differs from the following (\textcolor{\edited}{proposed}
%alleged sum-beam 
optimality) condition in \cite{forsythe2005waveform}:%\footnote{}
\begin{align}
    \textrm{trace}\left(\bm{\Sigma}(\mathcal{D}_{\rx})\bm{\Sigma}^{-1}(\mathcal{D}_{\tx})\right) > 1.
    \label{eq:planarArrayDesignRuleTraceForsythe}
\end{align}
%
% This discrepancy arises because the derivation in \cite{forsythe2005waveform} appears to contain a mistake.\footnote{Specifically, the authors forget to include dual variables related to the semidefinite constraint in the Lagrangian.} 
% As a result, there exist array geometries that satisfy \eqref{eq:planarArrayDesignRuleTraceForsythe} but not our corrected condition \eqref{eq:planarArrayDesignRuleTrace}, for example $\mathcal{D}_{\tx} = \{(0,0),(2,0),(0,3),(2,3)\}$ and $\mathcal{D}_{\rx} = \{(0,0),(2,0),(0,2),(2,2)\}$. One can verify using numerical optimization that this array indeed does not have the sum-beam as optimal transmit waveform.

This discrepancy arises because \eqref{eq:planarArrayDesignRuleTraceForsythe} %\cite{forsythe2005waveform} 
is \textcolor{\edited}{derived from an incomplete stationarity condition. More specifically, the Lagrangian of the problem in \eqref{eq:objectiveFunctionCRBM} is \eqref{eq:trueLagr}.}
\begin{figure*}[t]
\textcolor{\edited}{
\begin{align}
    \mathcal{L} = %\frac{\sigma^2}{2|\gamma|^2 N_{\tx}N_{\rx}}
    % \Big(
    \underbrace{\text{trace}
    \left(
        \left[
            \lambda_{1} \bm{\Sigma}_{\rx}
            \!+\!
            \bm{\Sigma}^{\frac{1}{2}}_{\tx}
            \bm{\Lambda}_{2}
            \bm{\Sigma}^{\frac{1}{2}}_{\tx} 
        \right]^{-1}
        \right)
    + \mu(\lambda_{1} 
        + 
    \textrm{trace}(\bm{\Lambda}_{2}) - 1)}_{\tilde{\mathcal{L}}}
    -
    \nu \lambda_1
    -
    \text{trace}(\mathbf{Z}\bm{\Lambda}_2).
    \label{eq:trueLagr}
\end{align}
\hrule
}
\end{figure*}
\textcolor{\edited}{
However, \cite{forsythe2005waveform} uses $\tilde{\mathcal{L}}$, omitting the terms related to the inequality constraints, which leads to incomplete stationarity conditions $\nabla_{{\lambda}_1}\tilde{\mathcal{L}} = 0$ and $\nabla_{\bm{\Lambda}_2}\tilde{\mathcal{L}} = \mathbf{0}$. In contrast,} \eqref{eq:planarArrayDesignRuleTrace} is derived from the \textcolor{\edited}{complete} \ac{kkt} conditions. 
%\textcolor{red}{[Ids: What do you mean by ``approximate" and ``actual"? \cite{forsythe2005waveform} forget terms in the Lagrangian and then only look at a (hence wrong) stationarity condition.]}
Indeed, there exist array geometries for which \textcolor{\edited}{the coherent waveform} \eqref{eq:optTxWaveform_sumPlanar} is suboptimal although \eqref{eq:planarArrayDesignRuleTraceForsythe} is satisfied: %(and $f(\cdot)=\textrm{\normalfont trace}(\cdot)$): \textcolor{red}{[Ids: can the ``(...):" be removed?]}
An example is $\mathcal{D}_{\tx}\!=\!\{(0,0),(2,0),(0,5),(2,5)\}$ and $\mathcal{D}_{\rx}\!=\!\{(0,0),(2,0),(0,2),(2,2)\}$\textcolor{\edited}{;} this can be verified by numerically solving (convex) optimization problem \eqref{eq:objectiveFunctionCRBM} and comparing the objective value (\SI{5.42e-2}{}) to the trace of \eqref{eq:CRBMoptimalTxWaveFplanar} (\SI{6.25e-2}{}). 
%for which it can be verified---by numerically solving (convex) optimization problem \eqref{eq:objectiveFunctionCRBM}---that \eqref{eq:optTxWaveform_sumPlanar} is suboptimal, despite the fact that \eqref{eq:planarArrayDesignRuleTraceForsythe} is satisfied. 
In contrast, \eqref{eq:condition} is \textcolor{\edited}{a %the correct
necessary and} sufficient condition for \textcolor{\edited}{the coherent waveform} \eqref{eq:optTxWaveform_sumPlanar} to \textcolor{\edited}{be the solution to} \eqref{eq:objectiveFunctionCRBM}. \textcolor{\edited}{It can be verified that the example above does not satisfy \eqref{eq:condition}, or more specifically, \eqref{eq:planarArrayDesignRuleTrace} %, since
(\cite{forsythe2005waveform} uses $f(\cdot)=\text{trace}(\cdot)$).}

% \textcolor{red}{[Robin: The difference in the objectives is very small ($0.32\%$), so a reviewer might justifiably doubt whether the difference is just due to numerical inaccuracy. Perhaps better to omit objective function values if we don't have a more convincing example.]}

% \textcolor{red}{Ids: You can lengthen the $\mathcal{D}_{\tx}$ more (than the current $1.5$ ratio) to increase the difference. For example: $\mathcal{D}_{\tx}\!=\!\{(0,0),(2,0),(0,5),(2,5)\}$ leads to an objective of $\SI{5.42e-2}{}$, and $\mathcal{D}_{\tx}\!=\!\{(0,0),(2,0),(0,50),(2,50)\}$ leads to $\SI{3.37e-2}{}$, of course the \SI{6.25e-2}{} does not change. Back when I was constructing this example I listed this with $3$ as it is the first integer $>2$ for which this difference is visible.}

%The condition in \eqref{eq:planarArrayDesignRuleTrace} is less transparent in general. However, for \textit{symmetric} arrays, the spatial covariance matrix becomes diagonal. 
To illustrate \eqref{eq:planarArrayDesignRuleTrace}, we consider \textit{symmetrical} \ac{tx} and \ac{rx} arrays, whose spatial covariance matrices become diagonal.
We define an array $\mathcal{D}$ to be symmetric if
    %it obeys symmetry with respect to the vertical axis at $\mu^{(1)}(\mathcal{D})$ and the horizontal axis at $\mu^{(2)}(\mathcal{D})$, hence if
\begin{align}
  %  \mathbf{d} \in \mathcal{D}
   % \!\implies\! 
    % \begin{bmatrix}
    %     2\mu^{(1)} - d^{(1)} \\
    %     d^{(2)}
    % \end{bmatrix},
    % \begin{bmatrix}
    %     d^{(1)} \\
    %     2\mu^{(2)} - d^{(2)}
    % \end{bmatrix}
    % \in \mathcal{D},
\begin{bmatrix}
        2\mu^{(1)} - d^{(1)} \\
        d^{(2)}
    \end{bmatrix}
    \!\in\!\mathcal{D}, \forall \mathbf{d}\!\in\!\mathcal{D}
    \text{ or }
    \begin{bmatrix}
        d^{(1)} \\
        2\mu^{(2)} - d^{(2)}
    \end{bmatrix}
    \!\in\!\mathcal{D}, \forall \mathbf{d}\!\in\!\mathcal{D},
\label{eq:symmetricArraySufficientCondition}
\end{align}
i.e., if $\mathcal{D}$ is symmetrical with respect to 
%both 
the vertical axis $\mu^{(1)}(\mathcal{D})$ or horizontal axis $\mu^{(2)}(\mathcal{D})$ (or both). 

% \textcolor{red}{[Ids: we only need one symmetry axis for the covariance to be diagonal, but we can have two (or more).]}

% \textcolor{red}{[Robin: Check revised symmetry condition.]}

%
For a symmetric array $\mathcal{D}$, the off-diagonal terms of $\mathbf{\Sigma}(\mathcal{D})$ vanish\textcolor{\edited}{,} as verified by substituting \eqref{eq:symmetricArraySufficientCondition} into \eqref{eq:SpatialCovariance}: 
%Substituting one of the relations in \eqref{eq:symmetricArraySufficientCondition} into \eqref{eq:SpatialCovariance} shows that, for symmetric arrays, the off-diagonal terms of the spatial covariance matrix vanish:
\begin{align}
    &[\mathbf{\Sigma}(\mathcal{D})]_{1,2}
    = \frac{1}{|\mathcal{D}|}\sum_{\begin{subarray}{c}
        \mathbf{d} %= [d^{(1)}, d^{(2)}]^{\T} \\
        \in \mathcal{D}
    \end{subarray}}
    \left(d^{(1)} - \mu^{(1)}(\mathcal{D})\right)
    \left(d^{(2)} - \mu^{(2)}(\mathcal{D})\right) \notag \\
    & = \frac{1}{2|\mathcal{D}|}\sum_{\begin{subarray}{c}
        \mathbf{d} %= [d^{(1)}, d^{(2)}]^{\T} \\
        \in \mathcal{D}
    \end{subarray}}
    \left(d^{(1)} - \mu^{(1)}(\mathcal{D})\right)
    \left(d^{(2)} - \mu^{(2)}(\mathcal{D})\right) \notag \\
    & \ +\frac{1}{2|\mathcal{D}|}
    \sum_{\begin{subarray}{c}
        \mathbf{d} %= [d^{(1)}, d^{(2)}]^{\T} \\
        \in \mathcal{D}
    \end{subarray}}
    \left(2\mu^{(1)} - d^{(1)} - \mu^{(1)}(\mathcal{D})\right)
    \left(d^{(2)} - \mu^{(2)}(\mathcal{D})\right) \notag\\
    & = \frac{1}{|\mathcal{D}|}\sum_{\begin{subarray}{c}
        \mathbf{d} %= [d^{(1)}, d^{(2)}]^{\T} \\
        \in \mathcal{D}
    \end{subarray}} 
    0\left(d^{(2)} - \mu^{(2)}(\mathcal{D})\right) = 0.
    \label{eq:offdiagonalsEqualZero}
\end{align}
% \textcolor{red}{[Robin: The above calculation is not very illuminating, especially since there is undefined notation $\mathcal{D}_1$.]}
% \textcolor{red}{[Ids: I will check if I can find a reference for this property, maybe \cite{baysal2003onthegeometry}.]}
%Here $\mathcal{D}_1 = \{\mathbf{d}\in\mathcal{D} \ | \ [2\mu^{(1)} - d^{(1)},d^{(2)}]^{\T}\notin \mathcal{D}\}$.
Hence, if $\bm{\Sigma}(\mathcal{D}_{\tx})=\textrm{diag}(t_1,t_2)$ and $\bm{\Sigma}(\mathcal{D}_{\rx})=\textrm{diag}(r_1,r_2)$ denote the spatial covariance of a symmetrical Tx and Rx array, respectively, then \eqref{eq:planarArrayDesignRuleTrace} simplifies to the pair of inequalities
\begin{align}
    r_1\left(1 + \frac{r_1}{r_2}\right) \geq t_1 
    \quad \wedge \quad
    r_2\left(1+\frac{r_2}{r_1}\right) \geq t_2.
\end{align}
%This shows that the ratio $r_1/r_2$ (and its inverse) determines how much smaller the \ac{rx} array can be along each axis to preserve sum-beam optimality.
The ratio $r_1/r_2$ thus determines how much smaller the spatial variance of the \ac{rx} array can be along each axis (compared to the \ac{tx} array), such that the optimal waveform is given by \eqref{eq:optTxWaveform_sumPlanar}.
In case the spatial covariance matrices are scaled identities, i.e., $\bm{\Sigma}_{\tx} = t\mathbf{I}_n$, $\bm{\Sigma}_{\rx} = r\mathbf{I}_n$, with $r,t>0$, both conditions \eqref{eq:planarArrayDesignRuleLogDet} and \eqref{eq:planarArrayDesignRuleTrace} simplify to (for $n$-dimensional arrays)
\begin{align}
    r n \geq t.
    \label{eq:conditionIsotropic}
\end{align}
Arrays with scaled identity spatial covariance matrices are referred to as \textit{isotropic} arrays~\cite{baysal2003onthegeometry}. For such arrays, the \ac{crbm} with respect to the azimuth and elevation 
\emph{angles} (not just spatial frequency $\bm{\omega}$) is independent of the actual values of $(\theta,\phi)$ \cite{baysal2003onthegeometry}.
For example, consider isotropic \ac{tx} and \ac{rx} circular arrays with radii $\rho_{\tx}$ and $\rho_{\rx}$, respectively. Regardless of the number of sensors, their spatial covariance matrices are scaled identities with $t = \rho_{\tx}^2/2$ and $r = \rho_{\rx}^2/2$.
\footnote{Let $\mathcal{D}=\{[x,y]^{\T} \mid (x,y)=(\cos(2\pi i/N),\sin(2\pi i/N)),\ i=0,\dots,N-1\}$ denote the unit circular array, for which $\bm{\Sigma}(\mathcal{D})=\tfrac{1}{2}\mathbf{I}$. Hence, the scaled array $\mathcal{D}_{(\cdot)}=\rho_{(\cdot)}\mathcal{D}$ satisfies $\bm{\Sigma}(\mathcal{D}_{(\cdot)})=\rho_{(\cdot)}^2 /2\mathbf{I}$.}
% \footnote{Let $\mathcal{D}=\{[x,y]^T\ | \ (x,y)=(\cos(2\pi i/N),\sin(2\pi i/N)),\ i = 1,\dots,N-1\}$ denote the `unit' circular array which has $\bm{\Sigma}(\mathcal{D}) = \frac{1}{2}\mathbf{I}$. Hence $\mathcal{D}_{(.)}=\rho_{(.)}\mathcal{D}$ has $\bm{\Sigma}(\mathcal{D}_{(.)}) = \frac{\rho_{(.)}}{2}\mathbf{I}$.}
Hence,  for $n=2$, condition~\eqref{eq:conditionIsotropic} simplifies to $\rho_{\rx} \geq \rho_{\tx}/\sqrt{2}$. This is consistent with the scaling relation $1/\sqrt{n}$ mentioned earlier.
%This property has been studied in the passive sensing scenario~\cite{baysal2003onthegeometry}.
% Note that for a symmetric the off-diagonal elements of the spatial covariance matrix $\bm{\Sigma}(\mathcal{D})$ in \eqref{eq:SpatialCovariance} become zero as, 

% If the \ac{tx} and \ac{rx} array are symmetric, the condition in \eqref{eq:planarArrayDesignRuleLogDet} simplifies to two inequality constraints:

To conclude, when extending from linear to planar arrays, the condition for sum-beam optimality generally becomes less restrictive: the spatial covariance of the \ac{rx} array need not be strictly larger than that of the \ac{tx} array. 
%The general condition in 
% Condition 
% \eqref{eq:condition} provides a 
% \textcolor{\edited}{general} 
% %unifying 
% design rule 
% %that can be adapted to the number of dimensions and objective function considered.
% \textcolor{\edited}{applicable to a wide range of objective functions (e.g., trace and log-determinant) and array dimensions (linear, planar, volumetric).}
Additionally, while this section focuses on planar arrays, the extension to cubic arrays is straightforward, as the spatial covariance matrix naturally generalizes from $2\times 2$ to $3\times 3$.

\subsection{Orthogonal waveforms and optimal array geometries}\label{sec:planar_orth}
% This naturally leads to the question: which array geometries maximize spatial covariance?

For orthogonal waveforms, \labelcref{eq:CRBorth,eq:CRBlinearArrayOrth} straightforwardly extend to planar arrays (see \cref{app:spatialVarianceSumCoArray} for details):
%relies solely on the covariance of the sum co-array:
\begin{align}
    \textrm{CRBM}_{\bm{\omega}} (\Sorth)
    & = \frac{\sigma^2}{2|\gamma|^2}\frac{1}{N_{\rx}}
    \left[ 
     \bm{\Sigma}(\mathcal{D}_{\tx})
    + \bm{\Sigma}(\mathcal{D}_{\rx})
    \right]^{-1} \notag \\
    & = \frac{\sigma^2}{2|\gamma|^2}\frac{1}{N_{\rx}}
     \tilde{\bm{\Sigma}}^{-1}(\mathcal{D}_{\Sigma}), \label{eq:CRBMorthWaveformPlanar}
\end{align}
where the multiplicity-weighted spatial covariance of the sum co-array 
$\mathcal{D}_\Sigma = \mathcal{D}_{\tx}+\mathcal{D}_{\rx} = \{\mathbf{d}_{\tx}+\mathbf{d}_{\rx} \ | \ \mathbf{d}_{\tx}\in\mathcal{D}_{\tx} , \mathbf{d}_{\rx}\in\mathcal{D}_{\rx} \}$ is defined similarly to \eqref{eq:spatial_variance_sumcoarray} as
% for planar arrays is, similarly to \eqref{def:sumcoarray1D},
% %
% \begin{align*}
%     \mathcal{D}_\Sigma = \mathcal{D}_{\tx}+\mathcal{D}_{\rx} = \{\mathbf{d}_{\tx}+\mathbf{d}_{\rx} \ | \ \mathbf{d}_{\tx}\in\mathcal{D}_{\tx} , \mathbf{d}_{\rx}\in\mathcal{D}_{\rx} \}.
% \end{align*}
%Note that $\mathcal{D}_\Sigma$ can again contain the same element multiple times. Therefore, we define $\upsilon_\Sigma()$ as the multiplicity of each element.
% Therefore, we define $\mathcal{U}_\Sigma$ as the set of $N_\Sigma \leq N_{\tx}N_{\rx}$ unique elements from $\mathcal{D}_\Sigma$, and $\upsilon_\Sigma()$ as the multiplicity of each element.
%
\begin{align*}
    % ,
    %  =\frac{1}{N_{\tx}N_{\rx}}\sum_{d_{\tx}\in\mathcal{D}_{\tx}}
    % \sum_{d_{\rx}\in\mathcal{D}_{\rx}}(d_{\tx}+d_{\rx}) \notag\\
    % & 
    % = \bm{\mu}(\mathcal{D}_{\tx}) + \bm{\mu}(\mathcal{D}_{\rx}), 
    % \\
    % \tilde{\bm{\Sigma}}(\mathcal{D}_{\Sigma})
    % & = 
    % \frac{1}{N_{\tx}N_{\rx}}
    %     \sum_{\mathbf{d}_{\Sigma}\in\mathcal{D}_{\Sigma}} 
    %     (\mathbf{d}-\tilde{\bm{\mu}}(\mathcal{D}_\Sigma))(\mathbf{d}-\tilde{\bm{\mu}}(\mathcal{D}_\Sigma))^{\T}
    %     \upsilon_\Sigma(\mathbf{d}_{\Sigma})
    \tilde{\bm{\Sigma}}(\mathcal{D}_{\Sigma})\!=\! 
    \frac{1}{N_{\tx}N_{\rx}}
        \sum_{\mathbf{d}_{\Sigma}\in\mathcal{D}_{\Sigma}} 
        (\mathbf{d}_\Sigma\!-\!\tilde{\bm{\mu}}(\mathcal{D}_\Sigma))(\mathbf{d}_\Sigma\!-\!\tilde{\bm{\mu}}(\mathcal{D}_\Sigma))^{\T}
        \upsilon(\mathbf{d}_{\Sigma}),
\end{align*}
with multiplicity-weighted spatial mean $\tilde{\bm{\mu}}(\mathcal{D}_{\Sigma}) = \frac{1}{N_{\tx}N_{\rx}}\sum_{\mathbf{d}_{\Sigma}\in\mathcal{D}_{\Sigma}}\mathbf{d}_{\Sigma}\upsilon(\mathbf{d}_{\Sigma})$. 
%Here we have used that $\bm{\Sigma}(\mathcal{D}_{\tx}) + \bm{\Sigma}(\mathcal{D}_{\rx}) = \tilde{\bm{\Sigma}}(\mathcal{D}_{\Sigma})$, which is derived in detail in Appendix \ref{app:spatialVarianceSumCoArray}.

For both orthogonal and coherent waveforms, 
%identifying 
characterizing 
optimal planar arrays 
remains largely an open problem.
%The problem is more challenging than in the linear case, as the answer depends on the choice of scalar objective function $f(\txt{CRBM}_{\bm{\omega}})$. %, since the objective involves minimizing $f(\txt{CRBM}_{\bm{\omega}})$. 
However, by simply extrapolating the intuition behind \eqref{eq:clustered} to two dimensions, 
%``maximizing'' the spatial covariance matrix of the \ac{tx} /\ac{rx} arrays 
a low CRB can be achieved for many choices of objective function $f(\txt{CRBM}_{\bm{\omega}})$ by placing \ac{tx}/\ac{rx} sensors along the 
\emph{boundary} of a given area (e.g., rectangle or circle) containing them. 
%edges is beneficial. 
%, but obtaining closed-form optimal planar arrays remains an open problem.
We note that similar sensor selection problems have previously been addressed using greedy or submodular optimization methods~\cite{shamaiah2010greedy,tohidi2019sparse}. While such approaches can yield upper bounds on the worst-case suboptimality of any found array geometry, it can be shown that $f(\txt{CRBM}_{\bm{\omega}})$ considered here unfortunately lacks the required submodularity property. %for non-decreasing $f$ on the cone of positive definite matrices (such as trace and log determinant). %We hence find optimal planar array geometries by exhaustive search.
%\textcolor{red}{[Ids: strictly speaking we should talk about supermodularity, as we are minimizing f(), instead of maximizing. However, I doubt if the general reader knows this difference, and the term supermodularity might not be familiar?]}
% \textcolor{red}{Robin: Ids, please check that the preceding statement is correct.}
%for 
%the \ac{crbm}s considered 
%herein, due to lack of
%here, the 
%submodularity property does not hold, limiting the direct applicability of such approaches.

\begin{figure}[tb]
\vspace{-.6cm}
    \newcommand{\dataLoc}{Data/OptPlanarArrayCohWavN6}
    \newcommand{\xmax}{3}
    \newcommand{\xmin}{0}
    \newcommand{\ymax}{3}
    \newcommand{\ymin}{0}
    \newcommand{\xwidth}{3}
    \newcommand{\ywidth}{3}
    \newcommand{\msize}{1.5}
	\pgfplotsset{every x tick label/.append style={font=\tiny},every y tick label/.append style={font=\tiny}}
    \centering
    \subfloat[$N_{\rx}=6$.]{
    \begin{tikzpicture}
			\begin{axis}%
            [width= \xwidth cm ,height=\ywidth cm, %
            xmin=\xmin-0.2,xmax=\xmax+0.2,xtick={\xmin,\xmin+1,...,\xmax},
            ymin=\ymin-0.2,ymax=\ymax+0.2,ytick={\ymin,\ymin+1,...,\ymax},
            xticklabel shift = 0 pt,xtick pos=bottom,axis line style={draw=none},
            yticklabel shift = 0 pt,ytick pos=left,axis line style={draw=none},      
            ]
				% Dt
                \addplot[blue,only marks,mark=square*,mark size=\msize,
                y filter/.code={\pgfmathparse{\pgfmathresult}},%
                x filter/.code={\pgfmathparse{\pgfmathresult}}]
                table[x=x,y=y]{\dataLoc/Dr.dat};
			\end{axis}
		\end{tikzpicture}
        }%
    \renewcommand{\dataLoc}{Data/OptPlanarArrayCohWavN8}
    \subfloat[$N_{\rx}=8$.]{
    \begin{tikzpicture}
			\begin{axis}%
            [width= \xwidth cm ,height=\ywidth cm, %
            xmin=\xmin-0.2,xmax=\xmax+0.2,xtick={\xmin,\xmin+1,...,\xmax},
            ymin=\ymin-0.2,ymax=\ymax+0.2,ytick={\ymin,\ymin+1,...,\ymax},
            xticklabel shift = 0 pt,xtick pos=bottom,axis line style={draw=none},
            yticklabel shift = 0 pt,ytick pos=left,axis line style={draw=none},      
            ]
				% Dt
                \addplot[blue,only marks,mark=square*,mark size=\msize,
                y filter/.code={\pgfmathparse{\pgfmathresult}},%
                x filter/.code={\pgfmathparse{\pgfmathresult}}]
                table[x=x,y=y]{\dataLoc/Dr.dat};
			\end{axis}
		\end{tikzpicture}
        }%
    \renewcommand{\dataLoc}{Data/OptPlanarArrayCohWavN10}
    \subfloat[$N_{\rx}=10$.]{
    \begin{tikzpicture}
			\begin{axis}%
            [width= \xwidth cm ,height=\ywidth cm, %
            xmin=\xmin-0.2,xmax=\xmax+0.2,xtick={\xmin,\xmin+1,...,\xmax},
            ymin=\ymin-0.2,ymax=\ymax+0.2,ytick={\ymin,\ymin+1,...,\ymax},
            xticklabel shift = 0 pt,xtick pos=bottom,axis line style={draw=none},
            yticklabel shift = 0 pt,ytick pos=left,axis line style={draw=none},      
            ]
				% Dt
                \addplot[blue,only marks,mark=square*,mark size=\msize,
                y filter/.code={\pgfmathparse{\pgfmathresult}},%
                x filter/.code={\pgfmathparse{\pgfmathresult}}]
                table[x=x,y=y]{\dataLoc/Dr.dat};
			\end{axis}
		\end{tikzpicture}
        }%
    \renewcommand{\dataLoc}{Data/OptPlanarArrayCohWavN12}
    \subfloat[$N_{\rx}=12$.]{
    \begin{tikzpicture}
			\begin{axis}%
            [width= \xwidth cm ,height=\ywidth cm, %
            xmin=\xmin-0.2,xmax=\xmax+0.2,xtick={\xmin,\xmin+1,...,\xmax},
            ymin=\ymin-0.2,ymax=\ymax+0.2,ytick={\ymin,\ymin+1,...,\ymax},
            xticklabel shift = 0 pt,xtick pos=bottom,axis line style={draw=none},
            yticklabel shift = 0 pt,ytick pos=left,axis line style={draw=none},      
            ]
				% Dt
                \addplot[blue,only marks,mark=square*,mark size=\msize,
                y filter/.code={\pgfmathparse{\pgfmathresult}},%
                x filter/.code={\pgfmathparse{\pgfmathresult}}]
                table[x=x,y=y]{\dataLoc/Dr.dat};
			\end{axis}
		\end{tikzpicture}
        }%
    \caption{\ac{rx} array geometries minimizing the trace of the \ac{crbm} for coherent (optimal) waveforms \eqref{eq:CRBMoptimalTxWaveFplanar} for different number of sensors. Sensors tend to lie on the boundary of the $4\times 4$ square they are constrained to lie in.
    %The aperture is set to $L = 3$.
    %\textcolor{red}{[Take-home message here?]}
    }
    \vspace{-0.3cm}
    \label{fig:OptPlanarArrayOptWav}
\end{figure}
\begin{figure*}[tb]
    \vspace{-0.8cm}
    \newcommand{\dataLoc}{Data/OptPlanarArrayOrthWavN6}
    \newcommand{\xmax}{3}
    \newcommand{\xmin}{0}
    \newcommand{\ymax}{3}
    \newcommand{\ymin}{0}
    \newcommand{\xwidth}{3.2}
    \newcommand{\ywidth}{3.2}
    \newcommand{\msize}{1.5}
	\pgfplotsset{every x tick label/.append style={font=\tiny},every y tick label/.append style={font=\tiny}}
    \centering
    \subfloat[$N_{\tx}=N_{\rx}=6$.]{
    \begin{tikzpicture}
			\begin{axis}%
            [width= \xwidth cm ,height=\ywidth cm, %
            xmin=\xmin-0.2,xmax=\xmax+0.2,xtick={\xmin,\xmin+1,...,\xmax},
            ymin=\ymin-0.2,ymax=\ymax+0.2,ytick={\ymin,\ymin+1,...,\ymax},
            xticklabel shift = 0 pt,xtick pos=bottom,axis line style={draw=none},
            yticklabel shift = 0 pt,ytick pos=left,axis line style={draw=none},      
            ]
				% Dt
                \addplot[orange,only marks,line width = 0.7,mark=star,mark size=2.2,
                y filter/.code={\pgfmathparse{\pgfmathresult}},%
                x filter/.code={\pgfmathparse{\pgfmathresult}}]
                table[x=x,y=y]{\dataLoc/Dt.dat};
			\end{axis}
		\end{tikzpicture}
        % \hspace{0.1cm}
        \begin{tikzpicture}
			\begin{axis}%
            [width= \xwidth cm ,height=\ywidth cm, %
            xmin=\xmin-0.2,xmax=\xmax+0.2,xtick={\xmin,\xmin+1,...,\xmax},
            ymin=\ymin-0.2,ymax=\ymax+0.2,ytick={\ymin,\ymin+1,...,\ymax},
            xticklabel shift = 0 pt,xtick pos=bottom,axis line style={draw=none},
            yticklabel shift = 0 pt,ytick pos=left,axis line style={draw=none},      
            ]
				% Dt
                \addplot[blue,only marks,mark=square*,mark size=\msize,
                y filter/.code={\pgfmathparse{\pgfmathresult}},%
                x filter/.code={\pgfmathparse{\pgfmathresult}}]
                table[x=x,y=y]{\dataLoc/Dr.dat};
			\end{axis}
		\end{tikzpicture}%\label{subfig:Dt2D}
        }%
    \renewcommand{\dataLoc}{Data/OptPlanarArrayOrthWavN8}
    \subfloat[$N_{\tx}=N_{\rx}=8$.]{
    \begin{tikzpicture}
			\begin{axis}%
            [width= \xwidth cm ,height=\ywidth cm, %
            xmin=\xmin-0.2,xmax=\xmax+0.2,xtick={\xmin,\xmin+1,...,\xmax},
            ymin=\ymin-0.2,ymax=\ymax+0.2,ytick={\ymin,\ymin+1,...,\ymax},
            xticklabel shift = 0 pt,xtick pos=bottom,axis line style={draw=none},
            yticklabel shift = 0 pt,ytick pos=left,axis line style={draw=none},      
            ]
				% Dt
                \addplot[orange,only marks,line width = 0.7,mark=star,mark size=2.2,
                y filter/.code={\pgfmathparse{\pgfmathresult}},%
                x filter/.code={\pgfmathparse{\pgfmathresult}}]
                table[x=x,y=y]{\dataLoc/Dt1.dat};
			\end{axis}
		\end{tikzpicture}
        % \hspace{0.1cm}
        \begin{tikzpicture}
			\begin{axis}%
            [width= \xwidth cm ,height=\ywidth cm, %
            xmin=\xmin-0.2,xmax=\xmax+0.2,xtick={\xmin,\xmin+1,...,\xmax},
            ymin=\ymin-0.2,ymax=\ymax+0.2,ytick={\ymin,\ymin+1,...,\ymax},
            xticklabel shift = 0 pt,xtick pos=bottom,axis line style={draw=none},
            yticklabel shift = 0 pt,ytick pos=left,axis line style={draw=none},      
            ]
				% Dt
                \addplot[blue,only marks,mark=square*,mark size=\msize,
                y filter/.code={\pgfmathparse{\pgfmathresult}},%
                x filter/.code={\pgfmathparse{\pgfmathresult}}]
                table[x=x,y=y]{\dataLoc/Dr1.dat};
			\end{axis}
		\end{tikzpicture}%
        \label{subfig:N8option1}
        }%
    \subfloat[$N_{\tx}=N_{\rx}=8$.]{
    \begin{tikzpicture}
			\begin{axis}%
            [width= \xwidth cm ,height=\ywidth cm, %
            xmin=\xmin-0.2,xmax=\xmax+0.2,xtick={\xmin,\xmin+1,...,\xmax},
            ymin=\ymin-0.2,ymax=\ymax+0.2,ytick={\ymin,\ymin+1,...,\ymax},
            xticklabel shift = 0 pt,xtick pos=bottom,axis line style={draw=none},
            yticklabel shift = 0 pt,ytick pos=left,axis line style={draw=none},      
            ]
				% Dt
                \addplot[orange,only marks,line width = 0.7,mark=star,mark size=2.2,
                y filter/.code={\pgfmathparse{\pgfmathresult}},%
                x filter/.code={\pgfmathparse{\pgfmathresult}}]
                table[x=x,y=y]{\dataLoc/Dt2.dat};
			\end{axis}
		\end{tikzpicture}
        % \hspace{0.1cm}
        \begin{tikzpicture}
			\begin{axis}%
            [width= \xwidth cm ,height=\ywidth cm, %
            xmin=\xmin-0.2,xmax=\xmax+0.2,xtick={\xmin,\xmin+1,...,\xmax},
            ymin=\ymin-0.2,ymax=\ymax+0.2,ytick={\ymin,\ymin+1,...,\ymax},
            xticklabel shift = 0 pt,xtick pos=bottom,axis line style={draw=none},
            yticklabel shift = 0 pt,ytick pos=left,axis line style={draw=none},      
            ]
				% Dt
                \addplot[blue,only marks,mark=square*,mark size=\msize,
                y filter/.code={\pgfmathparse{\pgfmathresult}},%
                x filter/.code={\pgfmathparse{\pgfmathresult}}]
                table[x=x,y=y]{\dataLoc/Dr2.dat};
			\end{axis}
		\end{tikzpicture}%
        \label{subfig:N8option2}
        }%
    \subfloat[$N_{\tx}=N_{\rx}=8$.]{
    \begin{tikzpicture}
			\begin{axis}%
            [width= \xwidth cm ,height=\ywidth cm, %
            xmin=\xmin-0.2,xmax=\xmax+0.2,xtick={\xmin,\xmin+1,...,\xmax},
            ymin=\ymin-0.2,ymax=\ymax+0.2,ytick={\ymin,\ymin+1,...,\ymax},
            xticklabel shift = 0 pt,xtick pos=bottom,axis line style={draw=none},
            yticklabel shift = 0 pt,ytick pos=left,axis line style={draw=none},      
            ]
				% Dt
                \addplot[orange,only marks,line width = 0.7,mark=star,mark size=2.2,
                y filter/.code={\pgfmathparse{\pgfmathresult}},%
                x filter/.code={\pgfmathparse{\pgfmathresult}}]
                table[x=x,y=y]{\dataLoc/Dt3.dat};
			\end{axis}
		\end{tikzpicture}
        % \hspace{0.1cm}
        \begin{tikzpicture}
			\begin{axis}%
            [width= \xwidth cm ,height=\ywidth cm, %
            xmin=\xmin-0.2,xmax=\xmax+0.2,xtick={\xmin,\xmin+1,...,\xmax},
            ymin=\ymin-0.2,ymax=\ymax+0.2,ytick={\ymin,\ymin+1,...,\ymax},
            xticklabel shift = 0 pt,xtick pos=bottom,axis line style={draw=none},
            yticklabel shift = 0 pt,ytick pos=left,axis line style={draw=none},      
            ]
				% Dt
                \addplot[blue,only marks,mark=square*,mark size=\msize,
                y filter/.code={\pgfmathparse{\pgfmathresult}},%
                x filter/.code={\pgfmathparse{\pgfmathresult}}]
                table[x=x,y=y]{\dataLoc/Dr3.dat};
			\end{axis}
		\end{tikzpicture}%
        \label{subfig:N8option3}
        }%
    \caption{Array geometries minimizing the \ac{crbm} trace for orthogonal waveforms \eqref{eq:CRBMorthWaveformPlanar}. %for different number of sensors. The aperture is set to $L = 3$. 
    Multiple optimal configurations can exist, as \protect\subref{subfig:N8option1}, \protect\subref{subfig:N8option2} and \protect\subref{subfig:N8option3} demonstrate.
   % Multiple array geometries can lead to the same optimum, as \protect\subref{subfig:N8option1}, \protect\subref{subfig:N8option2} and \protect\subref{subfig:N8option3} demonstrate. 
    %Moreover, rotating both \ac{tx} and \ac{rx} array by $\SI{\pm 90}{\degree}$ leads to the same trace of the \ac{crbm} and are hence also optimal array geometries.
 %   Further optimal array geometries can be produced by rotating both \ac{tx} and \ac{rx} arrays by $\SI{\pm 90}{\degree}$, as this does not change the \ac{crbm} trace.% \textcolor{red}{[Take-home message here]}
    }
    \vspace{-0.5cm}
    \label{fig:OptPlanarArrayOrthWav}
\end{figure*}

Nevertheless, optimal planar array geometries of moderate size can be found via exhaustive search.
\cref{fig:OptPlanarArrayOptWav} shows optimal \ac{rx} array geometries for coherent waveforms with different numbers of sensors under a fixed aperture. These can be paired with any \ac{tx} array 
satisfying 
%, provided the 
condition 
%in 
\eqref{eq:planarArrayDesignRuleTrace}. 
%is satisfied. 
\cref{fig:OptPlanarArrayOrthWav} illustrates the corresponding optimal \ac{tx}/\ac{rx} array geometries for orthogonal waveforms. 
In both cases, multiple distinct geometries yield the same \ac{crbm} trace. Moreover, rotating the \ac{rx} array (coherent case) or both \ac{tx} and \ac{rx} arrays (orthogonal case) by $\SI{\pm 90}{\degree}$ leaves the trace unchanged. 
This observation motivates the study of array geometries with equal \ac{crbm}, which we explore in the next section.

\section{CRB-Equivalent Array Geometries% via Equal Sums of Squares
}\label{sec:arraysWithEqualCRB}
% The \ac{crb} (\ac{crbm}) for both orthogonal and optimal coherent waveforms is a function of the spatial (co)variance of the \ac{tx} and/or \ac{rx} arrays per \labelcref{eq:CRB_sb,eq:CRBorth} (\labelcref{eq:CRBMorthWaveformPlanar,eq:CRBMoptimalTxWaveFplanar}). 
% Consequently, any two arrays with equal spatial covariance yield the same \ac{crbm}, and thereby same estimation performance in the asymptotic, high \ac{snr} region. %is not compromised when replacing one array with another that has identical spatial (co)variance. 
%The \ac{crbm} is a function of the spatial covariance of the physical arrays or multiplicity weighted spatial covariance of the sum co-array per \labelcref{eq:CRBMorthWaveformPlanar,eq:CRBMoptimalTxWaveFplanar} for orthogonal and optimal coherent waveforms. 
The expressions \eqref{eq:CRBMoptimalTxWaveFplanar} and \eqref{eq:CRBMorthWaveformPlanar} suggest that physical \ac{rx} arrays (sum co-arrays) with equal (multiplicity-weighted) spatial covariance yield the same \ac{crbm}, and thereby same estimation performance in the asymptotic, high \ac{snr} region when employing coherent (orthogonal) waveforms. 
This reveals an overlooked degree of freedom in array design that we investigate next. Specifically, we establish a connection to sequences with \emph{equal \acf{sos}} \cite{barnett1942SoS,bradley1998equal}, a particular form of \emph{Diophantine equations}. \textcolor{\edited}{This connection yields a systematic method} for constructing equi-\ac{crb} arrays, \textcolor{\edited}{generalizing our earlier results in} \cite{vanderWerf2025Jointly}.%, i.e., multivariate polynomials with integer-valued solutions.

\subsection{Equal sums of squares (SoS)% and CRB-equivalent array design
}
The equal \ac{sos} problem of interest herein is that of finding two distinct integer sequences $(x_j)_{j=1}^n$ and $(\hat{x}_i)_{i=1}^n$ of size $n$ satisfying $\sum^n_{i=1} x_i^2 =\sum^n_{j=1} \hat{x}_j^2$. Solutions directly yield linear array geometries with equal spatial variance per \labelcref{eq:CRB_sb,eq:CRBorth}, as we will explain later. Examples of equal \ac{sos} sequences are $(4,7)$ and $(1,8)$ for $n =2$, and $(0,1,4)$ and $(2,2,3)$ for $n = 3$. 
Such sequences are of interest when sensors lie on a grid of integer multiples of some unit spacing (typically half a wavelength). While generating integer sequences with equal \ac{sos} %can generally be posed as finding \emph{integer-valued} solutions to certain nonlinear systems of equations \cite{bradley1998equal}, which 
is in general computationally challenging, the problem simplifies considerably for pairs of integers ($n = 2$).\footnote{In absence of integer constraints, e.g., in case of so-called rational arrays \cite{kulkarni2022noninteger}, generating (rational or real) sequences with equal SoS may also be easier.} We henceforth focus on this case to illustrate the basic idea of CRB-equivalent array design via equal SoS sequences. 

Given integers $x_1,x_2$ and coprime integers $k_1,k_2$, we can generate another pair of numbers %(not necessarily integer) 
with the same \ac{sos} %as exists a closed form solution that allows us to generate pairs of integers with equal \ac{sos}~
\cite[(4)]{bradley1998equal}:
\begin{align}
    &(\hat{x}_1,\hat{x}_2) \notag\\
    &\!=\!\left(\frac{-(k_2^2\!-\!k_1^2)x_1\!+\!2k_1k_2x_2}{k_1^2+k_2^2}\!,\!\frac{2k_1k_2x_1\!+\!(k_2^2\!-\!k_1^2)x_2}{k_1^2+k_2^2}\right).
    \label{eq:generateESOS}
\end{align}
While it is straightforward to verify by substitution that $\hat{x}_1^2+\hat{x}_2^2=x_1^2+x_2^2$, the challenge lies in choosing $k_1,k_2$, given $x_1,x_2$, such that $\hat{x}_1,\hat{x}_2$ are also integers (and distinct from $x_1,x_2$). For simplicity, in the following we consider the case 
\begin{align}
    \{x_1,x_2\}=\{\ell\delta,(\ell+1)\delta\},\label{eq:esos}
\end{align}
%$\{x_1,x_2\}=\{\ell\delta,(\ell+1)\delta\}$, 
$\ell\in\mathbb{N}$, where setting $\delta=k_1^2+k_2^2$ ensures that $\hat{x}_1,\hat{x}_2\in\mathbb{Z}$.

\subsection{Linear arrays with equal spatial variance}
%This subsection shows how distinct linear array geometries with equal spatial variance, and thus identical \ac{crb}, can be systematically constructed using equal \ac{sos} sequences, as illustrated in the following example.  
The next example illustrates how distinct linear array geometries with equal spatial variance, and thus identical \ac{crb}, can be systematically constructed using equal \ac{sos} sequences.

%\cref{eq:generateESOS,eq:esos} enable constructing distinct pairs of linear array geometries with equal \ac{crb}, as  demonstrated next. 
\begin{example} \label{ex:lin_esos}%[Linear arrays with equal spatial variance]%[Constructing the \ac{tx} arrays in \cref{subfig:arrayC,subfig:arrayD}]
Let $\ell=1$ and $(k_1,k_2)=(2,1)$, such that $(x_1,x_2) = (5,10)$ by \eqref{eq:esos}. Then, $(\hat{x}_1,\hat{x}_2) = (11,-2)$ by \eqref{eq:generateESOS}. %We can verify that indeed $x_1^2+x_2^2 = \hat{x}_1^2+\hat{x}_2^2$. 
\cref{subfig:arrayC,subfig:arrayD} show the two example \ac{tx} array geometries of equal spatial variance constructed using these sequences:% as follows
%To construct two array geometries with equal spatial variance, define
\[
\mathcal{D}_{\tx}^{(a)} = \{-x_2,-x_1,x_1,x_2\}+x_2 = \{0,5,15,20\}%\{-10,-5,5,10\}+10, 
\]
\[
\mathcal{D}_{\tx}^{(b)} = \{-\hat{x}_1,\hat{x}_2,-\hat{x}_2,\hat{x}_1\}
+\hat{x}_1= \{0,9,13,22\}.%.\{-11,-2,2,11\}+11.
\]
The \ac{rx} arrays in \cref{subfig:arrayC,subfig:arrayD} are constructed similarly using equal SoS sequences $(4,7)$ and $(1,8)$.
%Note that shifting does not affect the spatial variance.
%This is illustrated in \cref{subfig:arrayC,subfig:arrayD}, which show two array geometries with 
Since their \ac{tx} and \ac{rx} spatial variances are equal, the array geometries in \cref{subfig:arrayC,subfig:arrayD} achieve identical \ac{crb} when employing either orthogonal or optimal coherent transmit waveforms. 
\end{example}
\cref{ex:lin_esos} can be extended to any number of elements straightforwardly by generating several pairs of equal SoS using \labelcref{eq:generateESOS,eq:esos}, or by adding the same elements to both arrays, as 
$\bm{\Sigma}(\mathcal{D}^{(a)}_{\tx}\cup \mathcal{D}) = \bm{\Sigma}(\mathcal{D}^{(b)}_{\tx}\cup \mathcal{D})$
% $\sum_{i=1}^nx_i^2\!+\!c^2\!=\!\sum_{j=1}^n\hat{x}_j^2\!+\!c^2$ for any $c\!\in\!\mathbb{R}$
for any $\mathcal{D}$ such that $\mathcal{D}^{(a)}_{\tx}\cap \mathcal{D}=\mathcal{D}^{(b)}_{\tx}\cap \mathcal{D}=\emptyset$ and $\bm{\Sigma}(\mathcal{D}_{\tx}^{(a)}) = \bm{\Sigma}(\mathcal{D}_{\tx}^{(b)})$.
% This is illustrated in \cref{subfig:arrayC,subfig:arrayD}, which show two array geometries with equal \ac{tx} and \ac{rx} spatial variance, and thereby identical \ac{crb} when employing either orthogonal or optimal coherent transmit waveforms. 
Note that the \ac{rx} array in \cref{subfig:arrayD} lacks consecutive sensors (separated by half a wavelength), unlike the one in \cref{subfig:arrayC}. This may be advantageous 
%when considering mutual coupling effects.  
in presence of mutual coupling, which typically increases with decreasing inter-sensor spacing~\cite{liu2016supernested}. 
%Indeed, can be exploited to satisfy additional requirements, such as mitigating mutual coupling, where an array with more widely separated elements may be preferred. 
%We demonstrate constructing two arrays with equal spatial variance in the following example.
%

\begin{figure}
    \newcommand{\dataLoc}{Data/ArrayC}
    \newcommand{\xmax}{40}
    \newcommand{\xmin}{0}
    \newcommand{\fwidth}{9}
    \newcommand{\msize}{1.5}
	\pgfplotsset{every x tick label/.append style={font=\tiny},every y tick label/.append style={font=\scriptsize}}
	\centering
	\subfloat[Array with same $\chi(\mathcal{D}_{\tx})$ and $\chi(\mathcal{D}_{\rx})$ as 
    {\protect\subref{subfig:arrayD}}.
    ]{    
		\begin{tikzpicture}
 		     \begin{axis}[  width= \fwidth cm,
                            height=2.5 cm,
                            xmin=\xmin-0.2,
                            xmax=\xmax+0.2,
                            xtick={\xmin,\xmin+2,...,\xmax},
                            ytick={-1,0,1},
                            yticklabels={$\mathcal{D}_\Sigma$,$\mathcal{D}_{\rx}$,$\mathcal{D}_{\tx}$},
                            ytick style={draw=none},
                            ymin=-1.5,
                            ymax=1.5,
                            xticklabel shift = 0 pt,
                            xtick pos=bottom,
                            axis line style={draw=none}]
 			\addplot[orange,only marks,line width = 0.7,mark=star,mark size=2.2,y filter/.code={\pgfmathparse{\pgfmathresult*0+1}}] table[x=D,y=D]{\dataLoc/Dt.dat};
 			\addplot[blue,only marks,mark=square*,mark size=\msize,y filter/.code={\pgfmathparse{\pgfmathresult*0}}] table[x=D,y=D]{\dataLoc/Dr.dat};
			\addplot[only marks,mark=*,mark size=\msize,y filter/.code={\pgfmathparse{\pgfmathresult*0-1}},x filter/.code={\pgfmathparse{\pgfmathresult}}] table[x=D,y=D]{\dataLoc/D_Sigma.dat};
			\end{axis}
		\end{tikzpicture}
        \label{subfig:arrayC}
    % \vspace{-0.3cm}
	 }%
     \vspace{-0.3cm}
 \renewcommand{\dataLoc}{Data/ArrayD}
 \\
	\centering
	\subfloat[Array with same $\chi(\mathcal{D}_{\tx})$ and $\chi(\mathcal{D}_{\rx})$ as {\protect\subref{subfig:arrayC}}.]{
		\begin{tikzpicture}
			\begin{axis}[   width= \fwidth cm ,
                            height=2.5 cm,
                            xmin=\xmin-0.2,
                            xmax=\xmax+0.2,
                            xtick={\xmin,\xmin+2,...,\xmax},
                            ytick={-1,0,1},
                            yticklabels={$\mathcal{D}_\Sigma$,$\mathcal{D}_{\rx}$,$\mathcal{D}_{\tx}$},
                            ytick style={draw=none},
                            ymin=-1.5,
                            ymax=1.5,
                            xticklabel shift = 0 pt,
                            xtick pos=bottom,
                            axis line style={draw=none}]
				\addplot[orange,only marks,line width = 0.7,mark=star,mark size=2.2,y filter/.code={\pgfmathparse{\pgfmathresult*0+1}}] table[x=D,y=D]{\dataLoc/Dt.dat};
				\addplot[blue,only marks,mark=square*,mark size=\msize,y filter/.code={\pgfmathparse{\pgfmathresult*0}}] table[x=D,y=D]{\dataLoc/Dr.dat};
				\addplot[only marks,mark=*,mark size=\msize,y filter/.code={\pgfmathparse{\pgfmathresult*0-1}},x filter/.code={\pgfmathparse{\pgfmathresult}}] table[x=D,y=D]{\dataLoc/D_Sigma.dat};
			\end{axis}
		\end{tikzpicture}\label{subfig:arrayD}
    
	}%
    \vspace{-0.3cm}
    \\
 \renewcommand{\dataLoc}{Data/PlanarEqualTxSoS}
    \renewcommand{\xmax}{9}
    \renewcommand{\xmin}{0}
    \newcommand{\ymax}{9}
    \newcommand{\ymin}{0}
    \newcommand{\xwidth}{3.5}
    \newcommand{\xdelta}{2}
    \newcommand{\ywidth}{3.5}
    \newcommand{\ydelta}{2}
    \renewcommand{\msize}{1.25}
	\pgfplotsset{every x tick label/.append style={font=\tiny},every y tick label/.append style={font=\tiny}}
    \centering
    \subfloat[Separable \ac{tx} arrays with same $\bm{\Sigma}(\mathcal{D}_{\tx})$.% and $\bm{\Sigma}(\mathcal{D}_{\rx})$.
    ]{
    \begin{tikzpicture}
			\begin{axis}%
            [width= \xwidth cm ,height=\ywidth cm, %
            xmin=\xmin-0.2,xmax=\xmax+0.2,xtick={\xmin,\xmin+\xdelta,...,\xmax},
            ymin=\ymin-0.2,ymax=\ymax+0.2,ytick={\ymin,\ymin+\ydelta,...,\ymax},
            xticklabel shift = 0 pt,xtick pos=bottom,axis line style={draw=none},
            yticklabel shift = 0 pt,ytick pos=left,axis line style={draw=none}]
				% Dt
                \addplot[orange,only marks,line width = 0.7,mark=star,mark size=2.2,
                y filter/.code={\pgfmathparse{\pgfmathresult}},%
                x filter/.code={\pgfmathparse{\pgfmathresult}}]
                table[x=x,y=y]{\dataLoc/Dt1.dat};
			\end{axis}
		\end{tikzpicture}
        \begin{tikzpicture}
			\begin{axis}%
            [width= \xwidth cm ,height=\ywidth cm, %
            xmin=\xmin-0.2,xmax=\xmax+0.2,xtick={\xmin,\xmin+\xdelta,...,\xmax},
            ymin=\ymin-0.2,ymax=\ymax+0.2,ytick={\ymin,\ymin+\ydelta,...,\ymax},
            xticklabel shift = 0 pt,xtick pos=bottom,axis line style={draw=none},
            yticklabel shift = 0 pt,ytick pos=left,axis line style={draw=none}]
				% Dt
                \addplot[orange,only marks,line width = 0.7,mark=star,%pentagon,
                mark size=2.2,
                y filter/.code={\pgfmathparse{\pgfmathresult}},%
                x filter/.code={\pgfmathparse{\pgfmathresult}}]
                table[x=x,y=y]{\dataLoc/Dt2.dat};
			\end{axis}
		\end{tikzpicture}\label{subfig:planarESoS1}
        }
        \renewcommand{\dataLoc}{Data/PlanarEqualSoS_TxRx}
        \renewcommand{\xwidth}{2.2}
    \renewcommand{\xmax}{11}
    \renewcommand{\ymax}{25}
    \renewcommand{\xdelta}{5}
    \renewcommand{\ydelta}{5}
	\pgfplotsset{every x tick label/.append style={font=\tiny},every y tick label/.append style={font=\tiny}}
    \centering
    \subfloat[Nonseparable arrays% \ac{tx} arrays with same $\bm{\Sigma}(\mathcal{D}_{\tx})$. %and $\bm{\Sigma}(\mathcal{D}_{\rx})$.
    ]{
    \begin{tikzpicture}
			\begin{axis}%
            [width= \xwidth cm ,height=\ywidth cm, %
            xmin=\xmin-0.2,xmax=\xmax+0.2,xtick={\xmin,\xmin+\xdelta,...,\xmax},
            ymin=\ymin-0.2,ymax=\ymax+0.2,ytick={\ymin,\ymin+\ydelta,...,\ymax},
            xticklabel shift = 0 pt,xtick pos=bottom,axis line style={draw=none},
            yticklabel shift = 0 pt,ytick pos=left,axis line style={draw=none}]
				% Dt
                \addplot[orange,only marks,line width = 0.7,mark=star,mark size=2.2,
                y filter/.code={\pgfmathparse{\pgfmathresult}},%
                x filter/.code={\pgfmathparse{\pgfmathresult}}]
                table[x=x,y=y]{\dataLoc/Dt1.dat};
                
                % \addplot[orange,only marks,line width = 0.7,mark=star,mark size=2.2,
                % y filter/.code={\pgfmathparse{\pgfmathresult}},%
                % x filter/.code={\pgfmathparse{\pgfmathresult}}]
                % table[x=x,y=y]{\dataLoc/Dt1.dat};
                % \addplot[orange,only marks,blue,only marks,mark=square*,mark size=\msize,
                % y filter/.code={\pgfmathparse{\pgfmathresult}},%
                % x filter/.code={\pgfmathparse{\pgfmathresult}}]
                % table[x=x,y=y]{\dataLoc/Dr21.dat};
			\end{axis}
		\end{tikzpicture}
            \begin{tikzpicture}
			\begin{axis}%
            [width= \xwidth cm ,height=\ywidth cm, %
            xmin=\xmin-0.2,xmax=\xmax+0.2,xtick={\xmin,\xmin+\xdelta,...,\xmax},
            ymin=\ymin-0.2,ymax=\ymax+0.2,ytick={\ymin,\ymin+\ydelta,...,\ymax},
            xticklabel shift = 0 pt,xtick pos=bottom,axis line style={draw=none},
            yticklabel shift = 0 pt,ytick pos=left,axis line style={draw=none}]
                \addplot[orange,only marks,line width = 0.7,mark=star, %pentagon, 
                mark size=2.2,
                y filter/.code={\pgfmathparse{\pgfmathresult}},%
                x filter/.code={\pgfmathparse{\pgfmathresult}}]
                table[x=x,y=y]{\dataLoc/Dt2.dat};
                
                % \addplot[orange,only marks,line width = 0.7,mark=star,mark size=2.2,
                % y filter/.code={\pgfmathparse{\pgfmathresult}},%
                % x filter/.code={\pgfmathparse{\pgfmathresult}}]
                % table[x=x,y=y]{\dataLoc/Dt1.dat};
                % \addplot[orange,only marks,blue,only marks,mark=square*,mark size=\msize,
                % y filter/.code={\pgfmathparse{\pgfmathresult}},%
                % x filter/.code={\pgfmathparse{\pgfmathresult}}]
                % table[x=x,y=y]{\dataLoc/Dr21.dat};
			\end{axis}
		\end{tikzpicture}   
        \label{subfig:Planar1}
        }
 \caption{Array geometries with equal spatial (co)variances.
 The array geometries in \protect\subref{subfig:arrayC} and \protect\subref{subfig:arrayD} have identical CRB values for both optimal and orthogonal transmit waveforms. 
Planar arrays with equal spatial covariance, as in \protect\subref{subfig:planarESoS1} and \protect\subref{subfig:Planar1}, can be constructed similarly to the linear case using equal \ac{sos} sequences.
 }
 \label{fig:ArraysWithEqualTxRxVariance}
\vspace{-.5cm}
\end{figure}

\subsection{Planar arrays with equal spatial covariance}
Planar arrays %is more involved, as it 
require identifying pairs of two-dimensional \emph{vector} sequences $(\mathbf{p}_i=[x_i, y_i]^{\T})_{i=1}^n$ and $(\hat{\mathbf{p}}_j=[\hat{x}_i, \hat{y}_i]^{\T})_{j=1}^n$ such that 
%whose covariance matrices are identical. Formally, this amounts to finding sequences satisfying
$\sum_i^n \mathbf{p}_i \mathbf{p}_i^{\T} = \sum_j^n \hat{\mathbf{p}}_j \hat{\mathbf{p}}_j^{\T}$. %, with $\mathbf{p}_i = [x_i, y_i]^{\T}$. 
A straightforward construction %consists of ``outer products" of two linear arrays, as 
is illustrated in \cref{subfig:planarESoS1}, 
where the planar array is the Cartesian product of two linear arrays, i.e.,  
$\mathcal{D}^{(a)} = \{[x,y]^{\T} \ | \ (x,y)\in\{\pm x_1,\pm x_2\}\times\{\pm y_1,\pm y_2\}\}$, and $\mathcal{D}^{(b)} = \{[x,y]^{\T} \ | \ (x,y)\in\{\pm \hat{x}_1,\pm \hat{x}_2\}\times\{\pm \hat{y}_1,\pm \hat{y}_2\}\}$.
In such ``separable'' cases, finding two pairs of linear arrays with equal spatial variance suffices to construct two planar arrays with equal spatial covariance.
Another construction is given by the following \lcnamecref{thm:equalCovariance}.
%For constructing more general pairs of planar arrays with equal spatial covariance, we provide the following proposition.

\begin{proposition}\label{thm:equalCovariance}
Let $(\hat{x}_1, \hat{x}_2)$ and $(\hat{y}_1, \hat{y}_2)$ be two distinct sequences generated using \eqref{eq:generateESOS} and the same coprime pair $k_1, k_2$ applied to $(x_1, x_2)$ and $(y_1, y_2)$, respectively. 
%Suppose that distinct sets $\{x_1, x_2\} \neq \{y_1, y_2\}$ produce to distinct sets $\{\hat{x}_1, \hat{x}_2\} \neq \{\hat{y}_1, \hat{y}_2\}$ through \eqref{eq:generateESOS}. If the same coprime pair $\{k_1, k_2\}$ is used, then, the following property is satisfied:
Then
    \begin{align}
        \sum_{i=1,2}
        \begin{bmatrix}
            x_i \\
            y_i
        \end{bmatrix}
        \begin{bmatrix}
            x_i & y_i
        \end{bmatrix}
        =\sum_{i=1,2}
        \begin{bmatrix}
            \hat{x}_i \\
            \hat{y}_i
        \end{bmatrix}
        \begin{bmatrix}
            \hat{x}_i & \hat{y}_i
        \end{bmatrix}.
        \label{eq:equalCovariance}
    \end{align}
\end{proposition}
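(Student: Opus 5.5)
The plan is to recognize \eqref{eq:generateESOS} as a fixed linear map applied to the vector $(x_1,x_2)^{\T}$, and to show this map is orthogonal. Write $\hat{\mathbf{x}}=\mathbf{Q}\mathbf{x}$ and $\hat{\mathbf{y}}=\mathbf{Q}\mathbf{y}$, where $\mathbf{x}=[x_1,x_2]^{\T}$, $\mathbf{y}=[y_1,y_2]^{\T}$, and
\begin{align*}
\mathbf{Q}=\frac{1}{k_1^2+k_2^2}\begin{bmatrix} -(k_2^2-k_1^2) & 2k_1k_2\\ 2k_1k_2 & k_2^2-k_1^2\end{bmatrix}.
\end{align*}
The key point is that the same coprime pair $(k_1,k_2)$ is used for both sequences, so the same matrix $\mathbf{Q}$ acts on $\mathbf{x}$ and on $\mathbf{y}$.

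First I will verify that $\mathbf{Q}^{\T}\mathbf{Q}=\mathbf{I}_2$. The diagonal entries of $\mathbf{Q}^{\T}\mathbf{Q}$ equal $\big((k_2^2-k_1^2)^2+4k_1^2k_2^2\big)/(k_1^2+k_2^2)^2=1$, using the Pythagorean identity $(k_2^2-k_1^2)^2+(2k_1k_2)^2=(k_1^2+k_2^2)^2$. The off-diagonal entries are $\big(-(k_2^2-k_1^2)2k_1k_2+2k_1k_2(k_2^2-k_1^2)\big)/(k_1^2+k_2^2)^2=0$. Thus $\mathbf{Q}$ is a reflection, symmetric and orthogonal.

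Next I will rewrite \eqref{eq:equalCovariance} in matrix form. Stack the data as $\mathbf{P}=[\mathbf{x},\mathbf{y}]\in\mathbb{R}^{2\times 2}$, so that row $i$ of $\mathbf{P}$ is $[x_i,\ y_i]$. The left-hand side of \eqref{eq:equalCovariance} is $\sum_i [x_i,y_i]^{\T}[x_i,y_i]=\mathbf{P}^{\T}\mathbf{P}$, i.e., the Gram matrix of the columns $\mathbf{x},\mathbf{y}$. Similarly, with $\hat{\mathbf{P}}=[\hat{\mathbf{x}},\hat{\mathbf{y}}]=\mathbf{Q}\mathbf{P}$, the right-hand side is $\hat{\mathbf{P}}^{\T}\hat{\mathbf{P}}=\mathbf{P}^{\T}\mathbf{Q}^{\T}\mathbf{Q}\mathbf{P}=\mathbf{P}^{\T}\mathbf{P}$. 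Entrywise this states $\|\hat{\mathbf{x}}\|^2=\|\mathbf{x}\|^2$, $\|\hat{\mathbf{y}}\|^2=\|\mathbf{y}\|^2$, and $\hat{\mathbf{x}}^{\T}\hat{\mathbf{y}}=\mathbf{x}^{\T}\mathbf{y}$. The first two are already the equal-SoS property, and the new content is the cross term.

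The only step needing care is this transposition bookkeeping: the outer products in \eqref{eq:equalCovariance} are indexed by $i$, i.e., the rows of $\mathbf{P}$, whereas the map \eqref{eq:generateESOS} mixes the index $i$. This is why the result is the Gram matrix $\mathbf{P}^{\T}\mathbf{P}$, which is invariant under left-multiplication by an orthogonal matrix, rather than $\mathbf{P}\mathbf{P}^{\T}$, which would not be invariant. The integrality and distinctness hypotheses are not needed for the identity itself; they only make the resulting arrays valid.
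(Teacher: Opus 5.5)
Your proof is correct, and it uses the same algebraic fact as the paper: $(k_2^2-k_1^2)^2+(2k_1k_2)^2=(k_1^2+k_2^2)^2$, plus the cancellation of the mixed $2k_1k_2(k_2^2-k_1^2)$ terms. The organization differs. The paper gets the diagonal entries of \eqref{eq:equalCovariance} from the equal-SoS property of \eqref{eq:generateESOS}, then expands $\hat{x}_1\hat{y}_1+\hat{x}_2\hat{y}_2$ by direct substitution. You instead write \eqref{eq:generateESOS} as $\hat{\mathbf{x}}=\mathbf{Q}\mathbf{x}$ with a symmetric orthogonal (reflection) matrix $\mathbf{Q}$, and obtain all three entries at once from $\mathbf{P}^{\T}\mathbf{Q}^{\T}\mathbf{Q}\mathbf{P}=\mathbf{P}^{\T}\mathbf{P}$.

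Your formulation makes several things visible that the entrywise expansion hides:
\begin{itemize}
\item It shows why the same pair $(k_1,k_2)$ is needed for both coordinates, namely so that a single $\mathbf{Q}$ acts on both columns.
\item It shows the identity holds for arbitrary real inputs, with integrality and distinctness playing no role.
\item It identifies the left-hand side as a Gram matrix, which is why an orthogonal map that mixes the index $i$ leaves it unchanged.
\end{itemize}

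It also avoids a slip in the paper's displayed computation. There the prefactor is printed as $(k_2^2-k_1^2)^2-4k_1^2k_2^2$, but the correct factor, as your check $\mathbf{Q}^{\T}\mathbf{Q}=\mathbf{I}_2$ shows, is $(k_2^2-k_1^2)^2+4k_1^2k_2^2=(k_1^2+k_2^2)^2$. The paper's route is more elementary and needs no matrix notation, but yours is shorter and harder to get wrong.
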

\begin{proof}
The diagonal elements of the left and right hand side of \eqref{eq:equalCovariance} are equal, i.e., $x_1^2+x^2_2=\hat{x}^2_1+\hat{x}^2_2$ and $y_1^2+y^2_2=\hat{y}^2_1+\hat{y}^2_2$, since \eqref{eq:generateESOS} generates a pair of integers with equal \ac{sos} by definition. For the off-diagonal elements, it can be verified that substitution of \eqref{eq:generateESOS} into $\hat{x}_1\hat{y}_1 + \hat{x}_2\hat{y}_2 $ yields
%we can write \eqref{eq:Sumi_xiyi}, which concludes the proof.
\begin{align*}
    \hat{x}_1\hat{y}_1 + \hat{x}_2\hat{y}_2
    & = \frac{((k_2^2-k_1^2)^2-4k_1^2k_2^2)(x_1 y_1 + x_2 y_2)}{(k_1^2+k_2^2)^2}\\
    & = \frac{(k_2^2+k_1^2)^2(x_1 y_1 + x_2 y_2)}{(k_1^2+k_2^2)^2}
    % \\
    % &
    = x_1 y_1 + x_2 y_2,
\end{align*}
which completes the proof.
% \begin{figure*}[b]
% \hrule
% \begin{align}
%     %
%     % \frac{-(k_2^2-k_1^2)x_1+2k_1k_2x_2}{k_1^2+k_2^2},
%     % \frac{2k_1k_2x_1+(k_2^2-k_1^2)x_2}{k_1^2+k_2^2}
%     %
%     & \hat{x}_1\hat{y}_1 + \hat{x}_2\hat{y}_2 
%     \notag \\
%     & 
%     = \frac{1}{(k_1^2+k_2^2)^{\textcolor{red}{2}}}
%     \Big(\left[-(k_2^2-k_1^2)x_1 + 2 k_1 k_2 x_2\right]
%     \left[-(k_2^2-k_1^2)y_1 + 2 k_1 k_2 y_2\right] \textcolor{red}{+}
%     % \notag\\
%     % & \hspace{1cm}+
%     \left[2 k_1 k_2 x_1 + (k_2^2-k_1^2)x_2\right]
%     \left[2 k_1 k_2 y_1 + (k_2^2-k_1^2)y_2\right]
%     \Big)\notag\\
%     & = \frac{1}{(k_1^2+k_2^2)^{\textcolor{red}{2}}}
%     \left(
%     (k_2^2-k_1^2)^2 x_1 y_1 + 4 \textcolor{red}{k}^2_1 \textcolor{red}{k}^2_2 x_2 y_2 + 4 \textcolor{red}{k}^2_1 \textcolor{red}{k}^2_2 x_1 y_{\textcolor{red}{1}} + (k_2^2-k_1^2)^2 x_2 y_2
%     \right)\notag\\
%     & = \frac{1}{(k_1^2+k_2^2)^{\textcolor{red}{2}}}
%     (k_2^2+k_1^2)^2(x_1 y_1 + x_2 y_2) = x_1 y_1 + x_2 y_2.
%     \label{eq:Sumi_xiyi}
% \end{align}
% \end{figure*}
\end{proof}
\cref{thm:equalCovariance} 
provides a means to construct pairs of planar arrays (that are not necessarily separable into linear arrays) with equal spatial covariances and thereby \ac{crb}s, i.e., $\mathcal{D}^{(a)} = \{[x,y]^{\T}\ | \ (x,y)\in\{\pm(x_1,y_1),\pm(x_2,y_2)\}\}$ and $\mathcal{D}^{(b)} = \{[x,y]^{\T}\ | \ (x,y)\in\{\pm(\hat{x}_1,\hat{y}_1),\pm(\hat{x}_2,\hat{y}_2)\}\}$.
\cref{subfig:Planar1} shows examples of such arrays generated using \labelcref{eq:esos,eq:generateESOS}, coprime pair $(k_1,k_2)=(2,1)$, and parameter $\ell= -1,1$.

\section{Numerical examples}\label{sec:simulations}
This section %illustrates the validity of the optimal sensor allocations for finite $N$ and 
%compares the \ac{mle} performance and \ac{crb} of different arrays. %against the \ac{crb}. 
investigates the empirical performance of the \ac{mle} using the array geometries examined in \cref{sec:orthogonalWaveforms,sec:optimalAllocation,sec:arraysWithEqualCRB} to complement the preceding \ac{crb}-based analysis. 
%\subsection{MLE performance}
\textcolor{\icassp}{We consider the \ac{mle} of $\omega$ for a \emph{fixed} waveform matrix $\mathbf{S}$} of length $T = N_{\tx}$ samples. \textcolor{\icassp}{Given} the received signal \eqref{eq:signal_model}, \textcolor{\icassp}{the \ac{mle} can be shown to reduce to the following joint \ac{tx}-\ac{rx} beamformer \cite{evans1982application}:} 
\begin{align}
\hat{\omega}
=\argmax_{\bar{\omega}\in[-\pi,\pi)}
\frac
{
|\mathbf{y}^{\HT}(\mathbf{S}\mathbf{a}_{\tx}(\bar{\omega})\otimes \mathbf{a}_{\rx}(\bar{\omega}))|^2}
{\|\mathbf{S}\mathbf{a}_{\tx}(\bar{\omega})\otimes \mathbf{a}_{\rx}(\bar{\omega})\|_2^2}.
\end{align}
\textcolor{\icassp}{The ground truth target angle and reflectivity are set to $\omega=0$ and $\gamma=1$, respectively, } such that the \ac{snr} is $\frac{|\gamma|^2}{\sigma^2} = \frac{1}{\sigma^2}$. \textcolor{\icassp}{The squared error of the \ac{mle} is averaged over $10^4$ Monte Carlo trials} with independent noise realizations.
\textcolor{\edited}{
%In the presentation of the \ac{mle} performance, i.e., \cref{fig:clusteredvsnested,fig:ArrayCvsArrayD,fig:MLEsensorAllocationClustered}, we have used the colors to differentiate between arrays: black denotes the clustered array, red the canonical MIMO array, and blue and green the arrays in \cref{subfig:arrayC,subfig:arrayD}.
In the \ac{mle} performance plots (\cref{fig:clusteredvsnested,fig:ArrayCvsArrayD,fig:MLEsensorAllocationClustered}), colors distinguish arrays as follows: black--clustered, red--canonical MIMO, and blue/green--the arrays in \cref{subfig:arrayC,subfig:arrayD}.
}
%whereas entries of noise vector $\mathbf{n}$ are drawn from an i.i.d. circularly symmetric normal distribution with variance $\sigma^2$.

%\subsection{Clustered array vs. canonical \ac{mimo} array}
\cref{fig:clusteredvsnested} shows the estimation performance of the clustered array (black lines) with the canonical \ac{mimo} array (red lines) for optimal \textcolor{\edited}{(coherent)} and orthogonal transmit waveforms. Here the $N = 20$ sensors are optimally allocated to the transmitter and receiver following \cref{tab:sensor_division}. 
%Note that since the optimal waveform in \eqref{eq:optTxWaveform_sum} \textcolor{\icassp}{is a function of the true angle $\omega$, an initial estimate of $\omega$ would be needed in practice; see \cite{li2008range} for a discussion and examples.}
The clustered array has a better asymptotical performance for both waveforms, as expected. 
The poor performance of the canonical \ac{mimo} array in \cref{subfig:MLEoptimalWav} is caused by ambiguities introduced by the dilated \ac{ula} at the receiver when transmitting optimal (coherent) waveforms. In contrast, the clustered array avoids such ambiguities due to its \ac{rx} array containing \ac{ula} segments.%since it contains consecutive sensors.

\textcolor{\edited}{As seen in \cref{subfig:MLEorthogonalWav} (and \cref{subfig:ArrayCvsArrayDoptWav}) the \ac{mle} performance curves may cross in the non-asymptotic (low \ac{snr}) regime. This behavior is due to the distinct sidelobe levels of each array geometry, which lead to different error characteristics before the performance converges to the asymptotic \ac{crb} limit.}
\begin{figure*}[t]
\vspace{-.8cm}
\centering
    \subfloat[$\mathbf{S} = \frac{\mathbf{u}\mathbf{a}_{\tx}^{\HT}(\omega)}{\sqrt{N_{\tx}}}$]{
    \centering
    \includegraphics[width=0.45\linewidth]{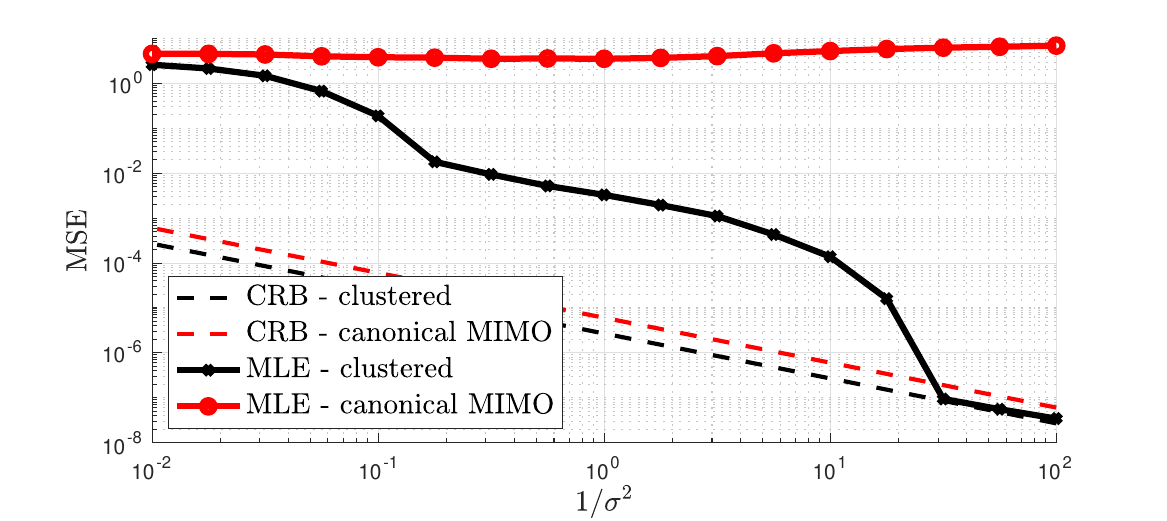}
    \label{subfig:MLEoptimalWav}
    }%
    % \vspace{-0.3cm}
    \subfloat[$\mathbf{S} = \frac{1}{\sqrt{N_{\tx}}}\mathbf{I}$.]{
    \centering
    \includegraphics[width=0.45\linewidth]{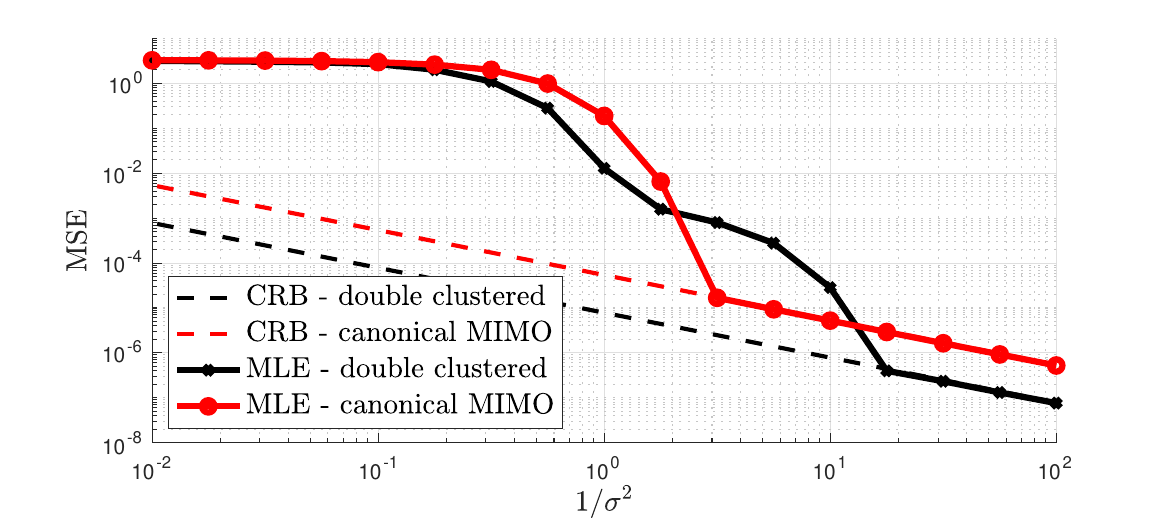}
    \label{subfig:MLEorthogonalWav}
    }%
    \caption{MLE performance %for single-target angle estimation for 
    of 
    the clustered (optimal) and canonical MIMO array, using coherent (optimal) \protect\subref{subfig:MLEoptimalWav} and orthogonal waveforms \protect\subref{subfig:MLEorthogonalWav}. The number of total sensors is $N = 20$, optimally allocated following \cref{tab:sensor_division}. 
    % \textcolor{red}{[Robin: Take-home message here]}
    The clustered array has superior performance for optimal waveforms (cf. \cref{subfig:MLEoptimalWav}), due to its \ac{rx} array geometry, and achieves a lower \ac{crb} in the asymptotic \ac{snr} regime for both optimal and orthogonal waveforms.}
    \vspace{-0.5cm}
    \label{fig:clusteredvsnested}
\end{figure*}

\subsubsection{Impact of optimal sensor allocation% on MLE performance
}
\cref{fig:MLEsensorAllocationClustered} compares the \ac{mle} performance of clustered \ac{tx} and \ac{rx} 
arrays with equal and optimal sensor allocations. The results show that proper allocation yields a substantial performance gain. 
%Interestingly, the relative improvement observed in the \ac{mle} accuracy when moving from equal to optimal allocation is larger than what is suggested by the corresponding \ac{crb}, highlighting the critical role of sensor allocation in practice.
Indeed, the \ac{mle} improves drastically even in the non-asymptotic \ac{snr} region when moving from equal to optimal allocation, highlighting the critical role of sensor allocation in practice.

% \textcolor{red}{[Robin: The precending sentence is unclear. It sounds like the MSE is lower than the CRB.]}
% \begin{figure}
% \vspace{-0.1cm}
%     \centering
%     \includegraphics[width=0.9\linewidth]{Figures/CRBresultsOrthWavClusteredSensorDistr.eps}
%     \vspace{-0.2cm}
%     \caption{MLE performance %for single-target angle estimation for a 
%     of 
%     double clustered %\ac{tx} and \ac{rx} 
%     array using orthogonal waveforms. 
%     % \textcolor{red}{[Robin: Take-home message here]}
%     Optimal sensor allocation can significantly improve \ac{mle} accuracy.} 
%     \vspace{-0.5cm}
%     \label{fig:MLEsensorAllocationClustered}
% \end{figure}

\subsubsection{Arrays with identical CRB}
\cref{fig:ArrayCvsArrayD} shows the estimation performance of the array geometries %with equal spatial variance 
in \cref{subfig:arrayC,subfig:arrayD}. Since the arrays have equal spatial variance, 
the corresponding \ac{mle}s converge to the same \ac{crb}. Although the \ac{crb}s are identical, the \ac{mle} performances deviate in the threshold region. 
While this can largely be attributed to differences in the beampatterns of the two arrays, %A precise characterization of 
precisely characterizing how the array geometry influences the \textcolor{\edited}{(non-asymptotic)} \ac{mle} performance, particularly for arrays with equal spatial variance, remains an open question and 
%important 
\textcolor{\edited}{interesting} 
direction for future work.
\begin{figure*}[t]
\vspace{-.3cm}
\centering
    \subfloat[$\mathbf{S} = \frac{\mathbf{u}\mathbf{a}_{\tx}^{\HT}(\omega)}{\sqrt{N_{\tx}}}$]{
    \centering
    \includegraphics[width=0.45\linewidth]{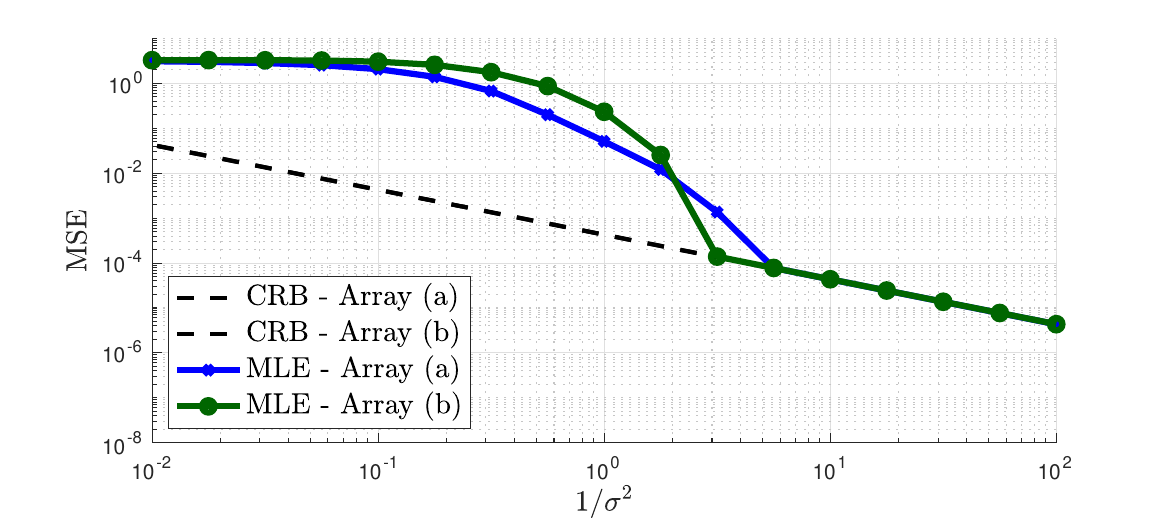}
    \label{subfig:ArrayCvsArrayDoptWav}
    }%
    % \vspace{-0.3cm}
    \subfloat[$\mathbf{S} = \frac{1}{\sqrt{N_{\tx}}}\mathbf{I}$.]{
    \centering
    \includegraphics[width=0.45\linewidth]{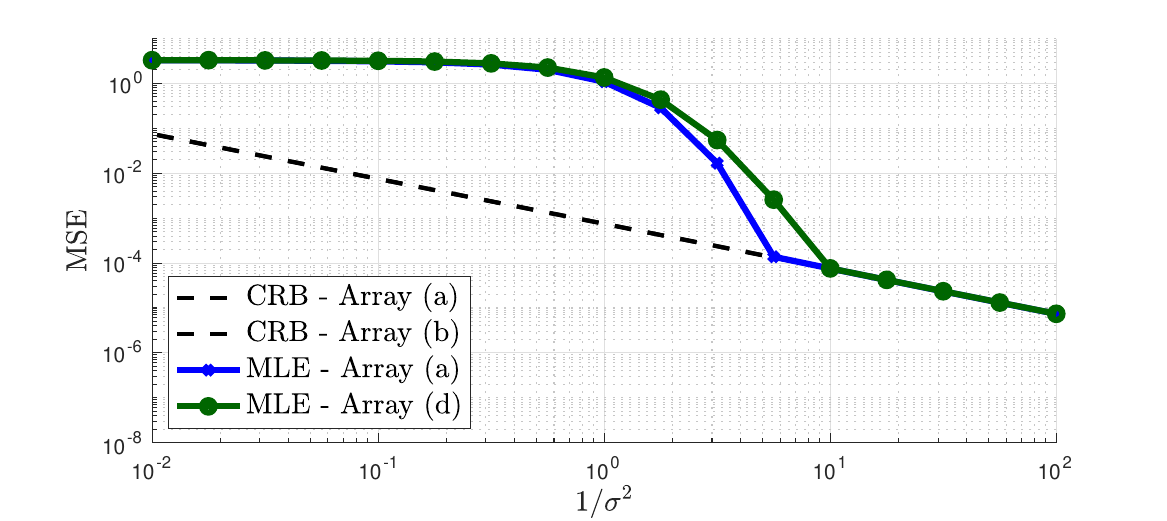}
    \label{subfig:ArrayCvsArrayDorthWav}
    }%
    \caption{MLE performance %for single target angle estimation for 
    of 
    the arrays in \cref{subfig:arrayC,subfig:arrayD} using an optimal waveform \protect\subref{subfig:ArrayCvsArrayDoptWav} and orthogonal waveforms \protect\subref{subfig:ArrayCvsArrayDorthWav}. 
    % \textcolor{red}{[Robin: Take-home message here]}
    Despite identical \ac{crb}s, different array geometries can yield distinct \ac{mle} performance in the threshold region.} 
    \vspace{-0.5cm}
    \label{fig:ArrayCvsArrayD}
\end{figure*}
% \begin{figure}[htb]
%     \centering
%     \includegraphics[width=1\linewidth]{Data/CRBresultsArrayC_vs_ArrayD.eps}
%     \caption{Single target angle estimation for the arrays in \cref{subfig:arrayC,subfig:arrayD}.}
%     \label{fig:ArrayCvsArrayD}
% \end{figure}

% \subsection{Impact of sensor allocation}
% \textcolor{red}{[Possible subsection based on your suggestion in email. However, $10\%$ improvement for the canonical MIMO is not really visible on a log scale, see \cref{subfig:MLEsensorAllocationNested}. Interestingly, it seems that by reallocating the sensors, one can get an improvement/deterioration over the whole SNR range, i.e. apparently the beampattern does not change significantly.]}
% %
% \begin{figure*}[t]
% \centering
%         \subfloat[Canonical MIMO array.]{
%     \centering
%     \includegraphics[width=0.45\linewidth]{Figures/CRBresultsOrthWaMIMOsensorDistr.eps}
%     \label{subfig:MLEsensorAllocationNested}
%     }%
%     % \vspace{-0.3cm}
%     \subfloat[Clustered \ac{tx} and \ac{rx} array.]{
%     \centering
%     \includegraphics[width=0.45\linewidth]{Figures/CRBresultsOrthWavClusteredSensorDistr.eps}
%     \label{subfig:MLEsensorAllocationClustered}
%     }%
%     \caption{MLE performance for single-target angle estimation using an orthogonal waveform \protect\subref{subfig:MLEorthogonalWav} for different sensor allocations.}
%     \label{fig:MLEsensorAllocation}
% \end{figure*}

\section{Conclusion}\label{sec:conclusions}
% This paper established fundamental limits and optimality conditions for the joint design of array geometries and transmit waveforms in active sensing. 
% We showed that, with orthogonal waveforms, the single-target \ac{crb} depends on the spatial variance of both transmit and receive arrays, while for beamforming waveforms it depends solely on the receive array. These results enabled the derivation of asymptotically optimal sensor allocations, which were validated in the finite-$N$ regime through numerical examples. The simulations demonstrated that clustered arrays achieve significant gains over canonical \ac{mimo} arrays for a single target. 
% Extending these results to planar arrays, we showed that spatial variance is replaced by spatial covariance (or 2D moment of inertia), and we established a generalized condition under which beamforming is optimal for a broad class of objective functions, including the trace, log-determinant, and maximum eigenvalue.
% Finally, we introduced a novel connection between sparse array design and sequences with equal sums of squares, illustrating how different linear and planar array geometries can be constructed to achieve identical \ac{crb}s.

This work characterized fundamental performance limits of optimal array designs employing orthogonal and coherent (\ac{crb}-optimal) waveforms in active sensing. 
We demonstrated that for orthogonal waveforms, the single-target \ac{crb} is governed by the sum of the spatial variances (or moments of inertia) of the transmit and receive arrays, or alternatively by the (multiplicity-)weighted spatial variance of the sum co-array. \textcolor{\edited}{This reveals} that CRB-optimal geometries are redundant\textcolor{\edited}{, thereby showing a fundamental trade-off between estimation accuracy and target identifiability.} % thereby exposing a fundamental trade-off between estimation accuracy and target identifiability.
We also derived the optimal allocation of a fixed number of sensors between the Tx and Rx arrays, showing that unequal allocation \textcolor{\edited}{(}favoring more Rx than Tx sensors\textcolor{\edited}{)} minimizes the \ac{crb} in case of orthogonal waveforms. These findings 
%challenge 
\textcolor{\edited}{question} 
conventional array designs with equal sensor allocations and nonredundant sum co-arrays, providing benchmarks for the \ac{mse} improvement achievable by array design alone. 
For planar arrays, we derived a novel sufficient condition for guaranteeing that \ac{crb}-optimal waveforms transmit nonzero power in the target direction. Lastly, we established a connection between sparse array design and Diophantine equations (equal sums of squares), along with a simple method for generating distinct array geometries with identical \ac{crb}s.
%These findings offer new theoretical benchmarks and practical guidelines for optimal array and waveform design. 
Directions for future work include gaining analytical insight into the multi-target \ac{crb} and exploring 
%noise-robust sparse array geometries and waveforms in the low-\ac{snr} (threshold) regime.
performance limits of array design in the low-\ac{snr} (threshold) regime for various waveforms. 

\begin{figure}
\vspace{-0.1cm}
    \centering
    \includegraphics[width=0.9\linewidth]{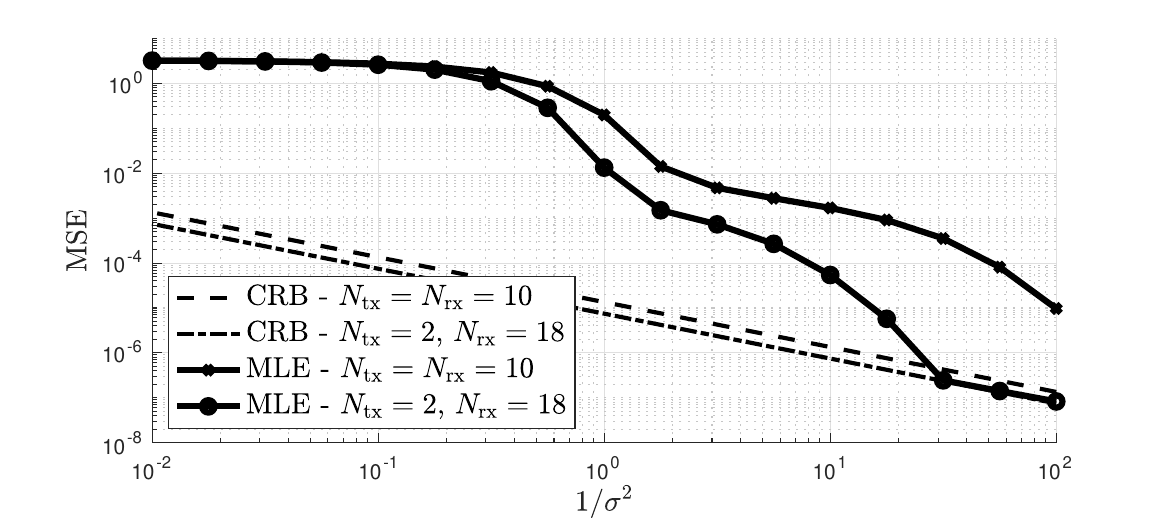}
    \vspace{-0.2cm}
    \caption{MLE performance %for single-target angle estimation for a 
    of 
    double clustered %\ac{tx} and \ac{rx} 
    array using orthogonal waveforms. 
    % \textcolor{red}{[Robin: Take-home message here]}
    Optimal sensor allocation can significantly improve \ac{mle} accuracy.} 
    \vspace{-0.5cm}
    \label{fig:MLEsensorAllocationClustered}
\end{figure}

%\appendices
\appendices\crefalias{section}{appendix}
\renewcommand{\thesectiondis}[2]{\Alph{section}:}

\section{
%Multiplicity-weighted spatial covariance of sum co-array
Proof of Proposition~\ref{thm:CRBorthWavSumcoArraydependency}}
\label{app:spatialVarianceSumCoArray}
We prove the more general two-dimensional case \eqref{eq:CRBMorthWaveformPlanar} (see \cref{sec:planarArrays} for details) of which \eqref{eq:CRBlinearArrayOrth} is a special case.
For simplicity, let $\bm{\mu}_{(.)}$ denote $\bm{\mu}(\mathcal{D}_{(.)})$. Then
\begin{align}
    & \bm{\Sigma}(\mathcal{D}_{\tx})
    + \bm{\Sigma}(\mathcal{D}_{\rx}) 
    % \notag \\
    % &
    = \frac{1}{N_{\tx}}\sum_{\mathbf{d}_{\tx}\in\mathcal{D}_{\tx}}(\mathbf{d}_{\tx}-\bm{\mu}_{\tx})(\mathbf{d}_{\tx}-\bm{\mu}_{\tx})^{\T}
    \notag\\
    & \quad 
    + \frac{1}{N_{\rx}}\sum_{\mathbf{d}_{\rx}\in\mathcal{D}_{\rx}}(\mathbf{d}_{\rx}-\bm{\mu}_{\rx})(\mathbf{d}_{\rx}-\bm{\mu}_{\rx})^{\T}\notag\\
    & = \frac{1}{N_{\tx}N_{\rx}}\sum_{\mathbf{d}_{\tx}\in\mathcal{D}_{\tx}}\sum_{\mathbf{d}_{\rx}\in\mathcal{D}_{\rx}}
    \Big(
        \mathbf{d}_{\tx}\mathbf{d}_{\tx}^{\T}
        + \mathbf{d}_{\rx}\mathbf{d}_{\rx}^{\T} \notag\\
    & \quad 
        - \bm{\mu}_{\tx}\mathbf{d}_{\tx}^{\T}
        - \mathbf{d}_{\tx}\bm{\mu}_{\tx}^{\T}
        - \bm{\mu}_{\rx}\mathbf{d}_{\rx}^{\T}
        - \mathbf{d}_{\rx}\bm{\mu}_{\rx}^{\T}
        + \bm{\mu}_{\tx}\bm{\mu}_{\tx}^{\T}
        + \bm{\mu}_{\rx}\bm{\mu}_{\rx}^{\T}
    \Big)
    \notag\\
    % & = \frac{1}{N_{\tx}N_{\rx}}\sum_{\mathcal{D}_{\tx}}\sum_{\mathcal{D}_{\rx}}
    % \Big(
    %     (\mathbf{d}_{\tx} + \mathbf{d}_{\rx})(\mathbf{d}_{\tx} + \mathbf{d}_{\rx})^{\T} \notag\\
    % & \quad 
    %     - 2\mathbf{d}_{\tx}\mathbf{d}_{\rx}^{\T}
    %     - 2\bm{\mu}_{\tx}\mathbf{d}_{\tx}^{\T}
    %     - 2\bm{\mu}_{\rx}\mathbf{d}_{\rx}^{\T}
    %     + \bm{\mu}_{\tx}\bm{\mu}_{\tx}^{\T}
    %     + \bm{\mu}_{\rx}\bm{\mu}_{\rx}^{\T}
    % \Big)
    % \notag\\
    & = \frac{1}{N_{\tx}N_{\rx}}\Big(\sum_{\mathbf{d}_{\tx}\in\mathcal{D}_{\tx}}\sum_{\mathbf{d}_{\rx}\in\mathcal{D}_{\rx}}
        (\mathbf{d}_{\tx} + \mathbf{d}_{\rx})(\mathbf{d}_{\tx} + \mathbf{d}_{\rx})^{\T} \Big) 
    \notag \\
    & \quad 
    - \big(\bm{\mu}_{\tx} + \bm{\mu}_{\rx}\big)\big(\bm{\mu}_{\tx} + \bm{\mu}_{\rx}\big)^{\T} \notag\\
    & = \frac{1}{N_{\tx}N_{\rx}}\Big(\sum_{\mathbf{d}_{\Sigma}\in\mathcal{D}_{\Sigma}} \mathbf{d}_{\Sigma}\mathbf{d}_{\Sigma}^{\T}\upsilon(\mathbf{d}_{\Sigma})\Big) - \tilde{\bm{\mu}}_\Sigma\tilde{\bm{\mu}}_\Sigma^{\T} \notag\\
    & \overset{(a)}{=} \frac{1}{N_{\tx}N_{\rx}} \sum_{\mathbf{d}_{\Sigma}\in\mathcal{D}_{\Sigma}} 
    \big(\mathbf{d}_{\Sigma} - \tilde{\bm{\mu}}_\Sigma\big)
    \big(\mathbf{d}_{\Sigma} - \tilde{\bm{\mu}}_\Sigma\big)^{\T}\upsilon(\mathbf{d}_{\Sigma}) 
    % \notag\\
    % & 
    = \tilde{\bm{\Sigma}}(\mathcal{D}_{\Sigma}), \notag
\end{align}
where in $(a)$ we have used that $\sum_{\mathbf{d}_{\Sigma}\in\mathcal{D}_\Sigma}\upsilon(\mathbf{d}_\Sigma) = N_{\tx}N_{\rx}$.
Note that $\chi(\mathcal{D}_{\tx}) + \chi(\mathcal{D}_{\rx}) = \tilde{\chi}(\mathcal{D}_{\Sigma})$ simply follows from the above equality as a special (one-dimensional) case.
%, as it corresponds to any of the two diagonal elements of the covariance.
\section{Proof of Proposition~\ref{thm:clusteredsca}}\label{app:ProofOfclusteredsca}
Under $L_{\tx} = L_{\rx}=L$, \eqref{def:sumcoarray1D} and \eqref{eq:optArrayGeometryOrthogonal} yield
    \begin{align*}
        \mathcal{D}_{\Sigma} 
        &=\mathcal{K}^L_{N_{\tx}}+\mathcal{K}^L_{N_{\rx}}\notag\\
        & = (\mathcal{U}_{N_{\tx}/2}\cup (L-\mathcal{U}_{N_{\tx}/2}) )+(\mathcal{U}_{N_{\rx}/2}\cup (L-\mathcal{U}_{N_{\rx}/2})) \notag\\
        & = (\mathcal{U}_{N_{\tx}/2}+\mathcal{U}_{N_{\rx}/2}) \cup(2L-\mathcal{U}_{N_{\tx}/2}-\mathcal{U}_{N_{\rx}/2} )\notag\\
        &\quad \cup(L+\mathcal{U}_{N_{\tx}/2}-\mathcal{U}_{N_{\rx}/2})\cup (L-\mathcal{U}_{N_{\tx}/2}+\mathcal{U}_{N_{\rx}/2} ).
       % & = \mathcal{D}_1 \textcolor{blue}{\cup} \mathcal{D}_2 \textcolor{blue}{\cup} \mathcal{D}_3 \textcolor{blue}{\cup} \mathcal{D}_4.
    \end{align*}
    Let $\mathcal{D}_1 \triangleq \mathcal{U}_{N_{\tx}/2}+\mathcal{U}_{N_{\rx}/2}=\mathcal{U}_{(N_{\tx}+N_{\rx})/2-1}$, and note that
    % \begin{align*}
    % \mathcal{D}_1 \triangleq \mathcal{U}_{N_{\tx}/2}+\mathcal{U}_{N_{\rx}/2}= \mathcal{U}_{(N_{\tx}+N_{\rx})/2}.    
    % \end{align*}
    % Note that
    \begin{align*}
        -\mathcal{U}_{N_{\rx}/2}-\mathcal{U}_{N_{\tx}/2}&=-\mathcal{D}_1=\mathcal{D}_1-(N_{\tx}+N_{\rx})/2+2,\\
        \mathcal{U}_{N_{\tx}/2}-\mathcal{U}_{N_{\rx}/2}&=\mathcal{D}_1-N_{\rx}/2+1,\\
        \mathcal{U}_{N_{\rx}/2}-\mathcal{U}_{N_{\tx}/2}&=\mathcal{D}_1-N_{\tx}/2+1.
    \end{align*}
    Furthermore, denote $\alpha\triangleq L-N_{\rx}/2+1$, $\beta\triangleq L-N_{\tx}/2+1$, and $\delta \triangleq 2L-(N_{\tx}+N_{\rx})/2+2$. Then we can write 
    \begin{align*}
    \mathcal{D}_{\Sigma}
                & =\mathcal{D}_1 \cup
                \underbrace{(\mathcal{D}_1+\delta)}_{\triangleq\mathcal{D}_2}
                \cup
                \underbrace{(\mathcal{D}_1+\alpha)}_{\triangleq\mathcal{D}_3}
                \cup \underbrace{(\mathcal{D}_1+\beta)}_{\triangleq\mathcal{D}_4}.
    \end{align*}
    By the inclusion-exclusion principle, $|\mathcal{A}\!\cup\!\mathcal{B}|\!=\!|\mathcal{A}|\!+\!|\mathcal{B}|\!-\!|\mathcal{A}\!\cap\!\mathcal{B}|$:
    \begin{align*}
        %N_\Sigma =
        |\mathcal{D}_\Sigma|\!=\!
        |\mathcal{D}_1\!\cup\!\mathcal{D}_2|\!+\!|\mathcal{D}_3\!\cup\!\mathcal{D}_4|\!-\!|(\mathcal{D}_1\cup\mathcal{D}_2)\!\cap\!(\mathcal{D}_3\cup\mathcal{D}_4)|.%\label{eq:incl_excl}
        % &=\sum_{i=1}^4|\mathcal{D}_i| -|\mathcal{D}_1\cap\mathcal{D}_2|-|\mathcal{D}_3\cap\mathcal{D}_4|\\
        % &-|(\mathcal{D}_1\cup\mathcal{D}_2)\cap(\mathcal{D}_3\cup\mathcal{D}_4)|.
    \end{align*}
    To characterize the above cardinalities, we will show that $\mathcal{D}_1$ and $\mathcal{D}_2$ are disjoint, as well as $\mathcal{D}_1\cup\mathcal{D}_2$ and $\mathcal{D}_3\cup \mathcal{D}_4$. 
    Firstly, 
     $\min\mathcal{D}_2-\max\mathcal{D}_1
     =\delta-(N_{\tx}+N_{\rx})/2+2
     %=2L-(N_{\tx}+N_{\rx})/2+2-(N_{\tx}+N_{\rx})/2+2
     =2L-N_{\tx}-N_{\rx}+4
     \geq
     \max(N_{\tx},N_{\rx})%+\textcolor{blue}{0}
     %\!>\!N_{\tx}\!+\!N_{\rx}\!+\!6
     \!>\!0$ %where the first inequality follows from 
     by assumption 
     $L\geq \max(N_{\tx},N_{\rx})+\min(N_{\rx},N_{\tx})/2-2$. 
     %$L\!>\!N_{\tx}\!+\!N_{\rx}$. 
     %$\delta > \tfrac{3}{2}(N_{\tx}+N_{\rx})+2>(N_{\tx}+N_{\rx})/2-2=\max \mathcal{D}_1$, 
     Hence, 
     \begin{align*}
         \mathcal{D}_1\cap\mathcal{D}_2=\emptyset \implies |\mathcal{D}_1\cup\mathcal{D}_2|=2|\mathcal{D}_1|=N_{\tx}+N_{\rx}-2.
     \end{align*} 
     %Moreover, $|\mathcal{D}_3\cap\mathcal{D}_4|=(N_{\tx}+N_{\rx})/2-|\alpha-\beta|=(N_{\tx}+N_{\rx})/2-|(N_{\tx}-N_{\rx})/2|=\min(N_{\tx},N_{\rx})$. 
     Moreover, since $\mathcal{D}_3\cup\mathcal{D}_4$ is contiguous,
     \begin{align*}
         |\mathcal{D}_3\cup\mathcal{D}_4|
         =(N_{\tx}+N_{\rx})/2-1+|\alpha-\beta|
         %=(N_{\tx}+N_{\rx})/2-1+\max(N_{\tx},N_{\rx})/2-\min(N_{\tx},N_{\rx})/2
         =\max(N_{\tx},N_{\rx})-1.
     \end{align*} 
     Finally,
     \begin{enumerate*}[label=(\alph*)]
     \item $\mathcal{D}_1\!\cap\!(\mathcal{D}_3\!\cup\!\mathcal{D}_4)
     %=\mathcal{D}_2\cap(\mathcal{D}_3\cup\mathcal{D}_4)
     \!=\!\emptyset$, \label{i:a}
     \item $\mathcal{D}_2\!\cap\!(\mathcal{D}_3\!\cup\!\mathcal{D}_4)\!=\!\emptyset$ \label{i:b}
     \end{enumerate*}
     imply
     \begin{align*}
         |(\mathcal{D}_1\cup\mathcal{D}_2)\cap(\mathcal{D}_3\cup\mathcal{D}_4)|=0.
     \end{align*}
     To prove \labelcref{i:a}, let $g_a\!=\! \min(\mathcal{D}_3\!\cup\!\mathcal{D}_4)\!-\!\max(\mathcal{D}_1)$ and note that
     \begin{align}
      g_a
     &=\min(\alpha,\beta)-(N_{\tx}+N_{\rx})/2+2\notag\\
     %=L+1-\max(N_{\rx},N_{\tx})/2-(N_{\tx}+N_{\rx})/2+2
     &=L-\max(N_{\tx},N_{\rx})-\min(N_{\rx},N_{\tx})/2+3.\label{eq:tmp1}
     \end{align}
     Hence, 
     $g_a
     % \min(\mathcal{D}_3\cup\mathcal{D}_4)-\max(\mathcal{D}_1)
     % =\min(\alpha,\beta)-(N_{\tx}+N_{\rx})/2+2
     % %=L+1-\max(N_{\rx},N_{\tx})/2-(N_{\tx}+N_{\rx})/2+2
     % =L-\max(N_{\tx},N_{\rx})-\min(N_{\rx},N_{\tx})/2+3
    % >\min(N_{\tx},N_{\rx})/2+3
    \!\geq\!1
     %>0
     $, by assumption 
    % $L>N_{\tx}+N_{\rx}$, 
     $L\!\geq\!\max(N_{\tx},N_{\rx})\!+\!\min(N_{\rx},N_{\tx})/2\!-\!2$,
     which implies that $\mathcal{D}_1\cap (\mathcal{D}_3\cup\mathcal{D}_4)=\emptyset$. For  
     \labelcref{i:b}, let $g_b\!=\!\min(\mathcal{D}_2)\!-\!\max(\mathcal{D}_3\!\cup\!\mathcal{D}_4)$. It can be verified 
     \begin{align}
     g_b
     %=\!\min(\mathcal{D}_2)\!-\!\max(\mathcal{D}_3\!\cup\!\mathcal{D}_4)
     =\delta-\max(\alpha,\beta)-\max \mathcal{D}_1=g_a.    \label{eq:tmp2}
     \end{align}
     Hence, 
     % \begin{align}
     %  g_b
     % &=\delta-\max(\alpha,\beta)-\max \mathcal{D}_1
     % %=2L-(N_{\tx}+N_{\rx})/2+2 -L-1+\min(N_{\rx},N_{\tx})/2-(N_{\tx}+N_{\rx})/2+2
     % &=L-\max(N_{\tx},N_{\rx})-\min(N_{\tx},N_{\rx})/2+3.\label{eq:tmp2}
     % \end{align}
     $
     g_b
     % \min(\mathcal{D}_2)-\max(\mathcal{D}_3\cup\mathcal{D}_4)
     % =\delta-\max(\alpha,\beta)-\max \mathcal{D}_1
     % %=2L-(N_{\tx}+N_{\rx})/2+2 -L-1+\min(N_{\rx},N_{\tx})/2-(N_{\tx}+N_{\rx})/2+2
     % =L-\max(N_{\tx},N_{\rx})-\min(N_{\tx},N_{\rx})/2+3
    % > \min(N_{\tx},N_{\rx})/2+3
    \geq 1
%     >0
     \implies \mathcal{D}_2\cap (\mathcal{D}_3\cup\mathcal{D}_4)=\emptyset$.
     This yields \eqref{eq:sca_size} as 
     \begin{align*}
         N_\Sigma = |\mathcal{D}_\Sigma|=% 2(N_{\tx}+N_{\rx})-0-\min(N_{\tx},N_{\rx})-
         N_{\tx}+N_{\rx}-2+\max(N_{\tx},N_{\rx})-1-0.
     \end{align*}
    Setting $L\leq \max(N_{\tx},N_{\rx})+\min(N_{\rx},N_{\tx})/2-2$ in \labelcref{eq:tmp1} yields $g_a=g_b\leq 1$ by \eqref{eq:tmp2}, which by virtue of $\mathcal{D}_1,\mathcal{D}_2$ and $\mathcal{D}_3\cup\mathcal{D}_4$ being ULAs implies that $\mathcal{D}_\Sigma= \mathcal{U}_{2L}$, i.e., $\mathcal{D}_\Sigma$ is contiguous.\hfill$\square$
    % Since $L>N_{\tx}+N_{\rx}$, we have $\mathcal{D}_1\cap\mathcal{D}_2\cap(\mathcal{D}_3\cup\mathcal{D}_4)=\emptyset$, the sum co-array has 
    % \begin{align}
    %     N_{\Sigma} = N_{\tx}+N_{\rx} + \max(N_{\tx},N_{\rx})-3,
    % \end{align} 
    % distinct elements.   
    % Moreover, the array $\mathcal{D}_1\cup \mathcal{D}_2$ would contain $2L + 1 - 2((N_{\tx}+N_{\rx})/2-1) = 2L+2-N_{\tx}-N_{\rx}$ zeros in the middle. Since $\mu(\mathcal{D}_1\cup\mathcal{D}_2) = \mu(\mathcal{D}_3\cup\mathcal{D}_4) = L+1/2$, we hence have that
    % \begin{align}
    %     \max(N_{\tx},N_{\rx})-1\geq 2L+2-N_{\tx}-N_{\rx} \notag\\
    %     \iff \notag \\
    %     N_{\tx}+N_{\rx}/2  \geq L+1 \quad \wedge \quad N_{\tx}/2+N_{\rx}  \geq L+1, \notag
    % \end{align}
    % ensures a contiguous sum co-array.
% \input{Appendices/AppendixPlanarSumBeamOptimality}
\section{Proof of Theorem~\ref{thm:sumBeamOptimality}}\label{app:sumBeamOptimality}
For notational convenience, let $\bm{\Sigma}_{(.)}\triangleq \bm{\Sigma}(\mathcal{D}_{(.)})$ and define $\alpha \triangleq 2|\gamma|^2N_{\tx}N_{\rx}/\sigma^2$, which is positive by definition.
The optimization problem in \eqref{eq:objectiveFunctionCRBM} can then be relaxed to
\begin{align}
    \min_{\lambda_{1},\;\bm{\Lambda}_{2}} 
    & \quad f\!\left(
        \big(
            \alpha
            \lambda_{1}\bm{\Sigma}_{\rx}
            +
            \alpha
            \bm{\Sigma}_{\tx}^{\frac{1}{2}}
            \bm{\Lambda}_{2}
            \bm{\Sigma}_{\tx}^{\frac{1}{2}}%^\top
        \big)^{-1}
    \right) \label{eq:opt_obj}\\
    \text{s.t.} & \quad \lambda_{1} + 
    \operatorname{tr}(\bm{\Lambda}_{2})
    = 1,\;\; \lambda_{1}\geq 0,\;\; \bm{\Lambda}_{2}\succeq 0.
    \label{eq:opt_cons}
\end{align}
Here the only relaxation is that the second constraint in \eqref{eq:opt_cons} is replaced by a (non-strict) inequality. We will show that the stationary point of this relaxed problem also satisfies the original (strict) constraint and is therefore optimal for \eqref{eq:objectiveFunctionCRBM}.
The objective function \eqref{eq:opt_obj} is convex in $\lambda_1$ and $\bm{\Lambda}_2$ (following the assumption made in the theorem \textcolor{\edited}{that $f(\mathbf{X}^{-1})$ is convex} and \textcolor{\edited}{given} that $\mathbf{X}\!=\!\alpha\lambda_{1}\bm{\Sigma}_{\rx}\!+\alpha\!\bm{\Sigma}_{\tx}^{\frac{1}{2}}\bm{\Lambda}_{2}\bm{\Sigma}_{\tx}^{\frac{1}{2}}$ \textcolor{\edited}{is linear in $\lambda_1$ and $\bm{\Lambda}_2$}), the feasible set is convex and strictly feasible, and therefore strong duality holds. The \ac{kkt} conditions are necessary and sufficient for optimality.
Introducing dual variables $\mu$, $\nu$, $\mathbf{Z}$ for the constraints in \eqref{eq:opt_cons}, the Lagrangian is
\begin{align}
    \mathcal{L}(\lambda_1,\bm{\Lambda}_2,\mu,\nu,\mathbf{Z}) 
    % \notag\\
    % & 
    &= f(\mathbf{X}^{-1}) + \mu (\lambda_1 + \textrm{trace}(\bm{\Lambda}_2) - 1) \notag\\
    & \qquad - \nu \lambda_1 - \textrm{trace}(\mathbf{Z}\bm{\Lambda}_2).
\end{align}
Let $\mathbf{G} = - \nabla_{\mathbf{X}} f(\mathbf{X}^{-1})  = \mathbf{X}^{-1}\left(\nabla_{\mathbf{X}^{-1}}f(\mathbf{X}^{-1})\right)\mathbf{X}^{-1}$, 
and 
denote the differential operator %with 
by 
$d$. First note that $d\mathbf{X}^{-1} = -\mathbf{X}^{-1}(d \mathbf{X})\mathbf{X}^{-1}$,
% \begin{align}
%     d\mathbf{X}^{-1} &= -\mathbf{X}^{-1}(d \mathbf{X})\mathbf{X}^{-1},
% \end{align}
and that $d \mathbf{X} = 
    \alpha
    \bm{\Sigma}_{\rx} 
    \left( d \lambda_1\right) 
    + 
    \alpha
    \bm{\Sigma}_{\tx}^{\frac{1}{2}}
    \left(d\bm{\Lambda}_{2}\right)
    \bm{\Sigma}_{\tx}^{\frac{1}{2}}.$
% \begin{align}
%     d \mathbf{X} = 
%     \bm{\Sigma}_{\rx} 
%     \left( d \lambda_1\right) 
%     + 
%     \bm{\Sigma}_{\tx}^{\frac{1}{2}}
%     \left(d\bm{\Lambda}_{2}\right)
%     \bm{\Sigma}_{\tx}^{\frac{1}{2}}.
% \end{align}
Let $\mathbf{Y}= \mathbf{X}^{-1}$, such that $f(\mathbf{Y})$. Then applying the chain rule, and substituting the expressions above, we can write
\begin{align}
    & d f 
    % & 
    = \textrm{trace}\left(
    \left(
    % \partial f(\mathbf{Y}) / \partial \mathbf{Y} 
    % \frac{\partial f(\mathbf{Y})}{\partial \mathbf{Y}}
    \nabla_{\mathbf{Y}} f(\mathbf{Y})
    \right)^{\T}
    (d\mathbf{Y})
    \right) 
    \notag\\
    &
    = 
    \textrm{trace}\left(
    -\mathbf{X}^{-1}
    \nabla_{\mathbf{X}^{-1}}f(\mathbf{X}^{-1}) \mathbf{X}^{-1}
    (d \mathbf{X})
    \right) 
    % \notag\\
    % & 
    = 
    \textrm{trace}\left(
    -\mathbf{G}^{\T}
    (d \mathbf{X})
    \right) \notag\\
    & =  
    -
    \alpha
    \textrm{trace}\left(
    \mathbf{G}^{\T}
    \bm{\Sigma}_{\rx}
    \right)  
    \left( d \lambda_1\right)
    +
    \textrm{trace}\left(
    -
    \alpha
    \bm{\Sigma}_{\tx}^{\frac{1}{2}}
    \mathbf{G}^{\T}
    \bm{\Sigma}_{\tx}^{\frac{1}{2}}
    \left(d\bm{\Lambda}_{2}\right)
    \right). \notag
\end{align}
Hence the partial derivatives of $f(\mathbf{X}^{-1})$ %gradients 
are given by 
\begin{align}
    \frac{\partial f(\mathbf{X}^{-1})}{\partial \lambda_1}
    = -\alpha\textrm{trace}(\mathbf{G}\bm{\Sigma}_{\rx}), 
    \
    \frac{\partial f(\mathbf{X}^{-1})}{\partial \bm{\Lambda}_2} 
    = -\alpha\bm{\Sigma}_{\tx}^{\frac{1}{2}}%^{\T}
    \mathbf{G}\bm{\Sigma}_{\tx}^{\frac{1}{2}}. \notag
\end{align}

% \textcolor{red}{[Robin: Does this equality hold for any $f$? Needs reference or proof.]}

% %\textcolor{red}{[Robin: Again, above is nonobvious. Reference or elaboration needed.]}

% \textcolor{red}{[Ids: See in blue above. This derivation holds for any continuously differentiable $f$.]}
%
The KKT conditions are
\begin{align}
    % primal feasibility:
    & \lambda\textcolor{\edited}{^\star}_{1} + \textrm{trace}(\bm{\Lambda}\textcolor{\edited}{^\star}_{2})= 1,\quad \lambda_{1}\textcolor{\edited}{^\star}\geq 0,\quad \bm{\Lambda}\textcolor{\edited}{^\star}_{2}\succeq 0, \\
    % dual feasibility
    & \textcolor{\edited}{\nu^\star} \geq 0, \quad \mathbf{Z}\textcolor{\edited}{^\star}\succeq \mathbf 0, \label{eq:dualFeasibility} \\
    % complementary slackness:
    & \nu\textcolor{\edited}{^\star}\lambda\textcolor{\edited}{^\star}_1 = 0, \quad \textrm{trace}(\mathbf{Z}\textcolor{\edited}{^\star}\bm{\Lambda}\textcolor{\edited}{^\star}_2) = 0, \label{eq:complementarySlackness}\\
    % stationarity/vanishing gradients
    &-\alpha\textrm{trace}(\mathbf{G}\bm{\Sigma}_{\rx}) +\mu\textcolor{\edited}{^\star} - \nu\textcolor{\edited}{^\star} = 0, 
    \label{eq:stationarity1}\\
    & 
    -\alpha\bm{\Sigma}_{\tx}^{\frac{1}{2}} \mathbf{G}\bm{\Sigma}_{\tx}^{\frac{1}{2}} + \mu\textcolor{\edited}{^\star}\mathbf{I} - {\mathbf{Z}\textcolor{\edited}{^\star}}^{\T} = \mathbf{0}.\label{eq:stationarity2}
\end{align}
Finally, \textcolor{\edited}{we can show that \eqref{eq:condition} is a necessary and sufficient condition for $(\lambda_1^\star,\bm{\Lambda}_2)= (1,\mathbf{0})$ to be optimal.}

\textcolor{\edited}{\textit{(Sufficiency)} Observe that if \eqref{eq:condition} holds, then the tuple 
\begin{align}
    & \left(\lambda\textcolor{\edited}{^\star}_1,\bm{\Lambda}\textcolor{\edited}{^\star}_2,\mu\textcolor{\edited}{^\star},\nu\textcolor{\edited}{^\star},\mathbf{Z}\textcolor{\edited}{^\star}\right) = \notag \\
    & \left(
        1,
        \mathbf{0},
        \alpha \textrm{trace}\left(\mathbf{G}\bm{\Sigma}_{\rx}\right), 
        0,  
        \alpha\textrm{trace}\left(\mathbf{G}\bm{\Sigma}_{\rx}\right)\mathbf{I} - \alpha\bm{\Sigma}_{\tx}^{\frac{1}{2}}%^{\T}
        \mathbf{G}\bm{\Sigma}_{\tx}^{\frac{1}{2}}
    \right)
\end{align} 
satisfies all \ac{kkt} conditions. Satisfying the \ac{kkt} conditions is a sufficient condition for optimality of $(\lambda^\star_1,\bm{\Lambda}^\star_2)= (1,\mathbf{0})$.}

\textcolor{\edited}{\textit{(Necessity)} Now suppose that $(\lambda^\star_1,\bm{\Lambda}^\star_2)= (1,\mathbf{0})$ are optimal. Since the problem in \eqref{eq:opt_obj} and \eqref{eq:opt_cons} is strictly feasible (Slater's condition holds), the \ac{kkt} conditions are a necessary condition for optimality of $(\lambda^\star_1,\bm{\Lambda}^\star_2)$. Then, since the \ac{kkt} conditions hold, $\mathbf{Z}\succeq \mathbf{0}$, and hence \eqref{eq:condition} must hold.}
We conclude that, \eqref{eq:condition} is necessary and sufficient for $(\lambda_1,\bm{\Lambda}_2) = (1,\mathbf{0})$ to be a solution to \eqref{eq:opt_obj}-\eqref{eq:opt_cons} and hence also to \eqref{eq:objectiveFunctionCRBM}. Substituting $(\lambda_1,\bm{\Lambda}_2) = (1,\mathbf{0})$ into \eqref{eq:optWavPlanar} then yields \eqref{eq:optTxWaveform_sumPlanar}. \hfill$\square$
%\end{proof}
% \input{Appendices/AppendixG}
% \input{Appendices/AppendixF}
% \input{Appendices/AppendixSensorDistrCoprime}

\bibliographystyle{IEEEtran}
% \bibliography{references}
\bibliography{referencesAbreviated}

% \section{Biography Section}
% If you have an EPS/PDF photo (graphicx package needed), extra braces are
%  needed around the contents of the optional argument to biography to prevent
%  the LaTeX parser from getting confused when it sees the complicated
%  $\backslash${\tt{includegraphics}} command within an optional argument. (You can create
%  your own custom macro containing the $\backslash${\tt{includegraphics}} command to make things
%  simpler here.)
 
% \vspace{11pt}

% \bf{If you include a photo:}\vspace{-33pt}
% \begin{IEEEbiography}[{\includegraphics[width=1in,height=1.25in,clip,keepaspectratio]{fig1}}]{Michael Shell}
% Use $\backslash${\tt{begin\{IEEEbiography\}}} and then for the 1st argument use $\backslash${\tt{includegraphics}} to declare and link the author photo.
% Use the author name as the 3rd argument followed by the biography text.
% \end{IEEEbiography}

%\newpage
%\appendices
% IDS VAN DER WERF
\begin{IEEEbiography}[{\includegraphics[width=1in,height=1.25in,clip,keepaspectratio]{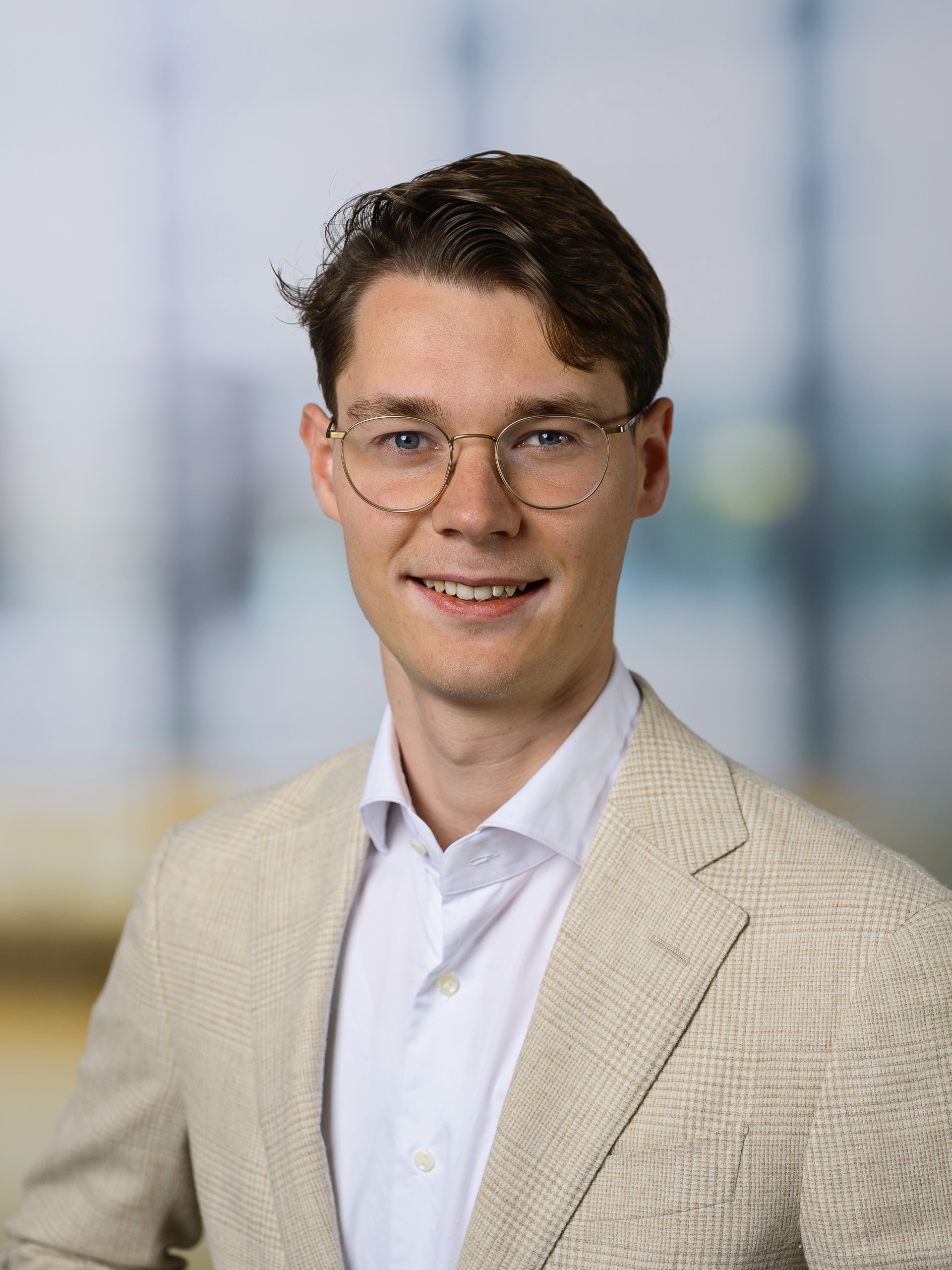}}]{Ids van der Werf} (Graduate Student Member, IEEE) received the B.Sc. and MSc degrees in electrical engineering from the Delft University of Technology, The Netherlands, in 2019 and 2022, respectively. He is currently pursuing the Ph.D. degree in the Signal Processing Systems group, Delft University of Technology, The Netherlands.
\end{IEEEbiography}

\begin{IEEEbiography}
[{\includegraphics[width=1in,height=1.25in,clip,keepaspectratio]{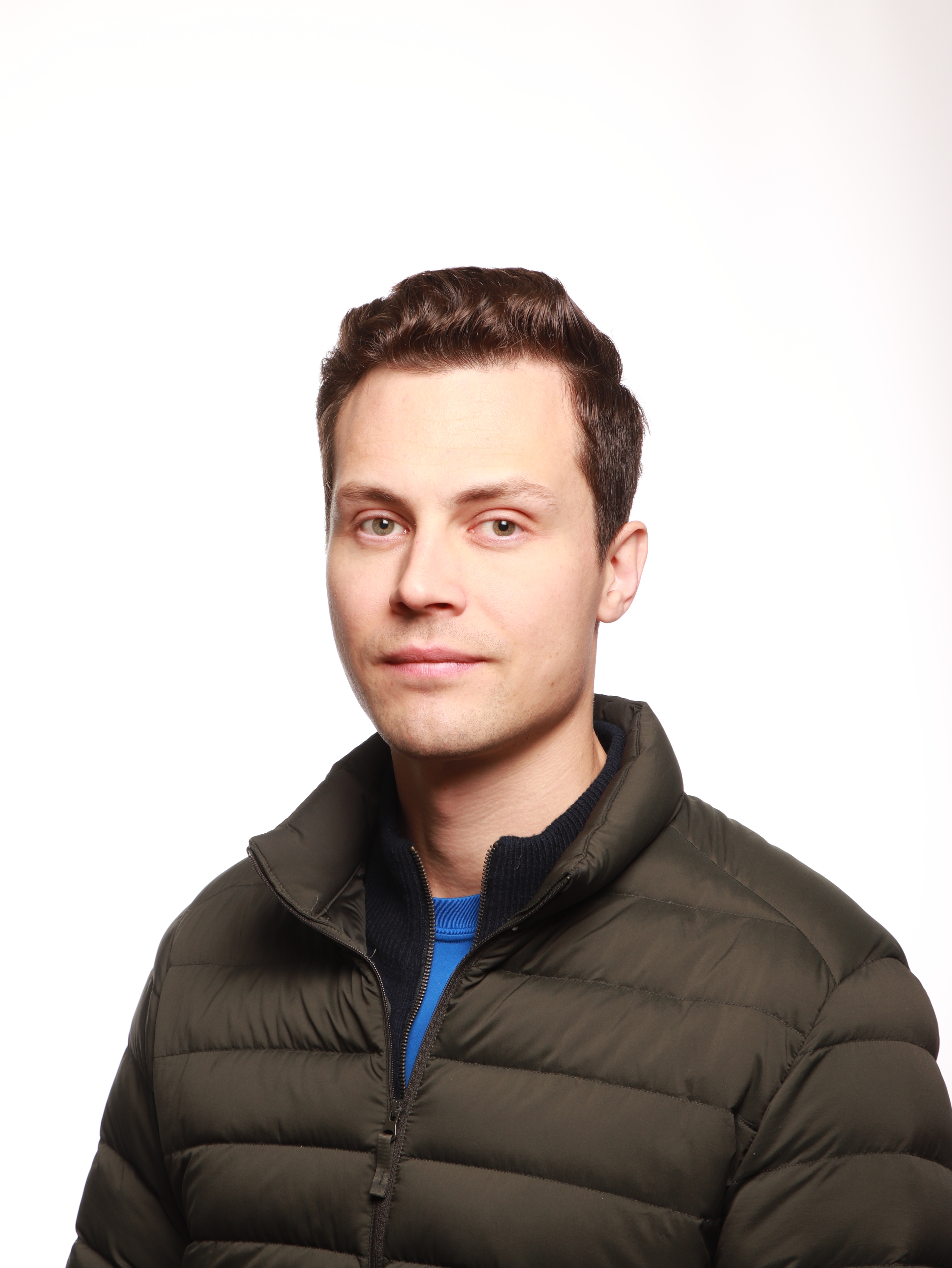}}]{Robin Rajamäki}
(Member, IEEE) received his D.Sc. degree in electrical engineering in 2021 from Aalto University, Finland. He was a postdoctoral scholar at the University of California San Diego in 2022-2024, and Aalto University in 2021-2022 / 2024-2025. He is currently an assistant professor at Tampere University, Finland. His research interests lie in the intersection of theory and applications of statistical signal processing, with a focus on multisensor systems in sensing and communications.
\end{IEEEbiography}

% GEERT LEUS
\begin{IEEEbiography}[{\includegraphics[width=1in,height=1.25in,clip,keepaspectratio]{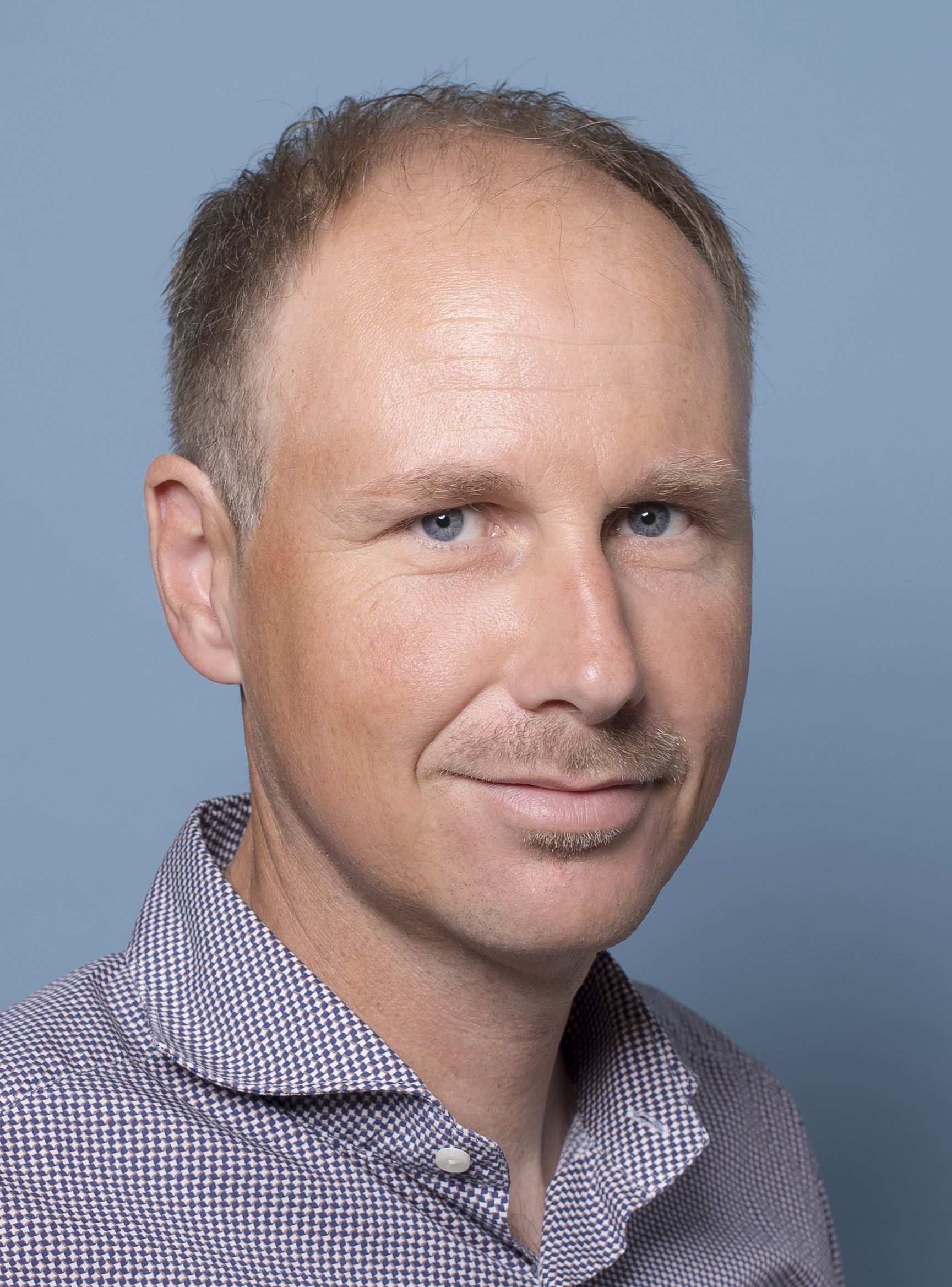}}]{Geert Leus} (Fellow, IEEE) is a Full Professor at the Faculty of EEMCS at the Delft University of Technology, The Netherlands. Geert Leus received the 2021 EURASIP Individual Technical Achievement Award, a 2024 and 2005 IEEE SPS Best Paper Award, and a 2002 IEEE SPS Young Author Best Paper Award. He is a Fellow of the IEEE and EURASIP. Geert Leus was a Member-at-Large of the BoG of the IEEE SPS, the Chair of the IEEE SPCOM TC, the Chair of the EURASIP SPMuS TAC, the EiC of EURASIP JASP and the EiC of EURASIP Signal Processing.
\end{IEEEbiography}
% \vspace{11pt}

% \bf{If you will not include a photo:}\vspace{-33pt}
% \begin{IEEEbiographynophoto}{John Doe}
% Use $\backslash${\tt{begin\{IEEEbiographynophoto\}}} and the author name as the argument followed by the biography text.
% \end{IEEEbiographynophoto}

\clearpage

\end{document}